\definecolor{dullmagenta}{rgb}{0.4,0,0.4}   
\definecolor{darkblue}{rgb}{0,0,0.4}
\newcommand{\nn}{\nonumber}
\newcommand{\ee}{{\rm e}}
\newcommand{\dd}{{\rm d}}
\newcommand{\NN}{{\mathbb N}}
\newcommand{\CC}{{\mathbb C}}
\newcommand{\RR}{{\mathbb R}}
\newcommand{\ZZ}{{\mathbb Z}}
\newcommand{\DD}{{\mathbb D}}
\newcommand{\cc}[1]{\overline{#1}}
\def\clos{\mathop{\rm clos}}
\def\erfc{\mathop{\rm erfc}}
\def\logcap{\mathop{\rm cap}}
\newtheorem{thm}{Theorem}[section] 
\newtheorem{prop}[thm]{Proposition}
\newtheorem{lemma}[thm]{Lemma}
\newtheorem{cor}[thm]{Corollary}
\newtheorem*{thm-others}{Theorem}
\theoremstyle{remark}
\newtheorem*{remark}{Remark}
\title{Fine asymptotic behavior in eigenvalues 
\\of random normal matrices: Ellipse Case}
\author{Seung-Yeop Lee\footnotemark[1] ~and Roman Riser\footnotemark[2]}
\date{}
\begin{document}

\maketitle
\renewcommand{\thefootnote}{\fnsymbol{footnote}}
\footnotetext[1]{Department of Mathematics, University of South Florida, 4202 East Fowler Avenue, Tampa, FL 33620, USA. $\qquad $ \mbox{E-mail: lees3\symbol{'100}usf.edu}}
\footnotetext[2]{Department of Mathematics, KU Leuven, Celestijnenlaan 200B, 3001 Leuven, Belgium. $\qquad\qquad\qquad\qquad\qquad\qquad $ \mbox{E-mail: roman.riser\symbol{'100}wis.kuleuven.be}}

\begin{abstract} 
We consider the random normal matrices with quadratic external potentials where the associated orthogonal polynomials are Hermite polynomials and the limiting support (called droplet) of the eigenvalues is an ellipse.  
We calculate the density of the eigenvalues near the boundary of the droplet up to the second subleading corrections and express the subleading corrections in terms of the curvature of the droplet boundary.
From this result we additionally get the expected number of eigenvalues outside the droplet. We also obtain the asymptotics of the kernel and found that, in the bulk, the correction term is exponentially small. This leads to the vanishing of certain Cauchy transform of the orthogonal polynomial in the bulk of the droplet up to an exponentially small error. 
\end{abstract}

\tableofcontents

\section{Introduction and results}

Consider the set of $n$ point particles, $\{z_1,\cdots,z_n\}$, in the complex plane $\CC$ that interact via two-dimensional (i.e. logarithmic) Coulomb repulsion and are subject to the quadratic confining potential 
\begin{equation}\label{eq:V}
V(z) = |z|^2 - t\, {\rm Re}(z^2), \quad 0 \leq t<1.
\end{equation}
The total (electrostatic) energy of the system is given by
$$E_n\big(\{z_j\}_{j=1}^n\big)=\sum_{k=1}^n V(z_k) + \frac{2}{N}\sum_{j< k}\log\frac{1}{|z_j-z_k|}.$$
We consider the canonical ensemble of the particle system, i.e., we assign the Gibbs measure, 
\begin{equation}\label{Eq:gibbs}
\begin{split}
{\rm Prob}_n\big(\{z_j\}_{j=1}^n\big)\prod_{j=1}^n \dd A(z_j)  &= \frac{1}{{\cal Z}_n} \exp\big(-N E_n\big(\{z_j\}_{j=1}^n\big)\big)\prod_{j=1}^n \dd A(z_j)
\\ &= \frac{1}{{\cal Z}_n} \bigg(\prod_{l<k}|z_l-z_k|^2\bigg)\ee^{-N \sum_{j=1}^nV(z_j)}\prod_{j=1}^n \dd A(z_j),
\end{split}
\end{equation}
on the configuration space of the $n$ particles. 
Above, $N$ is a positive constant, $\dd A(z)$ is the two-dimensional area measure, and ${\cal Z}_n$ is the normalization constant.  

The corresponding random process is a special case of {\em Coulomb gas} or {\em $\beta$-ensemble} (see, for instance, \cite{forrester2010log}).
It also describes the eigenvalues of a random normal matrix ensemble \cite{felder-elbau,Hedenmalm01042013,Teodorescu2005407}.
The case of $t=0$ is called {\em Ginibre ensemble} and has been studied in \cite{ginibre,forrester_honner,rider_virag}. The case of non-zero $t$ has been studied, for instance in \cite{bender}. 

The {\em (averaged) density}, $\rho_n$, of the particles, is given by
\begin{equation}\label{eq:first-density}
\rho_n(z):= {\mathbb E}\bigg(\frac{1}{N}\sum_{j=1}^n\delta(z-z_j)\bigg),\end{equation}  
where the delta function is with respect to the area measure, and the average is taken with respect to the Gibbs measure \eqref{Eq:gibbs}. Note that we normalize with $1/N$ such that the total mass of $\rho_n$ is $n/N$ instead of one. In fact, throughout this paper, we consider the scaling limit where the total mass of $\rho_n$ is fixed, i.e., for a fixed $T>0$, $N$ grows with $n$ such that
\begin{equation*}
  T=\frac{n}{N}.
\end{equation*}
\begin{figure}[t!]
\begin{center}
\includegraphics[width=0.45\textwidth]{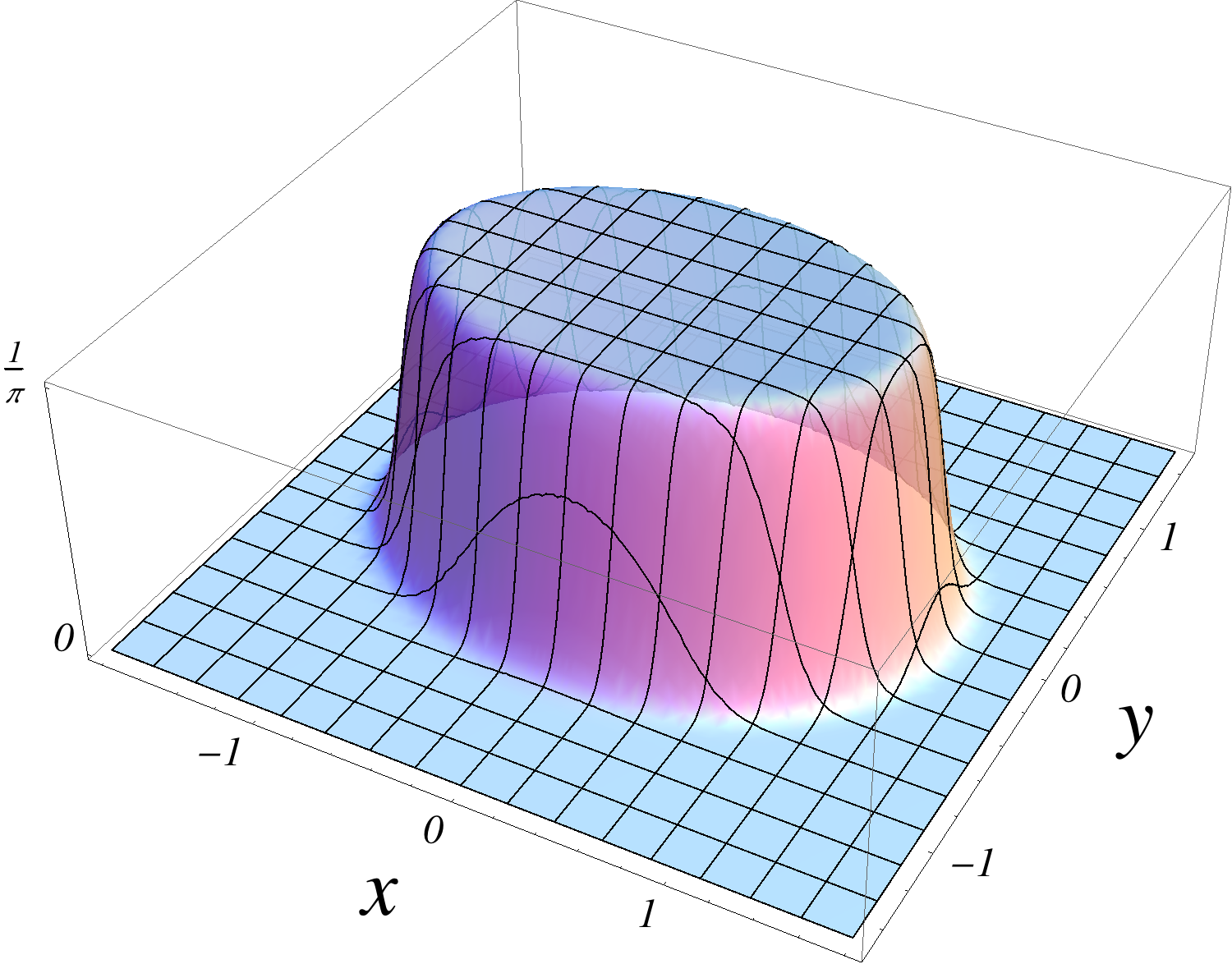}
\caption{$\rho_n$ for $t=0.25$ and $n=N=50$}\label{fig-rho}
\end{center}
\end{figure}
In Figure \ref{fig-rho}, one observes that the density is supported approximately on an elliptical shape.  In fact, as $n$ grows to infinity, the density $\rho_n$ converges  to ($1/\pi$ times) the characteristic function over 
\begin{equation}\label{eq:K} K= \bigg\{z~\bigg|~ \frac{1-t}{1+t}({\rm Re}\, z)^2 + \frac{1+t}{1-t}({\rm Im}\, z)^2\leq T\bigg\}\subset\CC. \end{equation}

\vspace{0.2cm}
\begin{remark}
The weak convergence of the density is known \cite{ameur2011,felder-elbau} for more general class of potentials where the support, $K$, of the limiting density can be an arbitrary compact set in $\CC$ depending on the potential.   In \cite{ameur2011}, $K$ is called a {\em droplet}; we will use this terminology throughout the paper.   
\end{remark}
Figure \ref{fig-rho} shows that the density changes drastically near the boundary of $K$.  Zooming at the boundary with a scaled real coordinate $\xi$ normal to the boundary, certain universal shape arises in the density profile, see Figure \ref{fig-rho2}.
The dashed line in the figure is given by \cite{forrester_honner}  
$$\lim_{n\to\infty} \rho_n\Big(\sqrt{T\frac{1+t}{1-t}}+\frac{\xi}{N^{1/2}}\Big)=\frac{1}{2\pi}{\rm erfc}(\sqrt2\, \xi),$$ 
where $$\erfc(z)=\frac{2}{\sqrt{\pi}}\int_z^\infty \ee^{-w^2} \dd w. $$
In \cite{AmeurKangMakarov} the above limit is proven for other classes of potentials.  In the next theorem, we show the more detailed behavior of the density for the potential given by \eqref{eq:V}, including the subleading corrections in $1/N$-expansions.

\begin{figure}[t]\begin{center}
\includegraphics[width=0.4\textwidth]{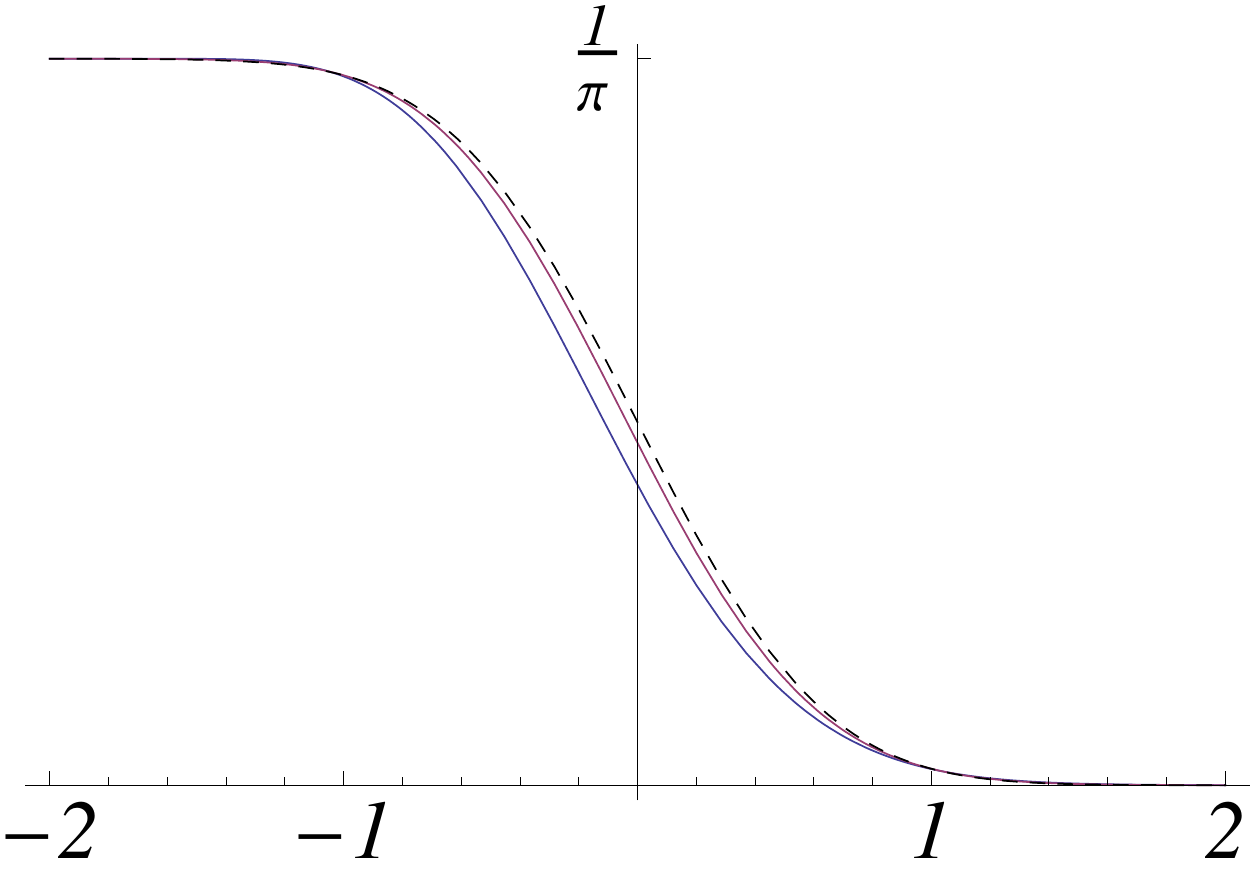}
\caption{ $\rho_n\Big(\sqrt{\frac{1+t}{1-t}}+\frac{\xi}{N^{1/2}}\Big)$ v.s. $\xi$ for  $t=0.25$, and $n=N=10,~100$ and $\infty$ (dashed line)}\label{fig-rho2} 
\end{center}
\end{figure}

\vspace{0.2cm}

\begin{thm}\label{thm1}
Given $T>0$, let the compact set $K\subset\CC$ be given by \eqref{eq:K}.
Let $z_0$ be a point on the boundary of $K$.  Let $\kappa=\kappa(z_0)$ be the (positive) curvature of $\partial K$ at the point $z_0$, and $\partial_s\kappa$ (and $\partial^2_s\kappa$) be the derivative (and second derivative respectively) of the curvature with respect to the counter-clockwise arclength parameter $s$ of the curve $\partial K$.   Let ${\bf n}={\bf n}(z_0)$ be the outward unit normal vector from $K$ at $z_0$ and $\xi$ a real coordinate.  

Let $0<\nu <1/6$. As the positive integer $n$ and the real number $N$ go to infinity while keeping $n/N=T$ fixed, we have the asymptotic expansion
\begin{equation}\label{eq:thm1}\begin{split}
\rho_{n}\bigg(z_0+\frac{\xi}{N^{1/2}}{\bf n}\bigg) = &\frac{1}{2\pi}\erfc(\sqrt{2} \xi) + \frac{\kappa}{\sqrt N}\frac{1  }{  3\sqrt{2\pi^3} } (\xi^2-1)\,\ee^{-2 \xi^2}   \\
& + \frac{1}{N}\frac{\ee^{-2 \xi^2}}{\sqrt{2\pi^3}}\left(\kappa^2 \frac{ 2\xi^5 - 8 \xi^3 + 3\xi}{18}+\left(\frac{ (\partial_s \kappa)^2 }{9 \kappa^2} - \frac{ \partial_s^2 \kappa }{12 \kappa} \right) \xi\right)   + \mathcal{O}\left(N^{-3/2+9\nu}\right),
\end{split}
\end{equation}
where the error bound is uniform over $|\xi|<N^{\nu}$ and over $z_0\in\partial K$.   
\end{thm}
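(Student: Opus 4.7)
The plan is to write the averaged density as the diagonal of the determinantal correlation kernel,
$$\rho_n(z) = e^{-NV(z)}\sum_{k=0}^{n-1}\frac{|p_k(z)|^2}{h_k},$$
where $p_k$ is the degree-$k$ monic orthogonal polynomial for the weight $e^{-NV}\dd A$ and $h_k$ its squared norm. Because $V$ is quadratic, a short direct computation (essentially the one used in \cite{bender}) identifies $p_k$ with a rescaled Hermite polynomial of the form $p_k(z)=c_N^k H_k(\alpha_N z)$ with explicit $c_N,\alpha_N$, and $h_k$ is available in closed form, so the problem reduces to the asymptotic analysis of a weighted partial sum of squared Hermite polynomials.

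First I would introduce a double contour representation using the classical formula $H_k(w)=\frac{k!}{2\pi\ii}\oint e^{2wu-u^2}u^{-k-1}\dd u$. Writing $|p_k(z)|^2=p_k(z)p_k(\cc{z})$ and multiplying the two contour integrals, the $k$-dependence of the summand becomes geometric in the integration variables, so the sum over $0\le k<n$ can be carried out exactly, yielding
$$\rho_n(z)=\frac{1}{(2\pi\ii)^2}\iint F_N(u_1,u_2;z)\,\frac{1-G(u_1,u_2)^n}{1-G(u_1,u_2)}\dd u_1\,\dd u_2,$$
with $F_N$ and $G$ elementary. The integrand splits naturally into a bulk piece (from the lower endpoint $k=0$) and a soft-edge piece (from the upper endpoint $k=n$); it is the second piece that controls the edge behavior described by \eqref{eq:thm1}.

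Next I would perform a steepest-descent analysis after the scaling $z=z_0+N^{-1/2}\xi\,\mathbf{n}$. Near a boundary point $z_0\in\partial K$, the saddle points of the phase $n\log G$ coalesce at a point $u_\ast$ parametrizing $z_0$ on the ellipse, and a local expansion of the phase through sixth order around $u_\ast$ produces Gaussian-times-polynomial integrals that match the universal leading term $\tfrac{1}{2\pi}\erfc(\sqrt{2}\xi)$ and the two subleading corrections. The admissible range $|\xi|<N^\nu$ with $\nu<1/6$ arises from requiring that the neglected terms of the phase (of order at least seven) produce corrections dominated by $N^{-3/2}\xi^{k}$ with $k$ bounded by roughly $9$; tracking this polynomial-in-$\xi$ growth uniformly yields the remainder $\mathcal{O}(N^{-3/2+9\nu})$ stated in the theorem.

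Finally I would translate the resulting coefficients, which come out naturally in terms of the ellipse parameters $(t,T)$ and the parameter conjugate to $z_0$, into intrinsic geometric invariants. Using the explicit arclength parametrization of $\partial K$, one writes $\kappa$, $\partial_s\kappa$, $\partial_s^2\kappa$ as rational functions of the same data, and direct algebra produces the combination $\kappa^2(2\xi^5-8\xi^3+3\xi)/18+\bigl((\partial_s\kappa)^2/(9\kappa^2)-\partial_s^2\kappa/(12\kappa)\bigr)\xi$. The hardest step is the $1/N$ coefficient: three distinct sources must be combined, namely the sub-subleading saddle-point moments, the curvature-induced deformation of the steepest-descent contour and its Jacobian, and the Euler--Maclaurin-type correction coming from the discrete truncation at $k=n$. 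The universality encoded in the specific coefficients $1/9$ and $1/12$ only becomes apparent after all three contributions are assembled with the correct signs, and it is that bookkeeping that I expect to be the main obstacle.
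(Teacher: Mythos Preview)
Your approach is a legitimate alternative, but it is genuinely different from the paper's route, and it is worth understanding why the paper's method is considerably lighter.

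The paper does \emph{not} analyze the full sum $\sum_{k=0}^{n-1}|p_k|^2 e^{-NV}$ directly. Instead it exploits a Christoffel--Darboux type identity (Corollary~\ref{cor_CD}): because the $p_k$ are rescaled Hermite polynomials, the three-term recurrence collapses the $z$-derivative of the whole sum to a single bilinear expression in $p_n$ and $p_{n-1}$,
\[
\partial_z\rho_n(z)=\frac{\sqrt{T}}{\sqrt{1-t^2}}\bigl(t\,p_n(z)\overline{p_{n-1}(z)}-p_{n-1}(z)\overline{p_n(z)}\bigr)e^{-NV(z)}.
\]
One then plugs in the Plancherel--Rotach asymptotics of $p_n,p_{n-1}$ (Lemma~\ref{prop-WKB}), expands the exponent $\Omega(z)=V(z)-2T\,{\rm Re}\,g(z)+\ell$ about $z_0\in\partial K$ in normal coordinates using the Schwarz function (Lemma~\ref{prop_omega}), and finally \emph{integrates} $\partial_{\bf n}\rho_n$ along the normal ray from a point deep in the bulk (where Theorem~\ref{thm:1b} gives $\rho_n=1/\pi$ up to exponentially small error) out to $z_0+\xi N^{-1/2}{\bf n}$. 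The $\erfc$ appears simply as the antiderivative of a Gaussian, and the curvature corrections arise from the cubic and quartic Taylor terms of $\Omega$ together with two explicit identities (Lemmas~\ref{lemma_hn_relations} and \ref{lemma_Upsilon}) relating the subleading WKB data $h_0,h_{-1}$ and the conformal map $\psi$ to $\kappa,\partial_s\kappa,\partial_s^2\kappa$.

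Compared with your double-contour/steepest-descent program, the paper trades the delicate analysis of coalescing saddles and the Euler--Maclaurin endpoint correction for a single one-dimensional integration of an already-asymptotically-known quantity. The three ``sources'' you anticipate for the $1/N$ term (higher saddle moments, Jacobian of the contour deformation, discrete truncation) are replaced in the paper by a single mechanism: the Taylor expansion of $e^{-N\Omega}$ to order $|z-z_0|^4$ and the $1/N$ term $h_r$ in the polynomial asymptotics. This is why the bookkeeping you flag as the main obstacle is essentially absent in the paper. Your method should work, and has the conceptual advantage of not requiring a separate bulk estimate to anchor the integration, but the Christoffel--Darboux reduction is the structural shortcut you are missing.
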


In Appendix \ref{app:plots}, we show a few numerical plots regarding the subleading corrections of $\rho_n$.
For explicit expressions of $\kappa$, $\partial_s\kappa$, and $\partial^2_s\kappa$, see \eqref{eq:kappa0}, \eqref{eq:kappa1}, and \eqref{eq:kappa2}.  

\vspace{0.2cm}
\begin{remark}
The reader may wonder why we choose to write the expansion \eqref{eq:thm1} in terms of the curvature.   There are two reasons: ``simplicity'' and ``possible universality''. 
In fact, we conjecture that the term of the order $1/\sqrt N$ in \eqref{eq:thm1} is universally true for a general droplet with a real analytic boundary.  The term of order $1/N$ does not seem to hold in general and, therefore, is chosen purely for the simplicity of the expression.
\end{remark}
\vspace{0.3cm}

\begin{thm}\label{thm:1b}
Let $0< \tau\le 1/2$. Given $T>0$ there exists constants $C_0>0$ and $N_0>0$ such that for all $N>N_0$ and  $n=NT$, we have 
\begin{align*}
 \left|\rho_n(z)-\frac{1}{\pi}\right|< C_0 N^{5/6} \ee^{-N^{2\tau}} & \qquad \text{when } z\in {\rm Int}(K) \setminus U_{\tau,N},
 \\  \left|\rho_n(z)\right|< C_0 N^{5/6} \ee^{-N^{2\tau}} & \qquad \text{when } z\in {\rm Ext}(K) \setminus U_{\tau,N}.
\end{align*}
We denote the neighborhood of $\partial K$ of width $2N^{-1/2+\tau}$ by
\begin{align*}
U_{\tau,N}&=\{z_0+ X {\bf n}(z_0)\ |\ z_0\in\partial K, |X| <N^{-1/2+\tau} \} = \{z\,|\, {\rm dist}(z,\partial K) < N^{-1/2+\tau}\}. 
\end{align*}
\end{thm}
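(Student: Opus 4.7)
The natural starting point is the orthogonal polynomial (Christoffel–Darboux) representation
$$\rho_n(z)=\frac{1}{N}\sum_{k=0}^{n-1}\frac{|p_k(z)|^2}{h_k}\,e^{-NV(z)},$$
where $\{p_k\}$ are the monic orthogonal polynomials for $e^{-NV}\,dA$ on $\CC$ (rescaled Hermite polynomials in the ellipse case) and $h_k$ are their squared $L^2$-norms. The plan is to control this sum by exploiting the fact that, for each fixed $z$, the summand is essentially a Gaussian in $k$ with width of order $N^{1/2}$ centered at a unique peak $k^*(z)=N\lambda^*(z)$, and that this peak crosses $n$ precisely when $z$ crosses $\partial K$.

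Concretely, I would isolate from the Hermite asymptotics (developed in the preceding sections) a uniform pointwise bound of the form
$$\frac{|p_k(z)|^2}{h_k}\,e^{-NV(z)} \le C\,N^{5/6}\exp\!\bigl(-c\,N\,(k/N-\lambda^*(z))^2\bigr),$$
valid uniformly for $k\in\{0,\dots,n-1\}$ and $z$ in a fixed neighborhood of $K$. Here $\lambda^*(z)$ is the saddle selected by the steepest descent analysis; it is a smooth function of $z$, equals $T$ precisely on $\partial K$, satisfies $\lambda^*(z)<T$ on ${\rm Int}(K)$ and $\lambda^*(z)>T$ on ${\rm Ext}(K)$, and, by the implicit function theorem applied at $\partial K$, obeys $|\lambda^*(z)-T|\asymp {\rm dist}(z,\partial K)$. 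Consequently, $z\notin U_{\tau,N}$ implies $|N\lambda^*(z)-(n-1)| \ge c'N^{1/2+\tau}$.

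For $z\in{\rm Int}(K)\setminus U_{\tau,N}$ the peak $k^*(z)$ lies deep inside $[0,n-1]$, so splitting
$$\sum_{k=0}^{n-1}=\sum_{k\in\ZZ}\;-\sum_{k<0}\;-\sum_{k\ge n}$$
the two tails are bounded, term by term, by $C N^{5/6}e^{-c N^{2\tau}}$ and together contribute at most $C'N^{11/6}e^{-cN^{2\tau}}$; the full integer sum is approximated by the Gaussian integral (via Euler–Maclaurin), which by the same steepest-descent analysis evaluates exactly to $N/\pi$. Dividing by $N$ yields the first claimed bound. For $z\in{\rm Ext}(K)\setminus U_{\tau,N}$ the peak $k^*(z)$ lies to the right of $n-1$ at distance $\ge c'N^{1/2+\tau}$, so \emph{every} summand is already $\le C N^{5/6}e^{-cN^{2\tau}}$; summing $n=NT$ such terms and dividing by $N$ gives the second estimate (adjusting $c$ slightly to absorb polynomial prefactors into $C_0$).

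The main obstacle is producing the uniform Gaussian bound quoted above with a single exponential rate $c$ valid simultaneously over (i) all $k\in\{0,\dots,n-1\}$, including $k$ close to the endpoint $n$ and to $0$, and (ii) all $z$ in a full fixed neighborhood of $\clos(K)$, including points at macroscopic distance from $\partial K$. Away from the edge the Gaussian peak structure is manifest from a standard steepest-descent analysis of the integral representation of $p_k$, but the transition regime $\lambda\to T$ is governed by edge (Airy-type) asymptotics of Hermite polynomials and requires a careful matching—this is the source of the polynomial prefactor $N^{5/6}$ and is genuinely delicate. Once this uniform bound is in hand, the remainder of the argument is elementary Gaussian-tail book-keeping.
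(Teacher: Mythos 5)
Your approach is genuinely different from the paper's, and in its current form it has real gaps. The paper never touches the full sum $\sum_{k=0}^{n-1}$ directly: it exploits the Christoffel--Darboux telescope (Proposition~\ref{prop_CD}, Corollary~\ref{cor_CD}), which collapses $\partial_z\rho_n$ to an expression involving only $p_n$ and $p_{n-1}$. A single crude global bound $|p_n(z)|\,\ee^{-N(Tg(z)-\ell/2)}={\cal O}(N^{5/12})$ (Proposition~\ref{prop:crude}, imported from the DKMVZ uniform Riemann--Hilbert asymptotics) then yields $|\partial_z\rho_n|\le C N^{5/6}\ee^{-N\Omega(z)}$. Combined with the quadratic expansion $\Omega(z)=2X^2+{\cal O}(|z-z_0|^3)$ near $\partial K$ (which gives $\ee^{-N\Omega}<\ee^{-N^{2\tau}}$ on $U_\delta\setminus U_{\tau,N}$) and integration in the normal direction from base points where $\rho_n$ is already known (Lemma~\ref{lem:density}, ultimately anchored to an explicit computation of $\rho_n(0)$ in Lemma~\ref{lem:zero}), the theorem follows. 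No uniform-in-$k$ asymptotics for $p_k$ are ever needed; that is the whole point of the CD shortcut.

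Your plan requires exactly the uniform-in-$k$ asymptotics that the paper avoids, and the two hardest steps are left unresolved. \emph{First}, the claimed uniform bound $|p_k(z)|^2\ee^{-NV(z)}/h_k\le CN^{5/6}\exp(-cN(k/N-\lambda^*(z))^2)$, with a single rate $c$, valid simultaneously for all $k\le n$ and all $z$, is not proved; you acknowledge it is ``genuinely delicate,'' but it is in fact the crux of the whole argument, and the summand is not a clean Gaussian in $k$ --- for $k$ past the turning point the decay is geometric ($\sim t^k$) rather than Gaussian, and near $k\approx k^*$ the shape is governed by Airy-type transition asymptotics, so the bound you posit would need a separate proof that is not ``elementary Gaussian-tail book-keeping.'' \emph{Second}, the step where ``the full integer sum $\sum_{k\in\ZZ}$ evaluates exactly to $N/\pi$'' is the real content of the theorem, not a routine Euler--Maclaurin exercise: an upper bound on the summand cannot produce the leading constant $1/\pi$; you would need sharp two-sided asymptotics near the peak, or an independent normalization argument (the paper supplies exactly this via the closed-form Taylor-series evaluation of $\rho_n(0)$ in Lemma~\ref{lem:zero}). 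Moreover $\sum_{k\ge 0}|p_k(z)|^2\ee^{-NV(z)}$ is not uniformly ${N/\pi}$ off the real axis and has to be handled with care on the cut $[-F_0,F_0]\subset K$, where the Hermite asymptotics used throughout Section~\ref{sec:WKB} break down. \emph{Third}, the theorem asserts the bound on all of ${\rm Ext}(K)\setminus U_{\tau,N}$, an unbounded set, whereas your uniform bound is quoted only for ``$z$ in a fixed neighborhood of $K$''; the paper handles large $|z|$ via the quadratic growth of $\Omega$, which has no counterpart in your proposal.
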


Choosing $0<\tau <\nu<1/6$, Theorem \ref{thm1} and Theorem \ref{thm:1b} together give a uniform estimate of $\rho_n$ over the whole complex plane.

It is also interesting to look at the kernel function defined by (see, for instance, \cite{forrester2010log})
\begin{equation}\label{eq:Kernel}
K_n(z,w)= \sum_{j=0}^{n-1}\, p_j(z)\,\overline{p_j(w)}\,\ee^{-NV(z)/2}\ee^{-NV(w)/2},\end{equation}
where the polynomial $p_j$ of degree $j$ is the normalized orthogonal polynomial satisfying
\begin{equation}\label{eq:OP}
	\int_\CC p_j(z)\,\overline{p_k(z)}\, \ee^{-NV(z)}\dd A(z)=\delta_{jk}.
\end{equation}
The following theorem describes the asymptotic behavior of $K_n$.
\vspace{0.2cm}

\begin{thm}\label{thm2}
Given $T>0$, let the compact set $K\subset\CC$ be given by \eqref{eq:K}.
Let $V_K$ be a compact subset of ${\rm Int}\, K$ and $U_K$ a neighborhood of $K$. Then there exist $\epsilon>0$ and $\delta>0$ such that, as $n=NT$ grows to infinity, the following asymptotic expression holds uniformly over all $w$ and $z$ as specified below.
\begin{align*}
\frac{1}{N} K_{n}(z,w) \! = \! \left\{ \begin{array}{ll} \displaystyle \!\! \frac{1}{\pi} \exp\left(-\tfrac{N}{2}|z-w|^2\right)\exp\left(iN\, {\rm Im}\Big(z\cc{w}\! - \! \tfrac{t}{2}z^2 \! - \! \tfrac{t}{2}\overline w^2 \Big) \right)
\! \big(1+\mathcal{O}(\ee^{-N\epsilon})\big), & w\in V_K, z\in B_\delta(w),\vspace{0.2cm}\\
\! \mathcal{O}(\ee^{-N\epsilon}),& w\in V_K, z\notin B_\delta(w),\vspace{0.2cm}\\
\! \mathcal{O}(\ee^{-N\epsilon}),&  w\notin U_K, z\in\CC, 
\end{array}\right.
\end{align*}
The notation $B_\delta(w)$ stands for the disk of radius $\delta$ centered at $w$.  We note that the above holds when $z$ and $w$ are exchanged.
\end{thm}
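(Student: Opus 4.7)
The approach exploits the explicit form of the orthonormal polynomials. For $0<t<1$, a short generating-function calculation (separating $V(z)=(1-t)x^2+(1+t)y^2$ and evaluating the two resulting one-dimensional Gaussian integrals against the bilinear generating series of $H_j\otimes H_j$) shows that \eqref{eq:OP} is solved by
\begin{equation*}
p_j(z)=\sqrt{\frac{N\sqrt{1-t^2}\,t^{j}}{\pi\,2^{j}\,j!}}\,H_j(\alpha z),\qquad \alpha:=\sqrt{\frac{N(1-t^2)}{2t}},
\end{equation*}
with $H_j$ the physicist's Hermite polynomial (the case $t=0$ is classical Ginibre, with $p_j(z)\propto z^j$, and is handled by the same scheme in a degenerate but simpler form). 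Substituting into \eqref{eq:Kernel} rewrites $K_n(z,w)$ as a truncated Mehler sum
$\tfrac{N\sqrt{1-t^2}}{\pi}\ee^{-\frac{N}{2}(V(z)+V(w))}\sum_{j=0}^{n-1}\tfrac{t^j}{2^j j!}H_j(\alpha z)H_j(\alpha\bar w)$.

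Completing the sum to $j\ge 0$ and applying Mehler's bilinear identity with $u=t$, $x=\alpha z$, $y=\alpha\bar w$, an elementary algebraic simplification using $\alpha^2=N(1-t^2)/(2t)$ verifies that the completed-sum version of $K_n(z,w)/N$ equals \emph{exactly} $\tfrac{1}{\pi}\exp\!\bigl(-\tfrac{N}{2}|z-w|^2+iN\,\mathrm{Im}(z\bar w-\tfrac{t}{2}z^2-\tfrac{t}{2}\bar w^2)\bigr)$, which is the leading term claimed in the theorem. The task therefore reduces to estimating the Mehler \emph{tail} $T_n(z,w):=\sum_{j\ge n}\tfrac{t^j}{2^j j!}H_j(\alpha z)H_j(\alpha\bar w)$ together with its Gaussian prefactor.

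For the tail I would insert the Cauchy representation $H_j(x)=\tfrac{j!}{2\pi i}\oint\zeta^{-j-1}\ee^{2x\zeta-\zeta^2}\dd\zeta$ twice, sum the geometric series $\sum_{j\ge n}(t/(2\zeta_1\zeta_2))^j$ in closed form, and rescale $\zeta_i=\sqrt N\,\eta_i$, producing a double contour integral $\oint\oint\ee^{N\Phi(z,w,\eta_1,\eta_2)}\dd\eta_1\dd\eta_2$ over contours around the origin. Standard steepest-descent analysis locates a saddle parametrized by $(z,\bar w)$; when $w\in V_K\subset\mathrm{Int}\,K$ this saddle lies strictly inside the Mehler pole $\{2\eta_1\eta_2=t/N\}$, so the contours can be deformed onto a descent path where $\mathrm{Re}\,\Phi<-2\epsilon$ uniformly, yielding $|T_n|\cdot(\text{prefactor})=\mathcal{O}(\ee^{-N\epsilon})$. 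This gives case~(i), and case~(ii) with $|z-w|>\delta$ follows at once because the Mehler closed form already carries the factor $\ee^{-N|z-w|^2/2}=\mathcal{O}(\ee^{-N\delta^2/2})$ while the tail remains exponentially small by the same argument. For case~(iii), $w\notin U_K$, the Mehler closed form is \emph{not} exponentially small, so I would instead bound the truncated sum directly using Plancherel--Rotach / Saff--Totik-type exponential estimates $|H_j(\alpha z)|\,\ee^{-NV(z)/2}\lesssim\ee^{-Ng_{j/N}(z)+o(N)}$, where $g_s$ is (half) the Robin function of the ellipse droplet of mass $s$; Frostman's inequality $g_s(w)>\epsilon_K$ uniformly for $s\in[0,T]$ and $w\notin U_K$ then bounds each of the $n\lesssim N$ summands by $\ee^{-N\epsilon_K}$.

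The main obstacle will be uniformity of the saddle-point estimates across the three regimes, especially the dependence of the $(\eta_1,\eta_2)$-contour on $(z,\bar w)$ and the correct bookkeeping of residues when the contour crosses the Mehler pole; these residues must precisely reproduce the Mehler closed form, consistent with the ``tail + Mehler = completed sum'' decomposition used above. The constants $\delta,\epsilon$ in the theorem are controlled by $\mathrm{dist}(V_K,\partial K)$ and $\mathrm{dist}(\CC\setminus U_K,\partial K)$, both positive by hypothesis, which keeps the relevant Frostman inequalities and steepest-descent contours strictly nondegenerate.
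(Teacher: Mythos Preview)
Your Mehler-formula observation is correct and elegant, and the route is genuinely different from the paper's. The paper does not complete the sum; instead it works with the pre-kernel ${\cal K}_n$ and the Christoffel--Darboux identity (Proposition~\ref{prop_CD}), which collapses $\partial_z{\cal K}_n(z,w)$ to a combination involving only $p_n$ and $p_{n-1}$. One then integrates $\partial_\xi{\cal K}_n(\xi,w)$ along a path from $w$ to $z$ and bounds the integrand through the function $\Omega=V-2T\,{\rm Re}\,g+\ell$, which is strictly positive off $\partial K$ (Lemma~\ref{prop_Omega_minimum}); the diagonal value ${\cal K}_n(w,w)=N\rho_n(w)$ is anchored by an explicit computation at the origin (Lemma~\ref{lem:zero}) followed by the same integration trick (Lemma~\ref{lem:density}). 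What your approach buys is that the leading term drops out for free and one sees structurally \emph{why} the bulk error is exponentially small rather than merely $O(1/N)$; what the paper's approach buys is that one only ever needs a crude bound on two polynomials (Proposition~\ref{prop:crude}), never handles an infinite tail, and is in principle portable to other $Q$ admitting a Christoffel--Darboux relation.

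Where your proposal is thin is precisely where the work sits: the tail estimate. In case~(ii) you assert the tail remains exponentially small ``by the same argument,'' but that argument must be uniform over all $z\in\CC$, while your saddle configuration depends on $(z,\bar w)$ and the contour bookkeeping versus the Mehler pole becomes delicate as $z$ moves away from $w$. A cheaper reduction you might use: once the \emph{diagonal} tail at $w\in V_K$ is known to be $O(\ee^{-N\epsilon})$ (equivalently $\rho_n(w)=\tfrac{1}{\pi}+O(\ee^{-N\epsilon})$, which is exactly Lemma~\ref{lem:density}), Cauchy--Schwarz against $\sum_{j\ge n}|p_j(z)|^2\ee^{-NV(z)}=\tfrac{N}{\pi}-N\rho_n(z)\le\tfrac{N}{\pi}$ bounds the off-diagonal tail uniformly in $z$, so your steepest-descent effort can be concentrated on the single-variable diagonal case. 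For case~(iii) your Frostman idea is right in spirit, but the bound should be phrased via the mass-$s$ analogue of $\Omega$ rather than a ``Robin function,'' and you need the estimate uniformly across $0\le j\le n-1$, including small $j$ where Plancherel--Rotach does not apply literally (those terms are handled trivially by the explicit Gaussian weight).
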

\vspace{0.3cm}

Let us consider more general potential given by
\begin{equation}\label{eq:generalV}
V(z) = |z|^2- Q(z)-\overline{Q(z)}.
\end{equation}
where $Q:\CC\to\CC$ is an analytic function (in a neighborhood of $K$) such that $\lim_{|z|\to\infty }V(z)/\log|z|=+\infty$.
Then the first asymptotic formula in the above theorem can be written more generally as \eqref{eq:generalKn} below.
For a general potential the leading behavior of the kernel is known in \cite{ameur2011,Ber}.  Our new observation is that, for $Q(z)=t z^2/2$, the error bound in \eqref{eq:generalKn} is {\em exponentially small} in $N$.  We conjecture that this is true for more general $Q$.  In the next theorem, we connect the behavior of the kernel with the behavior of certain Cauchy transform.
\vspace{0.2cm}

\begin{thm}\label{thm:Cauchy}
Let $V$ be given by \eqref{eq:generalV} and let $Q$ be analytic in a neighborhood of $z\in\CC$.  Let $p_n$ and $K_n$ be the corresponding orthonormal polynomial and the kernel as defined in \eqref{eq:OP} and \eqref{eq:Kernel} respectively. 
In the scaling limit where $n=NT$ goes to $\infty$ for a fixed $T>0$, let $K_n$ satisfy the asymptotic behavior:  
\begin{equation}\label{eq:generalKn}
 \frac{1}{N} K_{n}(z,w)= \displaystyle\frac{1}{\pi} 
 e^{-\frac{N}{2}|z-w|^2}e^{iN\, {\rm Im}\big(z\cc{w}-Q(z)- \overline{Q(w)} \big) }
\big(1+\mathcal{O}(\ee^{-N\epsilon})\big),
\end{equation}
uniformly for $w\in B_\delta(z)$ for some $\epsilon>0$ and $\delta>0$.  Then there exists $\epsilon'>0$ such that 
\begin{equation}\label{eq:CauchyTr}
   \int_\CC\frac{|p_n(w)|^2\, \ee^{-NV(w)}}{z-w}  \dd A(w)={\cal O}(\ee^{-N \epsilon'}).
\end{equation} 
\end{thm}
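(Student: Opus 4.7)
The plan is to use the orthogonality of $p_n$ to rewrite the Cauchy transform as a product, then invoke the assumed kernel asymptotic \eqref{eq:generalKn} to bound it. Decomposing $p_n(w)/(z-w) = p_n(z)/(z-w) - (p_n(w)-p_n(z))/(w-z)$, the second summand is a polynomial of degree $n-1$ in $w$ and hence orthogonal to $p_n$ in $L^2(\ee^{-NV}\dd A)$; multiplying by $\overline{p_n(w)}\,\ee^{-NV(w)}$ and integrating gives the factorization
\[
\int_\CC\frac{|p_n(w)|^2\,\ee^{-NV(w)}}{z-w}\,\dd A(w) = p_n(z)\,a_n(z), \qquad a_n(z):=\int_\CC\frac{\overline{p_n(w)}\,\ee^{-NV(w)}}{z-w}\,\dd A(w).
\]
Equivalently, since $p_n(z)\overline{p_n(w)} = \tilde K_{n+1}(z,w)-\tilde K_n(z,w)$ with $\tilde K_n(z,w):=K_n(z,w)\,\ee^{NV(z)/2+NV(w)/2}$, this is
\[
p_n(z)\,a_n(z) = \int_\CC \frac{[\tilde K_{n+1}(z,w)-\tilde K_n(z,w)]\,\ee^{-NV(w)}}{z-w}\,\dd A(w).
\]

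The hypothesis \eqref{eq:generalKn} applies to $K_{n+1}$ with the same leading term as for $K_n$ (passing from $n$ to $n+1$ affects only quantities of the same exponential order in the error), so on $B_\delta(z)$ both $\tilde K_n$ and $\tilde K_{n+1}$ agree with the Fock-like expression $(N/\pi)\,\ee^{Nz\bar w - NQ(z) - N\overline{Q(w)}}$ up to an $\mathcal{O}(\ee^{-N\epsilon})$ relative error, so their difference equals that leading expression times $\mathcal{O}(\ee^{-N\epsilon})$. Substituting $w=z+\xi$ and combining with $\ee^{-NV(w)}$, the effective exponent simplifies to $-N|\xi|^2 - N\bar z\xi + N(Q(z+\xi)-Q(z))$, which is a Gaussian in $\xi$ with analytic corrections from $Q$; after the rescaling $\xi = \eta/\sqrt N$, the local integral reduces to a convergent Gaussian integral with a $1/\eta$-type pole contributing an overall factor $N^{-1/2}$, so the $B_\delta(z)$-contribution is $\mathcal{O}(\sqrt N\,\ee^{-N\epsilon})$. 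For $|z-w|\ge \delta$, I would use $|z-w|^{-1}\le \delta^{-1}$, the diagonal consequence $|p_n(z)|\,\ee^{-NV(z)/2} = \mathcal{O}(\sqrt N\,\ee^{-N\epsilon/2})$ of \eqref{eq:generalKn} at $w=z$ applied to $K_n$ and $K_{n+1}$, Cauchy--Schwarz giving $\int|p_n(w)|\,\ee^{-NV(w)}\,\dd A\le(\int\ee^{-NV})^{1/2}=\mathcal{O}(1/\sqrt N)$, and the reproducing-kernel bound $|p_n(w)|\,\ee^{-NV(w)/2}\le\sqrt{K_{n+1}(w,w)}=\mathcal{O}(\sqrt N)$ to complete the estimate.

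The main obstacle is precisely the far-region bound: a naive pointwise decomposition of $|p_n(z)\overline{p_n(w)}|\,\ee^{-NV(w)}$ carries an $\ee^{NV(z)/2}$ factor coming from bounding $|p_n(z)|$ in isolation, which is not offset by the $\mathcal{O}(1/N)$ Gaussian integrability of $\ee^{-NV/2}$ unless one leverages the structural identity $p_n(z)\overline{p_n(w)} = \tilde K_{n+1}(z,w)-\tilde K_n(z,w)$ together with the off-diagonal asymptotic so as to inherit exponentially small decay for the whole integrand. Propagating this cancellation from $B_\delta(z)$, where the hypothesis gives exponential smallness, to the entire plane through the orthogonality-based factorization is the delicate technical core of the proof.
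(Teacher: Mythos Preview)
Your approach is genuinely different from the paper's, and it has a gap already in the near-region analysis, not only in the far region.

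Your exponent $-N|\xi|^2 - N\bar z\xi + N(Q(z+\xi)-Q(z))$ has real part
\[
-N|\xi|^2 + N\,\mathrm{Re}\big((Q'(z)-\bar z)\,\xi\big) + O(N|\xi|^2).
\]
After $\xi=\eta/\sqrt N$ the linear piece becomes $\sqrt N\,\mathrm{Re}\big((Q'(z)-\bar z)\,\eta\big)$, which does \emph{not} vanish: you get a Gaussian whose centre drifts to distance $O(\sqrt N)$ and whose peak value is $e^{N|Q'(z)-\bar z|^2/4}$. There is no reason for $|Q'(z)-\bar z|^2/4<\epsilon$, so the $O(\ee^{-N\epsilon})$ from the kernel difference is not enough to make the local integral exponentially small. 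Equivalently, once you pass to $\tilde K_{n+1}-\tilde K_n$ you have silently multiplied by $\ee^{NV(z)/2}$, and the hypothesis gives you no handle on that factor. The far-region suffers from the same uncontrolled $\ee^{NV(z)/2}$, as you partly note; your proposed Cauchy--Schwarz patches do not remove it.

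The paper sidesteps this entirely via a loop (Ward) identity. Differentiating $\rho_n$ under the Gibbs measure yields
\[
\frac{\partial\rho_n(z)}{\rho_n(z)} = -N\partial V(z) + \int_\CC\frac{K_n(w,w)}{z-w}\,\dd A(w) - \frac{1}{K_n(z,z)}\int_\CC\frac{|K_n(z,w)|^2}{z-w}\,\dd A(w).
\]
The left side is $O(\ee^{-N\epsilon'})$ by a Cauchy-integral argument on the analytic pre-kernel $\mathcal K_n$ (no $\ee^{NV(z)/2}$ ever appears). In the last term, the near part vanishes by the angular integral of $1/(z-w)$ against the radially symmetric leading Gaussian, and the far part is controlled by the reproducing identity $\int|K_n(z,w)|^2\,\dd A(w)=K_n(z,z)$, which gives exponential smallness of $\int_{\CC\setminus B_\delta}|K_n(z,w)|^2\,\dd A(w)$ without any pointwise information outside $B_\delta(z)$. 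Writing the same identity for $n+1$ and subtracting cancels $-N\partial V(z)$ and leaves exactly $\int_\CC\frac{K_{n+1}(w,w)-K_n(w,w)}{z-w}\,\dd A(w) = \int_\CC\frac{|p_n(w)|^2\ee^{-NV(w)}}{z-w}\,\dd A(w)$ as a sum of exponentially small pieces. The key structural point you are missing is that the diagonal quantity $K_{n+1}(w,w)-K_n(w,w)$, together with the reproducing property, lets one avoid ever separating $p_n(z)$ from $\overline{p_n(w)}$.
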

\vspace{0.2cm}
\begin{cor} When $Q(z)=t z^2/2$, the Cauchy transform at \eqref{eq:CauchyTr} vanishes exponentially fast in $N$  when $z\in{\rm Int} \,K$.
\end{cor}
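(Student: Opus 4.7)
The plan is to combine Theorem \ref{thm2} and Theorem \ref{thm:Cauchy} directly; the corollary is essentially a matter of verifying that the particular choice $Q(z)=tz^2/2$ meets the hypotheses of Theorem \ref{thm:Cauchy} at any $z\in{\rm Int}\,K$. First, I would fix $z\in{\rm Int}\,K$ and choose a compact set $V_K\subset{\rm Int}\,K$ containing $z$ (for instance, a small closed disk about $z$ that stays inside ${\rm Int}\,K$). Theorem \ref{thm2}, applied with $z$ and $w$ interchanged (using the symmetry noted at the end of that theorem), then yields an asymptotic expression for $K_n(z,w)/N$ uniformly for $w\in B_\delta(z)$ with some $\delta>0$.

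Second, I would check that this expression has exactly the form required by \eqref{eq:generalKn}. Since $t\in\RR$ one has $\overline{Q(w)} = t\cc{w}^2/2$, so
$${\rm Im}\bigl(z\cc{w} - Q(z) - \overline{Q(w)}\bigr) = {\rm Im}\bigl(z\cc{w} - \tfrac{t}{2}z^2 - \tfrac{t}{2}\cc{w}^2\bigr),$$
which is precisely the phase appearing in the first branch of Theorem \ref{thm2}. The Gaussian amplitude $e^{-\frac{N}{2}|z-w|^2}$ and the prefactor $1/\pi$ coincide as well, and the error is $\mathcal{O}(e^{-N\epsilon})$, so the hypothesis \eqref{eq:generalKn} of Theorem \ref{thm:Cauchy} is verified at this $z$ with the claimed $Q$ and suitable $\epsilon,\delta>0$.

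Finally, $Q(z)=tz^2/2$ is entire, hence analytic in a neighborhood of every $z\in\CC$, so the remaining analyticity hypothesis of Theorem \ref{thm:Cauchy} is automatic. Applying Theorem \ref{thm:Cauchy} then delivers the desired exponentially decaying bound on the Cauchy transform for all $z\in{\rm Int}\,K$. There is no substantive obstacle in this argument: all of the analytic work has been absorbed into Theorem \ref{thm2} (which supplies the exponentially small error in the kernel asymptotic, rather than merely a polynomial one) and into Theorem \ref{thm:Cauchy} (which converts that exponential smallness into decay of the Cauchy transform). The corollary is simply the specialization of those two results to the quadratic choice $Q(z)=tz^2/2$.
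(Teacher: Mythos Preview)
Your proposal is correct and matches the paper's approach exactly: the corollary is stated without separate proof in the paper because it is immediate from combining Theorem~\ref{thm2} (which supplies the hypothesis \eqref{eq:generalKn} for $Q(z)=tz^2/2$ at any interior point) with Theorem~\ref{thm:Cauchy}. Your care in noting the $z\leftrightarrow w$ swap and in checking that the phase in Theorem~\ref{thm2} agrees with that in \eqref{eq:generalKn} is exactly the verification the paper leaves implicit.
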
  
\vspace{0.2cm}

Though it follows directly from Theorem 7.7.3 in \cite{ameur2011} that the above Cauchy transform vanishes in the leading order, it is new (at least to us) that the Cauchy transform vanishes at {\em all} orders in the $1/N$ expansion.    
The Cauchy transform in \eqref{eq:CauchyTr} is related to the Cauchy transform that appears in the Dbar problem \cite{its-takhtajan} through the following identity.
\begin{equation*}
\frac{1}{p_n(z)} \int_\CC\frac{|p_n(w)|^2\, \ee^{-NV(w)} }{z-w} \dd A (w)=\int_\CC\frac{\overline{p_n(w)}\, \ee^{-NV(w)}}{z-w} \dd A(w),  
\end{equation*}
which can be verified by the following equation that is immediate from the orthogonality of $p_n$'s.
\begin{equation*}
\int_\CC\frac{p_n(z)-p_n(w)}{z-w}\,\,\overline{p_n(w)}\, \ee^{-NV(w)} \dd A (w)=0.  
\end{equation*}

In the remainder of this paper, we will present the proofs of the four theorems.  In Section \ref{sec:def} we list the notations and a few useful (known) facts.  In Section \ref{sec:WKB} we give the asymptotic behavior of the orthogonal polynomial $p_n(z)$ using the properties of the Hermite polynomials. In Section \ref{sec:Bulk} we prove Theorem \ref{thm2}.   In Section \ref{sec_density} we prove Theorem \ref{thm1} and Theorem \ref{thm:1b}. In Section \ref{sec:vanishing-cauchy} we prove Theorem \ref{thm:Cauchy}.  In Section \ref{sec_discussion} we discuss several issues including the expected number of particles outside the ellipse.

\section{Useful definitions and facts}\label{sec:def}
The foci of the ellipse $\partial K$ lie at $\pm F_0$ where 
\begin{equation*} 
F_0=\sqrt{\frac{4\,t\,T}{1-t^2}}.
\end{equation*}
For $0<t<1$ we define the conformal map ($\DD$ is the unit disk)
\begin{equation}\label{eq_phi}
\phi:\CC\setminus \clos \DD \to \CC\setminus K
\qquad\text{by}\qquad 
\phi(u):=\frac{F_0}{2\sqrt{t}}\bigg(u+\frac{t}{u}\bigg).
\end{equation}
One can check that
$\phi(\partial \DD)=\partial K$.

The inverse function gives the conformal map 
\begin{equation}\label{eq_psi}
\psi: \CC\setminus K \to  \CC\setminus \clos \DD \qquad\text{by}\qquad\psi(z)=\frac{\sqrt{t} z}{F_0}\Big(1+\sqrt{1-F_0^2/z^2}\Big).
\end{equation}
The branch of the square root is taken such that
$\psi(z)=2\sqrt{t}\, z/F_0+{\cal O}(1)$ as $z$ goes to $\infty$.
It will be useful to denote the logarithmic capacity of $K$ by
\begin{equation}\label{eq:cap}
\logcap(K) = \lim_{z\to\infty}\frac{z}{\psi(z)}= \frac{F_0}{2\sqrt t}=\sqrt{\frac{T}{1-t^2}}.
\end{equation} 

The Joukowsky map $\phi$ is in fact a conformal map between the larger domains, i.e. 
$$\phi:\CC\setminus \clos B_{\sqrt t}(0) \to \CC\setminus[-F_0,F_0],$$
where $B_r(x)$ stands for the disk of radius $r$ centered at $x\in\CC$.
Therefore, $\psi$ is analytically continued to the conformal map
$$\psi:\CC\setminus[-F_0,F_0]\to \CC\setminus \clos B_{\sqrt t}(0).$$ Throughout the paper it is convenient to consider the expression
$$\sqrt{1-F_0^2/z^2}$$
to be analytic on $\CC\setminus[-F_0,F_0]$ such that the value at $\infty$ is one.

The {\em Schwarz function} $S$ for an analytic curve $\Gamma$ in the complex plane is defined to be the unique analytic function in a neighborhood of $\Gamma$ which fulfills $S(z)=\cc{z}$ on $\Gamma$.
More details about the Schwarz function can be found, for instance, in \cite{davis1974schwarz}.

For the ellipse, $\partial K$, the Schwarz function $S:\CC\setminus[-F_0,F_0]\to\CC$ is given by 
\begin{equation}
S(z)=t\,z + \frac{2 T \, z}{F_0^2}\left(1-\sqrt{1-F_0^2/z^2}\right).
\label{eq_schwarz_ellipse}
\end{equation}
This can be derived by the following identity
\begin{equation*}
S(z)=\overline{\phi(1/\overline{\psi(z)})},
\end{equation*}
which holds for a general simply-connected droplet.  It is simple to check that $S(z)$, as defined by the above equation, is analytic and satisfies $S(z)=\overline z$ on $z\in\partial K$.
\vspace{0.2cm}

\remark{For $t=0$ we have $\phi(u)=\sqrt{T}\,u$, $\psi(z)=z/\sqrt{T}$, $\logcap(K)=\sqrt{T}$ and $S(z)=T/z$.}

Let $p_j$ be the orthonormal polynomial of degree $j$ satisfying
$$ \int_\CC p_j(z)\overline{p_k(z)} \ee^{-NV(z)}\dd A(z)=\delta_{jk},\quad \lim_{z\to\infty}\frac{p_j(z)}{z^j}>0. $$
The following statement is known \cite{eijndhoven} (see also \cite{rr1312.0068}).

\vspace{0.2cm}
\begin{thm}\label{prop:Hermite}
Let $0<t<1$. The orthonormal polynomials $\{p_j\}_{j=0,1,2,\cdots}$ are given by
\begin{align}
p_j(z)=\bigg(\frac{t}{2}\bigg)^{j/2} (1-t^2)^{1/4}\sqrt{\frac{N}{\pi j!}} H_j\bigg(\sqrt\frac{N(1-t^2)}{2t}\, z\bigg), \label{eq_onp} 
\end{align}
where $H_j$ is the Hermite polynomial of degree $j$ given by the generating function
\begin{equation*}
\ee^{2x z- z^2}=\sum_{j=0}^\infty H_j(x)\frac{z^j}{j!}.
\end{equation*}
The leading coefficient of $p_j$ is given explicitly by
\begin{equation}
\gamma_j:=\lim_{z\to\infty} \frac{p_j(z)}{z^j}=
\frac{ N^{1/4}\sqrt{N(1-t^2)}^{j+1/2}}{\sqrt{\pi j!}}. 
 \label{eq_gamman}
\end{equation}
\end{thm}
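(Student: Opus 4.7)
The plan is to verify the orthogonality relation by introducing the two-variable generating function of the candidate family and reducing everything to a single Gaussian integral. Concretely, set $\alpha:=\sqrt{N(1-t^2)/(2t)}$ and $c_j:=(t/2)^{j/2}(1-t^2)^{1/4}\sqrt{N/(\pi j!)}$, so the claim is that $p_j(z)=c_j H_j(\alpha z)$ forms the orthonormal family. Using the generating function $\ee^{2xw-w^2}=\sum_j H_j(x)w^j/j!$, one has
\begin{equation*}
\sum_{j,k\ge 0}\frac{\lambda^j\bar\mu^k}{j!\,k!}\int_{\CC}H_j(\alpha z)\,\overline{H_k(\alpha z)}\,\ee^{-NV(z)}\dd A(z)
=\int_{\CC}\ee^{2\alpha z\lambda-\lambda^2+2\alpha\bar z\bar\mu-\bar\mu^2}\,\ee^{-NV(z)}\dd A(z),
\end{equation*}
so matching coefficients of $\lambda^j\bar\mu^k$ reduces the infinite family of orthogonality conditions to a single computation.

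Next I would evaluate this Gaussian integral by writing $z=x+iy$ and exploiting the fact that $NV(z)=N(1-t)x^2+N(1+t)y^2$ diagonalizes in $x,y$. The $x$- and $y$-integrals are elementary Gaussians and (using $\alpha^2/(N(1-t))=(1+t)/(2t)$ and $\alpha^2/(N(1+t))=(1-t)/(2t)$) the cross terms in $\lambda^2$ and $\bar\mu^2$ cancel exactly against the $-\lambda^2-\bar\mu^2$ in the exponent. What remains is
\begin{equation*}
\int_{\CC}\ee^{2\alpha z\lambda+2\alpha\bar z\bar\mu-\lambda^2-\bar\mu^2}\,\ee^{-NV(z)}\dd A(z)
=\frac{1}{N\sqrt{1-t^2}}\,\ee^{(2/t)\lambda\bar\mu},
\end{equation*}
after an appropriate convention for $\dd A$. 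Expanding the right-hand side as $\sum_j (2/t)^j(\lambda\bar\mu)^j/j!$ and matching coefficients gives the orthogonality (only diagonal terms survive) and the squared norm $\int |H_j(\alpha z)|^2\ee^{-NV}\dd A=\pi j!\,2^j/(Nt^j\sqrt{1-t^2})$, whose reciprocal square root is precisely $c_j$.

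Finally I would confirm the leading coefficient formula \eqref{eq_gamman}: since $H_j$ has leading coefficient $2^j$, the leading coefficient of $p_j$ is $c_j\,2^j\alpha^j$, and a routine simplification using $(t/2)^{j/2}(2\alpha)^j=(N(1-t^2))^{j/2}$ and $N^{1/2}(1-t^2)^{1/4}\,(N(1-t^2))^{j/2}=N^{1/4}(N(1-t^2))^{(j+1/2)/2}$ yields the stated $\gamma_j$. The positivity $\gamma_j>0$ and the degree count are automatic. The only step that might look like an obstacle is keeping track of the normalization of $\dd A$ and the numerous $t$-dependent factors, but once the Gaussian integral is done symbolically the cancellations are forced and the identification with $c_j$ is direct; the result is essentially classical and can also be quoted from \cite{eijndhoven}.
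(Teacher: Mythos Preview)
The paper does not prove this theorem; it states the result as known and cites \cite{eijndhoven} (and \cite{rr1312.0068}) without giving an argument. Your generating-function approach is correct and is essentially the standard proof. The Gaussian computation checks out: with $\dd A=\dd x\,\dd y$ (the paper's convention) the integral equals $\dfrac{\pi}{N\sqrt{1-t^2}}\,\ee^{(2/t)\lambda\bar\mu}$, so the missing $\pi$ in your displayed intermediate result is a cosmetic slip---your stated squared norm $\pi j!\,2^j/(Nt^j\sqrt{1-t^2})$ is already correct and inverts to $c_j^2$. The off-diagonal orthogonality follows from the diagonal structure of $\ee^{(2/t)\lambda\bar\mu}$, and the leading-coefficient identity $c_j\,2^j\alpha^j=\gamma_j$ is a straightforward simplification. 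There is nothing in the paper to compare against; your argument is self-contained and could replace the citation.
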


We note that $\gamma_j$ and $p_j$ (and many other related quantities) also depend on the parameter $N$.  
\vspace{0.2cm}

\remark{If $t=0$ the orthonormal polynomials are simply the monomials $\{\gamma_j z^j\}$ where $\gamma_j=\frac{ N^{(j+1)/2 }}{\sqrt{\pi j!}}$. }

\vspace{0.3cm}
\begin{lemma}\label{lem:rho}
Given a fixed $r\in\ZZ$, as $n=NT$ goes to $\infty$ for a fixed $T$, the following expansion holds.
\begin{equation*}
\gamma_{n+r}	
=\left(\frac{N}{2\pi^3}\right)^{1/4} \sqrt{\frac{e^{-N\ell}}{(\logcap(K))^{2r+1}}} \bigg(1-\frac{1+6r+6r^2}{24 \,T N} +{\cal O}\bigg(\frac{1}{N^2}\bigg) \bigg),
\end{equation*}
where $\logcap(K)$ is defined at \eqref{eq:cap}, and the Robin constant $\ell$ is defined by
\begin{equation}\label{eq:ell}
	\ell := T\big(2\log \logcap(K)-1\big).
\end{equation}
\end{lemma}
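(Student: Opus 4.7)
The plan is to apply Stirling's asymptotic series directly to the explicit formula \eqref{eq_gamman} for $\gamma_{n+r}$. Since $n=NT$ is an integer and $r$ is fixed, the factorial $(n+r)!$ can be expanded uniformly in $1/N$ via
\[
\log m! = m\log m - m + \tfrac{1}{2}\log(2\pi m) + \frac{1}{12 m} + O(m^{-3}), \qquad m := NT + r.
\]
Taking logarithms of both sides of the desired identity is the natural move, since everything in sight (powers of $N$, factorials, and the constants defining $\ell$ and $\logcap(K)$) is exponential/multiplicative in character. So I would work with $2\log\gamma_{n+r}$ throughout.

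First, from \eqref{eq_gamman},
\[
2\log\gamma_{n+r} = \tfrac{1}{2}\log N + (m+\tfrac12)\log\bigl(N(1-t^2)\bigr) - \log\pi - \log m!.
\]
Next, expand $\log m = \log(NT) + \log(1 + r/(NT))$ as a Taylor series in $1/N$ and substitute into Stirling. Keeping everything through order $1/N$, one obtains
\[
\log m! = NT\log(NT) + r\log(NT) - NT + \tfrac{1}{2}\log(2\pi NT) + \frac{6r^2+6r+1}{12\, NT} + O(N^{-2}).
\]
(The contributions of $r/(NT)$ and $r^2/(NT)^2$ from expanding $\log m$ conveniently regroup with the $1/(12m)$ Stirling term; this is the single bit of bookkeeping that must be done carefully.)

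After substituting this back and collecting, one separates the result into three groups: an $O(N)$ piece giving an exponential factor, an $O(1)$ piece giving the prefactor, and an $O(1/N)$ piece giving the subleading correction. For the $O(N)$ piece, the coefficient of $NT$ works out to $1 + \log(1-t^2) - \log T = -(2\log\logcap(K) - 1) = -\ell/T$, using \eqref{eq:cap} and \eqref{eq:ell}; hence the exponential factor is precisely $e^{-N\ell}$. The $r$-linear terms combine into $-2r\log\logcap(K)$, and the remaining $r$-independent constants rearrange into $\tfrac{1}{2}\log(N/(2\pi^3)) - \log\logcap(K)$, giving the asserted prefactor $(N/(2\pi^3))^{1/4}/\logcap(K)^{(2r+1)/2}$. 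Finally, the $O(1/N)$ piece is exactly $-(6r^2+6r+1)/(12TN)$, so that $\exp$ of half of this is $1 - (1 + 6r + 6r^2)/(24 T N) + O(N^{-2})$.

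There is no real obstacle here: the only thing to watch is that the $\log(1+r/(NT))$ expansion must be carried to enough orders so that cross terms with $m\log m$ and with $\log(2\pi m)$ all contribute correctly at order $1/N$. The argument yields the stated expansion, and since everything is done at the level of explicit asymptotic series with error $O(N^{-2})$, the error estimate follows immediately from the Stirling remainder bound.
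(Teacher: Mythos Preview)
Your proposal is correct and follows exactly the approach the paper indicates: the paper simply states that the lemma is proved ``by direct manipulation of \eqref{eq_gamman} using Stirling's formula for the factorial,'' and you have carried out precisely that manipulation with the right bookkeeping at order $1/N$.
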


One can prove this lemma by direct manipulation of \eqref{eq_gamman} using Stirling's formula for the factorial.    The leading asymptotic expression in the lemma also appeared in \cite{balogh2012strong} which is believed to be universal.  

Let us define the {\em pre-kernel} ${\cal K}_n$ by
\begin{align}\label{eq:pre}
\begin{split}
{\cal K}_n(z,w)&= K_n(z,w) \exp\bigg(\frac{N}{2}|z-w|^2\bigg) \exp\bigg(iN \,{\rm Im}\bigg(-z\overline w + \frac{t}{2} z^2 + \frac{t}{2} \overline w^2\bigg)\bigg)
\\ &=\sum_{j=0}^{n-1} p_j(z)\overline{p_j(w)} \exp\bigg(-N\bigg( z\overline w -\frac{t}{2}z^2-\frac{t}{2}\overline w^2\bigg)\bigg).
\end{split}
\end{align}
\begin{prop}\label{prop_CD}
Let $n\in\NN$ and $w,z\in\CC$. Then the pre-kernel ${\cal K}_n$ fulfills the following Christoffel-Darboux type identity.
\begin{align}
\frac{1}{N} \frac{\partial {\cal K}_n}{\partial z} (z,w) &= 
\frac{\sqrt{T}}{\sqrt{1-t^2}}\big(t\, p_{n}(z)\overline{p_{n-1}(w)}- p_{n-1}(z)\overline{p_n(w)}\big)\exp\bigg(-N\bigg( z\overline w -\frac{t}{2}z^2-\frac{t}{2}\overline w^2\bigg)\bigg).\label{eq_CD}
\end{align}
\end{prop}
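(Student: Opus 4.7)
The proof will be a direct calculation exploiting two standard identities for Hermite polynomials, rewritten in terms of the orthonormal polynomials $p_j$ via Theorem \ref{prop:Hermite}. Specifically, from $H_j'(x)=2jH_{j-1}(x)$ and $2xH_j(x)=H_{j+1}(x)+2jH_{j-1}(x)$ one obtains
\begin{equation*}
p_j'(z)=\sqrt{jN(1-t^2)}\,p_{j-1}(z),\qquad
\sqrt{N(1-t^2)}\,z\,p_j(z)=\sqrt{j+1}\,p_{j+1}(z)+t\sqrt{j}\,p_{j-1}(z),
\end{equation*}
with the conjugate version in $\overline w$ for $\overline{p_j(w)}$. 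These are the only analytic inputs needed.

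The plan is to differentiate the second expression for ${\cal K}_n$ in \eqref{eq:pre}. Writing $\Phi(z,w)=\exp\bigl(-N(z\overline w-\tfrac t2z^2-\tfrac t2\overline w^2)\bigr)$, one gets
\begin{equation*}
\frac{\partial{\cal K}_n}{\partial z}=\Phi\sum_{j=0}^{n-1}p_j'(z)\overline{p_j(w)}-N(\overline w-tz)\,\Phi\sum_{j=0}^{n-1}p_j(z)\overline{p_j(w)}.
\end{equation*}
In the first sum I would substitute $p_j'(z)=\sqrt{jN(1-t^2)}\,p_{j-1}(z)$ and reindex $k=j-1$, producing $\sqrt{N(1-t^2)}\sum_{k=0}^{n-2}\sqrt{k+1}\,p_k(z)\overline{p_{k+1}(w)}$. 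In the second sum I would expand $\overline w\,\overline{p_j(w)}$ and $t z\,p_j(z)$ by the three-term recurrence, turning $(\overline w-tz)p_j(z)\overline{p_j(w)}$ into a linear combination of $p_j(z)\overline{p_{j\pm1}(w)}$ and $p_{j\pm1}(z)\overline{p_j(w)}$.

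Next comes the bookkeeping step: splitting the double sum into four pieces, shifting the index on each, and collecting terms by type. Most of the off-diagonal contributions come in pairs $(1-t^2)\sqrt{k}\,p_{k-1}(z)\overline{p_k(w)}$ for $1\le k\le n-1$, and these exactly match (after pulling the factor $(1-t^2)$ against $1/\sqrt{N(1-t^2)}$) the derivative sum $\sqrt{N(1-t^2)}\sum_{k=1}^{n-1}\sqrt{k}\,p_{k-1}(z)\overline{p_k(w)}$, so they cancel. What survives is only the boundary contribution at $k=n$, namely
\begin{equation*}
-\sqrt{\tfrac{Nn}{1-t^2}}\bigl(p_{n-1}(z)\overline{p_n(w)}-t\,p_n(z)\overline{p_{n-1}(w)}\bigr)\Phi,
\end{equation*}
and using $n=NT$ gives $\sqrt{Nn/(1-t^2)}=N\sqrt{T/(1-t^2)}$, which is exactly the prefactor in \eqref{eq_CD} after dividing by $N$.

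The main (only) obstacle is the clerical care needed in the four-way reindexing and sign tracking; conceptually the identity is a two-dimensional analogue of the classical Christoffel–Darboux telescoping, where the non-holomorphic factor $\Phi$ is arranged precisely so that the exponential prefactors generated by $\partial_z$ combine with the recurrence to give a perfect telescoping. No analysis or estimates are required.
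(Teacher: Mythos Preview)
Your proposal is correct and takes essentially the same approach as the paper: differentiate the sum defining ${\cal K}_n$, apply the Hermite derivative and three-term recurrences, and telescope to a boundary term at $j=n$. The only cosmetic difference is that the paper absorbs the factor $\ee^{Ntz^2/2}$ into a modified polynomial $\widehat p_k(z)=p_k(z)\,\ee^{Ntz^2/2}$, which packages your derivative identity and the $tz\,p_j(z)$ recurrence into the single relation $\tfrac{1}{N}\partial_z\widehat p_k=t\,r_{k+1}\widehat p_{k+1}+r_k\widehat p_{k-1}$ and makes the telescoping a two-line affair; your version keeps $p_j$ and $\Phi$ separate and does the same cancellation with a bit more bookkeeping.
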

This identity is known \cite{harnad} 
in more general cases in the context of the two matrix model and bi-orthogonal polynomials.  Below we reproduce the proof tailored to our case (See also \cite{rr1312.0068}).

\begin{proof}
We set $p_{-1}(z)=0$.  From Theorem \ref{prop:Hermite}  we have (see, for instance, \cite{rr1312.0068,elbau}), for $k\in \NN \cup\{0\}$,
\begin{align}\label{eq:rec}
\begin{split}
z \,p_{k}(z)&=r_{k+1} p_{k+1}(z)+t \, r_{k}  p_{k-1}(z), \\
\frac{1}{N}\partial_z \widehat p_{k}(z)&
=t\, r_{k+1} \widehat p_{k+1}(z)+ r_k \widehat p_{k-1}(z), 
\end{split}
\end{align}
where we set $r_{k}=\sqrt{\frac{k}{ N ( 1-t^2 ) }}$ and denote $\widehat p_k(z)=p_{k}(z)\,\ee^{Ntz^2/2}$. 
The second relation can be derived from 
$$p'_{k}(z)=\sqrt{N k(1-t^2)}\, p_{k-1}(z).$$
Using the definition of the pre-kernel \eqref{eq:pre} and the recurrence relations \eqref{eq:rec}, we get
\begin{align*}
\frac{\ee^{N z \overline w}}{N}\partial_z\bigg({\cal K}_n(z,w)\bigg)&=\frac{\ee^{N z \overline w}}{N}\partial_z\left(\sum_{j=0}^{n-1} \widehat p_j(z)\overline{\widehat p_j(w)} \ee^{-N z \overline w}\right)
\\&\!\!\!\!\!\!\!\!\!\!\!\!\!\!\!\!\!\!\!\!\!=
\sum_{j=0}^{n-1} \bigg(\big(t \,r_{j+1} \widehat p_{j+1}(z)+ r_j \widehat p_{j-1}(z)\big)\overline{\widehat p_j(w)}- \widehat p_j(z) \big(r_{j+1} \overline{\widehat p_{j+1}(w)}+t \, r_{j} \overline {\widehat p_{j-1}(w)}\big)\bigg)
\\&\!\!\!\!\!\!\!\!\!\!\!\!\!\!\!\!\!\!\!\!\!=
\sum_{j=0}^{n-1} \big(t \,r_{j+1} \widehat p_{j+1}(z)+ r_j \widehat p_{j-1}(z)\big)\overline{\widehat p_j(w)}- \sum_{j=1}^{n}\widehat p_{j-1}(z) r_{j} \overline{\widehat p_{j}(w)}-\sum_{j=0}^{n-2}\widehat p_{j+1}(z)\, t \, r_{j+1} \overline {\widehat p_{j}(w)}
\\&\!\!\!\!\!\!\!\!\!\!\!\!\!\!\!\!\!\!\!\!\!= 
t\, r_n\widehat p_{n}(z)\overline{\widehat p_{n-1}(w)}-r_n \widehat p_{n-1}(z)\overline{\widehat p_n(w)}.
\end{align*}
This proves the proposition because $r_n=\sqrt{T/(1-t^2)}$. 
\end{proof}

Using $K_n$ or ${\cal K}_n$, the average density $\rho_n$ in \eqref{eq:first-density} is given by (note that it is normalized to $T$)
\begin{equation}\label{eq:rho}
\rho_n(z)=\frac{1}{N} K_n(z,z)=\frac{1}{N} {\cal K}_n(z,z).
\end{equation}

\begin{cor}\label{cor_CD}
Let $n$ be a positive integer. The density $\rho_n$ satisfies
\begin{align}
 \frac{\partial\rho_n(z)}{\partial z} &=\bigl( t\,  p_n(z) \overline{p_{n-1}(z)} - p_{n-1}(z) \overline{p_n(z)} \bigr)\frac{\sqrt T\ee^{-N V(z)}}{\sqrt{1-t^2}}.\label{eq_CD2}
\end{align}
\end{cor}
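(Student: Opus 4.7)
The plan is to derive the corollary directly from the Christoffel--Darboux identity of Proposition~\ref{prop_CD}, combined with the representation $\rho_n(z)=\frac{1}{N}\mathcal{K}_n(z,z)$ recorded in \eqref{eq:rho}. The only subtle point is careful bookkeeping of Wirtinger derivatives on the diagonal $w=z$.

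I begin by observing that the pre-kernel $\mathcal{K}_n(z,w)$ of \eqref{eq:pre} depends on its arguments only through $z$ (holomorphically) and through $\bar w$ (holomorphically): the $p_j(z)$ are polynomials in $z$, the factors $\overline{p_j(w)}$ are polynomials in $\bar w$, and the exponent $-N(z\bar w-\tfrac{t}{2}z^2-\tfrac{t}{2}\bar w^2)$ involves only $z$ and $\bar w$. In particular $\mathcal{K}_n(z,w)$ has no dependence on $\bar z$ or on $w$ (viewed as the variable conjugate to $\bar w$).

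Consequently, when we restrict to $w=z$ and take the holomorphic Wirtinger derivative $\partial/\partial z$ (keeping $\bar z$ fixed), only the first slot of $\mathcal{K}_n$ contributes, and
\begin{equation*}
\frac{\partial\rho_n}{\partial z}(z)=\frac{1}{N}\left.\frac{\partial\mathcal{K}_n(z,w)}{\partial z}\right|_{w=z}.
\end{equation*}
Substituting the formula of Proposition~\ref{prop_CD} and then setting $w=z$ collapses the exponential prefactor to $\exp\!\left(-N(z\bar z-\tfrac{t}{2}z^2-\tfrac{t}{2}\bar z^2)\right)=e^{-NV(z)}$ by the definition \eqref{eq:V} of $V$, producing precisely \eqref{eq_CD2}.

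The only potential pitfall would be to invoke the chain rule on both slots of $\mathcal{K}_n(z,z)$ as if $w$ and $\bar w$ were independent variables depending on $z$; the holomorphic/antiholomorphic separation noted above rules out any such extra contribution. So the proof reduces to a one-line substitution and there is no genuine obstacle.
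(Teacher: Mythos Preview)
Your argument is correct and matches the paper's intent: the paper states this as an immediate corollary of Proposition~\ref{prop_CD} without writing out a separate proof, and your careful tracking of the Wirtinger derivative on the diagonal (using that $\mathcal{K}_n(z,w)$ is holomorphic in $z$ and anti-holomorphic in $w$) is exactly the justification needed. There is nothing to add.
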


\section{Asymptotic Expansion of Orthogonal Polynomials}\label{sec:WKB}
The orthonormal polynomials, $\{p_k\}$, are rescaled Hermite polynomials, $\{H_k\}$, where the argument of $H_k$ is multiplied by the constant $\sqrt{ N(1-t^2)/(2t)}$, see Theorem \ref{prop:Hermite}.  
The leading asymptotics for the Hermite polynomial has been proven by Plancherel and Rotach in \cite{plancherel-rotach}. See also \cite{deift,dkmvz1999,dkmvz2001,wong}.

We will need the asymptotics of $p_n$ to get the asymptotic behavior of the kernel and the density $\rho_n$ using Proposition \ref{prop_CD} and Corollary \ref{cor_CD}.  
Especially, we will use the (subleading) behaviors of $p_n$ to get the (subleading) behaviors of the kernel and the density.

The following lemma is shown in Appendix \ref{app:WKB}.
An alternative, self-contained method to calculate the asymptotics of classical orthogonal polynomials can be found in \cite{neuschel}.

\vspace{0.2cm}
\begin{lemma}\label{prop-WKB} Let $n=NT$ and $r\in\ZZ$.  As $N$ goes to infinity, the following asymptotic expansion holds uniformly over any compact subset of $\CC\setminus[-F_0,F_0]$.
\begin{equation}\label{eq_on_asymptotics}
p_{n+r}(z)=\left(\frac{N}{2\pi^3}\right)^{1/4}\psi(z)^r \sqrt{\psi'(z)} \,\exp\bigg[ n\, g(z) -\frac{N\ell}{2} + \frac{1}{N} h_{r}(z)+ {\cal O}\left(\frac{1}{N^2}\right)\bigg],
\end{equation}
where $\psi$ is given in \eqref{eq_psi}, $\ell$ in \eqref{eq:ell},
\begin{align}\label{eq:g}
g(z)& = \log z +\log \left(\frac{1}{2}+\frac{1}{2}\sqrt{1-F_0^2/z^2}\right)+\left(1+\sqrt{1-F_0^2/z^2}\right)^{-1}-\frac{1}{2},
\\ \label{eq:hr}
\text{and}\quad
h_r(z)&=\frac{1}{T}\bigg(\frac{F_0^2(1+2r)}{8(z^2-F_0^2)}-\frac{1+6 r+6r^2}{24\sqrt{1-F_0^2/z^2}}-\frac{5 F_0^2}{48(z^2-F_0^2)\sqrt{1-F_0^2/z^2}}\bigg).
\end{align}
\end{lemma}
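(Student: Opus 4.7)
The strategy is to reduce $p_{n+r}$ to a Hermite polynomial via Theorem~\ref{prop:Hermite} and then perform a classical steepest-descent analysis on an integral representation. From its generating function $H_j$ admits the loop integral
$$H_j(y) = \frac{j!}{2\pi\ii}\oint\frac{\ee^{2yw-w^2}}{w^{j+1}}\,\dd w.$$
Substituting $y=\sqrt{N(1-t^2)/(2t)}\,z$, rescaling $w = \sqrt{N(1-t^2)/(2t)}\,u$, and taking $j=n+r$ with $n=NT$, the integrand becomes $\ee^{N\Phi(u)}u^{-(r+1)}$ up to an explicit $z$-independent constant, where
$$\Phi(u) = \frac{1-t^2}{2t}(2zu-u^2) - T\log u.$$
The saddle-point equation $\Phi'(u)=0$ is $u^2-zu+tT/(1-t^2)=0$, whose outer root is $u_\ast = (z+\sqrt{z^2-F_0^2})/2 = (F_0/(2\sqrt{t}))\,\psi(z)$ by \eqref{eq_psi}.

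For $z$ on a compact subset of $\CC\setminus[-F_0,F_0]$, the two saddles are bounded away from each other and from $0$, so the loop can be deformed to a steepest-descent contour through $u_\ast$; the subdominant saddle contributes an exponentially smaller error, and the deformation is uniform in $z$. A direct evaluation of $\Phi(u_\ast)$, using $\logcap(K)=F_0/(2\sqrt{t})$ and matching against the explicit form of $g$ in \eqref{eq:g} and $\ell$ in \eqref{eq:ell}, gives $N\Phi(u_\ast)=n\,g(z)-N\ell/2+r\log\psi(z)$ plus $r$- and $z$-independent constants, where the $r\log\psi(z)$-piece comes from the $u^{-(r+1)}$ factor. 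The Gaussian integration yields $\sqrt{2\pi/(N|\Phi''(u_\ast)|)}$ with $\Phi''(u_\ast)\propto \sqrt{z^2-F_0^2}/u_\ast$; combined with the prefactor in Theorem~\ref{prop:Hermite} and Stirling on $(n+r)!$, this collapses to $(N/(2\pi^3))^{1/4}\sqrt{\psi'(z)}$ by virtue of the identity $\psi'(z) = \psi(z)/\sqrt{z^2-F_0^2}$.

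To capture the $1/N$ correction $h_r(z)$ I would apply the standard subleading Laplace-type formula with amplitude $G(u)=u^{-(r+1)}$, which contributes
$$\frac{1}{N}\!\left(\frac{G''(u_\ast)}{2\Phi''(u_\ast)} - \frac{G'(u_\ast)\Phi'''(u_\ast)}{2[\Phi''(u_\ast)]^2} + G(u_\ast)\!\left(\frac{\Phi^{(4)}(u_\ast)}{8[\Phi''(u_\ast)]^2} - \frac{5[\Phi'''(u_\ast)]^2}{24[\Phi''(u_\ast)]^3}\right)\!\right),$$
and I would add the next-order Stirling correction to $\sqrt{N/(\pi(n+r)!)}$, which is the $-(1+6r+6r^2)/(24TN)$ listed in Lemma~\ref{lem:rho} and produces the middle summand of \eqref{eq:hr}. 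Differentiating $\Phi$ at $u_\ast$ yields rational expressions in $z$ that, after repeated use of the saddle relation $u_\ast^2-zu_\ast+tT/(1-t^2)=0$ together with $F_0^2=4tT/(1-t^2)$, collapse to the $(1+2r)$-dependent first summand and to the $r$-independent last summand of \eqref{eq:hr}. The main obstacle is precisely this last algebraic reduction: tracking several rational functions of $z$ branched through $\sqrt{1-F_0^2/z^2}$ and cleanly isolating the three contributions to $h_r(z)$, while verifying that uniformity over compact subsets of $\CC\setminus[-F_0,F_0]$ is preserved by the uniform non-degeneracy of the saddle and uniform bounds on the steepest-descent contour.
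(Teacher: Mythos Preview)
Your approach is correct but genuinely different from the paper's. The paper proceeds via WKB on the differential equation: since $p_k$ inherits the Hermite ODE, it posits an expansion $p_{n+r}(z)=(N/2\pi^3)^{1/4}\exp\big(NT\,Y_{-1}(z)+Y_0(z)+N^{-1}Y_1(z)+\cdots\big)$, cites the Riemann--Hilbert literature for the existence of such an expansion, substitutes into the ODE, and reads off a hierarchy of first-order equations for the $Y_j'$. These are solved recursively, and the integration constants are fixed by matching the large-$z$ behavior against the known leading coefficients $\gamma_{n+r}$ from Lemma~\ref{lem:rho}. One then identifies $Y_{-1}=g-\ell/(2T)$, $\ee^{Y_0}=\psi^r\sqrt{\psi'}$, and $Y_1=h_r$.

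Your steepest-descent route on the contour integral is the classical Plancherel--Rotach alternative. It has the advantage of being self-contained: you do not need to import the existence of the $1/N$ expansion from an external source, since the Laplace/steepest-descent expansion produces it directly, and uniformity on compact subsets of $\CC\setminus[-F_0,F_0]$ follows from the uniform separation of the two saddles and of $u_\ast$ from $0$. The paper's WKB approach, by contrast, trades that analytic work for a purely algebraic recursion that is easier to push to arbitrary order and keeps the three pieces of $h_r$ visibly separated from the outset. One presentational point: your sentence ``$N\Phi(u_\ast)=n\,g(z)-N\ell/2+r\log\psi(z)$'' conflates the exponent $N\Phi(u_\ast)$ with the amplitude contribution $-r\log u_\ast$; the identification is correct once you separate them, but as written it reads as if $\Phi$ depended on $r$.
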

Though the function $g$ \eqref{eq:g} has the logarithmic branch cut due to the term $\log z$, the expression $\exp(n g(z))$ is analytic in $\CC\setminus[-F_0,F_0]$.
The expression of $h_r$ \eqref{eq:hr} is analytic in $\CC\setminus[-F_0,F_0]$.
\vspace{0.2cm}
\begin{remark}
If $t=0$ we simply have $g(z)=\log z$ and $ h_r(z)=-\frac{1+6 r+6r^2}{24 T}. $
\end{remark}

The leading term of the asymptotic formula \eqref{eq_on_asymptotics} for the orthogonal polynomial has been conjectured in \cite{agam2002viscous} and appeared in \cite{balogh2012strong} for different $K$.  For general $K$, $g$ is given by the complex logarithmic potential generated by the uniform measure on $K$.

We will also need the following, rather crude, asymptotic behavior of $p_n$.

\vspace{0.3cm}
\begin{prop}\label{prop:crude} Let $n=NT$ grow to $\infty$ for a fixed $T$. The following estimate holds uniformly over the whole complex plane.
\begin{equation*}
\left|p_{n}(z)\,\ee^{-N(T g(z)-\ell/2)}\right|={\cal O}\big(N^{5/12}\big),
\end{equation*}
The same bound holds for $p_{n-1}$. \end{prop}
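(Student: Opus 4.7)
The plan is to use the exact representation of $p_n$ as a rescaled Hermite polynomial from Theorem \ref{prop:Hermite} and to apply the classical Plancherel--Rotach asymptotics in different regions of the complex plane, with the $O(N^{5/12})$ bound coming from the Airy-type edge behavior at the turning points $\pm F_0$.

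First I would rewrite the weighted polynomial $p_n(z)\,\ee^{-N(Tg(z)-\ell/2)}$, via the substitution $X=\sqrt{N(1-t^2)/(2t)}\,z$, as a constant of order $N^{1/4}$ (extracted from the prefactor $\sqrt{N/(\pi\,n!)}$ by Stirling's formula, consistent with Lemma \ref{lem:rho}), times the standard Hermite function $H_n(X)\,\ee^{-X^2/2}$, times an exponential ``mismatch'' factor. A direct computation shows that on the real axis inside the cut, $\ee^{X^2/2}=t^{n/2}\,\ee^{N(T\,{\rm Re}\,g(x)-\ell/2)}$ for $x\in(-F_0,F_0)$, so the mismatch factor is exactly $t^{-n/2}$ there and cancels the Stirling factor $t^{n/2}$ in the prefactor; off the real axis one gets analogous cancellations from the explicit form of $g$ in \eqref{eq:g}. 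The problem then reduces to the Szeg\H o-type sup bound $|H_n(X)\,\ee^{-X^2/2}|=O(n^{-1/12}\sqrt{2^n\,n!})$ for $X$ on (a neighborhood of) the real line.

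Second I would partition $\CC$ into three regions relative to the cut: (i) the exterior region, at bounded positive distance from $[-F_0,F_0]$ (including a neighborhood of $\infty$, handled by matching $p_n\sim\gamma_n z^n$ against $g(z)\sim\log z$), where Lemma \ref{prop-WKB} directly yields $p_n(z)\,\ee^{-N(Tg(z)-\ell/2)}=(N/(2\pi^3))^{1/4}\sqrt{\psi'(z)}(1+O(1/N))=O(N^{1/4})$; (ii) the oscillatory interior of the cut, bounded away from the endpoints, where the cosine-type Plancherel--Rotach formula gives $|H_n(X)\,\ee^{-X^2/2}|=O(n^{-1/4}\sqrt{2^n\,n!})$, again contributing $O(N^{1/4})$; and (iii) the edge regime near $\pm F_0$ (within distance $O(N^{-2/3+\epsilon})$), where the Airy-type Plancherel--Rotach gives $|H_n(X)\,\ee^{-X^2/2}|=O(n^{-1/12}\sqrt{2^n\,n!})$, producing the worst-case bound $O(N^{5/12})$. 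Merging the three estimates with a small overlap yields the uniform global bound.

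The main obstacle is the transition near the endpoints. The WKB expression in Lemma \ref{prop-WKB} develops a $(z\mp F_0)^{-1/4}$ singularity through the factor $\sqrt{\psi'(z)}$, so it cannot be used uniformly up to $\pm F_0$. The Airy-type Plancherel--Rotach regulates this singularity by the uniform boundedness of $\mathrm{Ai}$ on $\RR$, and the intrinsic $n^{-1/12}$ normalization of the Airy kernel---combined with the $N^{1/4}$ prefactor---is precisely what produces the $N^{5/12}$ bound and glues regions (i), (ii), (iii) into a uniform estimate. The argument for $p_{n-1}$ is identical since the shift $r=-1$ only changes subleading factors and does not affect the crude bound.
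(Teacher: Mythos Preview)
Your approach is essentially the paper's own: a three-region decomposition (exterior via Lemma~\ref{prop-WKB}, oscillatory bulk, Airy edge), with the worst-case $O(N^{5/12})$ arising at the turning points $\pm F_0$. The paper carries this out by directly citing the uniform asymptotics of Deift--Kriecherbauer--McLaughlin--Venakides--Zhou (Theorems~1.34, 1.49, 1.60 of \cite{dkmvz2001-v2}) for precisely these three regimes, together with Lemma~\ref{lem:rho} for the normalization; it records the same arithmetic as $N^{5/12}=N^{1/6}\cdot N^{1/4}$, with $N^{1/6}$ coming from the edge behavior (equation~(1.58) in \cite{dkmvz2001-v2}) and $N^{1/4}$ from $\gamma_n$.

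One technical point to watch: the statement demands uniformity over all of $\CC$, so your regions must cover a full \emph{complex} neighborhood of $[-F_0,F_0]$, not just the real interval. As written, your region~(ii) is the real oscillatory zone and the cosine-type Plancherel--Rotach formula you quote is a real-variable statement, while Lemma~\ref{prop-WKB} only controls compact subsets of $\CC\setminus[-F_0,F_0]$ bounded away from the cut; this leaves a thin complex strip around the open cut uncovered. The extension of the oscillatory and Airy asymptotics to such complex neighborhoods is exactly what the Riemann--Hilbert lens construction in \cite{dkmvz2001-v2} supplies, which is why the paper invokes those theorems directly rather than the classical real-axis Plancherel--Rotach. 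Your first paragraph's bookkeeping (the ``constant of order $N^{1/4}$'' and the exact form of the mismatch factor) is also somewhat loose, but the end arithmetic is correct.
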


\begin{proof} One can use the uniform asymptotics stated at Theorem 1.34, Theorem 1.49 and Theorem 1.60 in \cite{dkmvz2001-v2}.  These theorems are written for monic polynomials (denoted by $\pi_j$'s in the cited paper).  Therefore we need the behavior of the normalization constant $\gamma_j$ in \eqref{eq_gamman} (which is different from $\gamma_j$ in the cited paper; our normalization is with respect to the area integral) which is given in Lemma \ref{lem:rho}.

The error bound ${\cal O}(N^{5/12})$ comes from the behavior of $p_n$ near the focal points, see the equation (1.58) in \cite{dkmvz2001-v2}, and from the behavior of $\gamma_n$ in Lemma \ref{lem:rho}. The former contributes by $N^{1/6}$ and the latter contributes by $N^{1/4}$.     
\end{proof}

\vspace{0.2cm}

\begin{lemma}\label{lem:zero} As $N$ goes to $\infty$ while $n=NT$, there exists $\epsilon>0$ such that
$$\rho_n(0)=\frac{1}{\pi}+\mathcal{O}(\ee^{-N \epsilon}).$$
\end{lemma}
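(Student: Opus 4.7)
The plan is to evaluate $\rho_n(0)$ directly from the explicit Hermite representation of the orthonormal polynomials, rather than appealing to the asymptotics in Lemma \ref{prop-WKB}, because the point $z=0$ lies far from the boundary and all quantities can be computed in closed form.

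First, observe that from \eqref{eq:pre} and \eqref{eq:rho} we have
\begin{equation*}
\rho_n(0) = \frac{1}{N}\sum_{j=0}^{n-1}|p_j(0)|^2,
\end{equation*}
since the exponential factor in the pre-kernel is trivial at $z=w=0$. The classical values $H_{2m+1}(0)=0$ and $H_{2m}(0)=(-1)^m(2m)!/m!$ combined with \eqref{eq_onp} give
\begin{equation*}
|p_{2m+1}(0)|^2=0,\qquad |p_{2m}(0)|^2 = \frac{N\sqrt{1-t^2}}{\pi}\binom{2m}{m}\left(\frac{t}{2}\right)^{2m}.
\end{equation*}

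The next step is to sum this geometric-type series. Using the standard generating function $\sum_{m=0}^\infty \binom{2m}{m}x^m = (1-4x)^{-1/2}$ evaluated at $x=t^2/4$, the full infinite sum evaluates to
\begin{equation*}
\frac{1}{N}\sum_{m=0}^\infty |p_{2m}(0)|^2 = \frac{\sqrt{1-t^2}}{\pi}\cdot\frac{1}{\sqrt{1-t^2}} = \frac{1}{\pi}.
\end{equation*}
Thus it remains to estimate the tail
\begin{equation*}
\frac{1}{N}\sum_{2m\geq n}|p_{2m}(0)|^2 = \frac{\sqrt{1-t^2}}{\pi}\sum_{m\geq n/2}\binom{2m}{m}\left(\frac{t^2}{4}\right)^m.
\end{equation*}
By Stirling's formula $\binom{2m}{m}\leq C\cdot 4^m/\sqrt{m}$, so the tail is bounded by a convergent geometric series with ratio $t^2<1$, yielding a bound of order $t^n/\sqrt n = \mathcal{O}(e^{-NT\log(1/t)})$, which is exponentially small in $N$ with any $0<\epsilon<T\log(1/t)$.

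The argument is essentially a closed-form computation and I do not anticipate any real obstacle; the only mild care needed is the edge case $t=0$, where the orthonormal polynomials reduce to monomials, $p_j(0)=0$ for $j\geq 1$, and $\rho_n(0) = \frac{1}{N}|p_0(0)|^2 = \frac{1}{\pi}$ exactly, so the statement holds trivially with any $\epsilon>0$.
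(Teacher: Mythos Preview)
Your proof is correct and follows essentially the same approach as the paper: both compute $\rho_n(0)$ directly via the Hermite values $H_{2m}(0)=(-1)^m(2m)!/m!$, recognize the resulting partial sum as a truncation of the binomial series for $(1-t^2)^{-1/2}$, and bound the tail geometrically. The only cosmetic differences are that the paper estimates the tail by the substitution $x=e^{\epsilon/T}t$ rather than Stirling, and does not single out the $t=0$ case (which you handle cleanly).
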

\begin{proof}
At the origin the Hermite polynomials give (see for example \cite{hille}). 
\begin{align*}
H_m(0)&=\left\{\begin{array}{ll} 0 & \text{if $m$ is odd,}\\ (-1)^{m/2} m!/\big(\frac{1}{2}m\big)! & \text{if $m$ is even.} \end{array}\right. \label{eq_Hm0}
\end{align*}
Using this in \eqref{eq_onp} and \eqref{eq:rho}, the density at the origin is given by
\begin{align*}
\rho_n(0)&=\frac{1}{N}\sum_{j=0}^{n-1}|p_j(0)|^2
=\frac{\sqrt{1- t^2}}{\pi}  \sum_{l=0}^{\lfloor\frac{n-1}{2}\rfloor} \left(\frac{t}{2}\right)^{2l} \frac{ (2l)!}{ (l!)^2 }.
\end{align*}
Note that the power series in the right-hand side is a truncated Taylor series from  
\begin{equation}\label{eq_taylorseries}
\frac{1}{\sqrt{1-t^2}}
=\sum_{l=0}^{\infty} \left(\frac{t}{2}\right)^{2l} \frac{ (2l)!}{ (l!)^2 }.
\end{equation}
Choosing $\epsilon>0$ such that $x=\ee^{\epsilon/T}t<1$, we have
\begin{align*}
\ee^{N\epsilon} \left| \rho_n(0) - \frac{1}{\pi}\right|&= \ee^{n\epsilon/T}\frac{\sqrt{1-t^2}}{\pi}\Bigg|\sum^\infty_{l=\lfloor \frac{n-1}{2}\rfloor+1} \left(\frac{t}{2}\right)^{2l} \frac{(2l)!}{(l!)^2 }\Bigg| 
\\&\le \frac{\sqrt{1-t^2}}{\pi}\Bigg|\sum^\infty_{l=\lfloor \frac{n-1}{2}\rfloor+1} \left(\frac{x}{2}\right)^{2l} \frac{(2l)!}{(l!)^2 }\Bigg|< \frac{\sqrt{1-t^2}}{\sqrt{1-x^2}} .
\end{align*}
where, in the penultimate inequality, we used the index $l$ in the summation is $\geq n/2$ and therefore $\ee^{n\epsilon/T} \le \ee^{2l\epsilon/T}$. 
\end{proof}

\section{Kernel in the bulk}\label{sec:Bulk}

We define $\Omega:\CC\to\RR$ by
\begin{equation*}
	\Omega(z) = V(z) - 2 T {\rm Re}\,g(z) +\ell.
\end{equation*}
In fact, the value of $\ell$ \eqref{eq:ell} has been chosen such that $\Omega\equiv 0$ on $\partial K$.
One can show that, for any $z_0\in\partial K$,
\begin{equation}\label{Omega1}
	\Omega(z) 
	 = |z|^2-|z_0|^2-2{\rm Re}\,\int_{z_0}^z S(\zeta)\dd\zeta, 
\end{equation}
by taking the derivative and using the explicit forms of $g$ \eqref{eq:g} and $S$ \eqref{eq_schwarz_ellipse}.  (The independence of $\Omega$ on $z_0$ can be seen from $S(z_0)=\overline{z}_0$ which is obtained by taking the derivative of $\Omega$.) Since $S(z)$ has a purely imaginary jump on $[-F_0,F_0]$, $\Omega$ can be defined everywhere, with no discontinuity on $[-F_0,F_0]$.  The following lemma addresses the global minima of $\Omega$ (see \cite{rr1312.0068} for another proof).

\vspace{0.3cm}
\begin{lemma}\label{prop_Omega_minimum}
For $z\in\CC$ we have $\Omega(z)\geq 0$ and the equality holds only when $z\in\partial K$. 
\end{lemma}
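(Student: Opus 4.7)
From \eqref{Omega1}, $\partial_z\Omega=\overline z-S(z)$; since $S$ is analytic on $\CC\setminus[-F_0,F_0]$ with a purely imaginary jump across the segment, $\Omega$ is $C^2$ with $\Delta\Omega=4$ off $[-F_0,F_0]$ and extends continuously to $\CC$. In particular the critical-point condition $\nabla\Omega=0$ is equivalent to $S(z)=\overline z$. Passing to the Joukowsky variable $z=\phi(u)$ with $|u|>\sqrt t$, a short computation from \eqref{eq_phi}--\eqref{eq_schwarz_ellipse} gives $S(\phi(u))=(F_0/(2\sqrt t))(tu+1/u)$, so $S(\phi(u))=\overline{\phi(u)}$ reduces to $(tu-\overline u)(|u|^2-1)=0$. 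Since $|tu|=t|u|<|u|=|\overline u|$ for $0<t<1$, the first factor never vanishes, and the critical set of $\Omega$ on $\CC\setminus[-F_0,F_0]$ is precisely $\partial K$.

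The second step is a minimum-principle argument on each of the two open components of $\CC\setminus(\partial K\cup[-F_0,F_0])$. Because $\Omega$ is $C^2$ with no interior critical points on either component, any point where $\Omega$ attains a global minimum must lie on the boundary: on the exterior of $K$, this boundary consists of $\partial K$ (with $\Omega=0$) together with a large circle (on which $\Omega$ is arbitrarily large since $V$ dominates), forcing $\Omega\geq 0$ with equality only on $\partial K$; on the slit interior $\inte K\setminus[-F_0,F_0]$ the boundary is $\partial K\cup[-F_0,F_0]$, so the same argument yields $\Omega\geq 0$ with equality only on $\partial K$ provided one verifies $\Omega>0$ on the segment $[-F_0,F_0]$.

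The main obstacle is this last verification. Using $\operatorname{Re}\sqrt{1-F_0^2/x^2}=0$ for $x\in(-F_0,F_0)$ in \eqref{eq:g}, I expect $\operatorname{Re} g(x)=\log(F_0/2)+x^2/F_0^2-1/2$; combined with $F_0^2=4tT/(1-t^2)$ and $\ell=T(2\log(F_0/(2\sqrt t))-1)$ this simplifies to
\[
\Omega(x)=T\log(1/t)-\frac{(1-t)^2 x^2}{2t},\qquad x\in[-F_0,F_0].
\]
The minimum over this segment is attained at $x=\pm F_0$ with value $T\bigl[\log(1/t)-2(1-t)/(1+t)\bigr]$. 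Setting $s=(1-t)/(1+t)\in(0,1)$ turns this into $T[\log((1+s)/(1-s))-2s]=2T\sum_{k\geq 1}s^{2k+1}/(2k+1)>0$, which supplies the required positivity on the segment. Combining the three cases gives $\Omega\geq 0$ on $\CC$ with equality iff $z\in\partial K$.
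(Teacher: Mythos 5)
Your proof is correct, and it takes a route that is organized differently from the paper's. The paper works entirely in the $u$-plane: it writes out $\widetilde\Omega(R\ee^{i\theta})=\Omega(\phi(R\ee^{i\theta}))$ as an explicit trigonometric expression, computes $\partial_R\widetilde\Omega$ and $\partial_\theta\widetilde\Omega$ by hand to locate stationary points ($R=1$, and $R\in\{\sqrt t,1\}$ with $\sin 2\theta=0$), evaluates $\widetilde\Omega$ on the inner boundary circle $R=\sqrt t$, and shows the resulting expression $-2T(1-t)\cos^2\theta/(1+t)-T\log t$ is positive by a one-variable derivative-in-$t$ argument. You instead extract the critical condition directly from $\partial_z\Omega=\overline z-S(z)$, exploit the clean factorization $(tu-\overline u)(|u|^2-1)=0$ of $S(\phi(u))=\overline{\phi(u)}$ to see the critical set is exactly $\partial K$, and then run a minimum-principle argument on the two components of $\CC\setminus(\partial K\cup[-F_0,F_0])$, reducing everything to positivity of $\Omega$ on the slit. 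Your computation of $\Omega$ on $[-F_0,F_0]$ (via $\mathop{\rm Re} g$, giving $T\log(1/t)-\tfrac{(1-t)^2}{2t}x^2$) and the positivity check via the power series of $\log\tfrac{1+s}{1-s}-2s$ are both correct, and they recover the same boundary value $T[\log(1/t)-2(1-t)/(1+t)]$ that the paper obtains at $R=\sqrt t$, $\cos^2\theta=1$. The underlying skeleton (locate critical points, show $\Omega>0$ on the slit, use coercivity at $\infty$) is the same, but your use of the Schwarz function makes the critical-point step shorter and more conceptual, and the power-series positivity check is slightly more self-contained than the paper's monotonicity-in-$t$ argument. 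One small point worth making explicit if you were to write this up: the minimum-principle step needs $\Omega$ continuous up to the slit (which the paper notes follows from the purely imaginary jump of $S$ across $[-F_0,F_0]$) and coercivity $\Omega(z)\to\infty$ as $|z|\to\infty$ on the exterior component; you allude to both, and both are correct, but they deserve a sentence each.
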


\begin{proof} 
Let us define $\widetilde\Omega:\CC\to \RR$ by
\begin{equation*}
\widetilde\Omega(u):=	\Omega(\phi(u)) = |\phi(u)|^2-|\phi(1)|^2-2 {\rm Re}\int_1^u\phi(1/v)\,\dd\phi(v).
\end{equation*}
Since $\phi:\CC\setminus (\clos B_{\sqrt{t}}(0))\to \CC\setminus[-F_0,F_0]$ is conformal, it is enough to show that
\begin{equation}\label{eq:toprove}
\{u\in \CC\setminus B_{\sqrt{t}}(0)~|~\Omega(u)\leq 0\} =\partial\DD.
\end{equation}
Comparing the functions $g$ \eqref{eq:g} and $\psi$ \eqref{eq_psi} we see that 
\begin{align*}
g(\phi(u))=\bigg[\frac{t}{2(\psi(z))^2}-\log\left(\frac{2\sqrt{t}}{\psi(z) F_0} \right)\bigg]_{z=\phi(u)}= \frac{t}{2u^2}-\log\left(\frac{2\sqrt{t}}{u F_0} \right). 
\end{align*}
A straightforward calculation gives
\begin{align}
\widetilde\Omega(R\ee^{i\theta})
&=\frac{T (R^{-2}-1)}{1-t^2}\left(t^2-R^2+ (R^2-1) t \cos(2\theta) \right) -2T\log R.
\end{align}
The partial derivatives in $\theta$ and $R$ give
\begin{align}\label{eq_Omega_theta}
\frac{\partial\widetilde\Omega(R\ee^{i\theta})}{\partial \theta}&=\frac{2T}{1-t^2}(R-R^{-1})^2 t \sin(2\theta),
\\
\frac{\partial\widetilde\Omega(R\ee^{i\theta})}{\partial R}&=\frac{2T(R^{-1}-R^{-3})}{1-t^2}\left(R^2+t^2- (1+R^2) t \cos(2\theta)\right). \label{eq_Omega_R}
\end{align}
The first expression vanishes only when either $R=1$ or $\sin(2\theta)=0$.  When $R=1$, the expression \eqref{eq_Omega_R} vanishes.    One can also check that $\widetilde\Omega\equiv 0$ when $R=1$.
When $\sin(2\theta)=0$, the expression \eqref{eq_Omega_R} 
vanishes only when $R=\sqrt t$ and $R=1$.

We will show that $\widetilde\Omega>0$ for $R=\sqrt{t}$. There we find
\begin{align*}
\widetilde\Omega(\sqrt t\ee^{i\theta})=-\frac{2T(1-t)(\cos \theta)^2}{1+t}-T \log t\ge -\frac{2T(1-t)}{1+t}-T \log t, \qquad  0<t<1.
\end{align*}
The right-hand side goes to zero as $t$ goes to 1, and it is strictly positive for $0<t<1$ because its derivative is strictly negative, i.e.
\begin{align*}
-\frac{T(1-t)^2}{t(1+t)^2}<0, \qquad 0<t<1.
\end{align*}
Since $\widetilde\Omega$ goes to $+\infty$ as $R$ goes to $\infty$, we proved \eqref{eq:toprove}. 
\end{proof}
\begin{lemma}\label{lem:density}
Let $V_K$ be a compact subset of ${\rm Int}\,K$ and $U_K$ a neighborhood of $K$. Consider the limit where $N$ goes to $\infty$ while $n=NT$. Then there exists $\epsilon>0$ such that the following asymptotic behavior holds uniformly over all $z$ as specified below.
\begin{empheq}[left={\rho_n(z)=\empheqlbrace}]{align}
&\frac{1}{\pi} + {\cal O}(\ee^{-N\epsilon}),& z\in V_K,\label{eq:rhoout} \vspace{0.2cm}\\
&{\cal O}(\ee^{-N\epsilon}),& z\in U_K.\label{eq:rhoin}
\end{empheq}
\end{lemma}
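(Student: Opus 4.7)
The plan is to bound $|\partial\rho_n/\partial z|$ uniformly by an exponentially small quantity and then integrate toward two anchor points where $\rho_n$ is already known: the origin (Lemma \ref{lem:zero}) for the interior estimate, and infinity (where $\rho_n$ decays) for the exterior estimate.

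First I would combine Corollary \ref{cor_CD} with Proposition \ref{prop:crude}, applied to both $p_n$ and $p_{n-1}$, to obtain
\begin{equation*}
\left|\frac{\partial \rho_n}{\partial z}(z)\right|\le C\,N^{5/6}\,\ee^{-N\Omega(z)},\qquad z\in\CC,
\end{equation*}
using the factorization $|p_k(z)|^2\,\ee^{-NV(z)}=|p_k(z)\,\ee^{-N(Tg(z)-\ell/2)}|^2\,\ee^{-N\Omega(z)}$; the analogous bound for $\partial_{\bar z}\rho_n$ follows by conjugation as $\rho_n$ is real. By Lemma \ref{prop_Omega_minimum} together with the quadratic growth of $\Omega$ at infinity implied by \eqref{Omega1}, one obtains $\Omega\ge 2\epsilon>0$ uniformly both on a compact set $V'_K\subset{\rm Int}\,K$ containing $V_K\cup\{0\}$, and on $\CC\setminus U_K$.

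For $z\in V_K$ I would connect $z$ to the origin by the line segment $\gamma$, which lies in the convex set ${\rm Int}\,K$ (hence inside $V'_K$), and write
\begin{equation*}
\rho_n(z)-\rho_n(0)=2\,{\rm Re}\int_\gamma\partial_{z'}\rho_n(z')\,dz',
\end{equation*}
bounded by $CN^{5/6}\,{\rm length}(\gamma)\,\ee^{-2N\epsilon}$. Combining with the base value $\rho_n(0)=1/\pi+{\cal O}(\ee^{-N\epsilon})$ from Lemma \ref{lem:zero} and absorbing the polynomial prefactor yields \eqref{eq:rhoout}.

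For $z\notin U_K$ I would integrate along the radial ray $\gamma(r)=z(1+r/|z|)$, $r\ge 0$; since $K$ is centered at the origin, the ray moves monotonically away from $K$ and stays in $\CC\setminus U_K$, so $\Omega(\gamma(r))\ge 2\epsilon$ throughout. The explicit inequality $V(\gamma(r))\ge(1-t)(|z|+r)^2$, combined with only logarithmic growth of $-2T\,{\rm Re}\,g(\gamma(r))+\ell$ at infinity, also gives $\Omega(\gamma(r))\ge(1-t)r^2/2$ for $r$ beyond a threshold $r_0$ that does not depend on $z$. Using $\lim_{R\to\infty}\rho_n(\gamma(R))=0$ (itself a consequence of the Gaussian decay of $\ee^{-NV}$ dominating the polynomial growth of $|p_j|^2$) and integrating the derivative bound,
\begin{equation*}
|\rho_n(z)|\le CN^{5/6}\int_0^\infty\ee^{-N\Omega(\gamma(r))}\,dr={\cal O}(\ee^{-N\epsilon}),
\end{equation*}
which establishes \eqref{eq:rhoin}. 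The main subtlety is the uniformity in $z\in\CC\setminus U_K$ of the lower bound $2\epsilon$ and of the quadratic-decay threshold $r_0$; both can be extracted from the polar expression for $\widetilde\Omega=\Omega\circ\phi$ exploited in the proof of Lemma \ref{prop_Omega_minimum}, together with the elementary inequality $V(\gamma(r))\ge(1-t)|\gamma(r)|^2$, but some care is needed in the bookkeeping.
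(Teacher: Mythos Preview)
Your proof is correct, and for the interior estimate \eqref{eq:rhoout} it follows exactly the paper's argument: bound $|\partial_z\rho_n|\le CN^{5/6}\ee^{-N\Omega}$ via Corollary~\ref{cor_CD} and Proposition~\ref{prop:crude}, integrate along a segment to the origin inside the convex set $K$, and invoke Lemma~\ref{lem:zero}.

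For the exterior estimate \eqref{eq:rhoin} (which, incidentally, should read $z\notin U_K$) you take a genuinely different route. The paper does not anchor at infinity; instead it integrates from a single fixed exterior point $z_1$ to obtain $\rho_n(z)=\rho_n(z_1)+{\cal O}(\ee^{-N\epsilon})$ uniformly on $\CC\setminus U_K$ (the quadratic growth of $\Omega$ making the path integral converge), and then argues indirectly: since $\rho_n\ge 0$ has total mass $T<\infty$, a uniform lower bound $\rho_n(z_1)-C\ee^{-N\epsilon}$ on an unbounded region forces $\rho_n(z_1)={\cal O}(\ee^{-N\epsilon})$. This sidesteps precisely the bookkeeping you flag at the end---no need to produce a $z$-independent threshold $r_0$ for the quadratic regime of $\Omega$ along each individual ray, nor to claim the ray stays in $\CC\setminus U_K$ (which need not be star-shaped; what you actually use, and what is true because $K$ is convex and contains $0$, is that the ray stays at uniformly positive distance from $\partial K$). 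Your direct approach is sound and arguably more transparent, but the paper's integrability trick is shorter.
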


\begin{proof} From Corollary \ref{cor_CD} and Proposition \ref{prop:crude}, there exists a constant $C>0$ such that
\begin{align}\label{eq:rho-crude-deriv}
 \bigg|\frac{\partial\rho_n(\xi)}{\partial \xi}\bigg| &\leq\Big(\big|\overline{p_n(\xi)} p_{n-1}(\xi)\big| + t\big|\overline{p_{n-1}(\xi)} p_n(\xi)\big|\Big)\frac{\sqrt T\ee^{-N{\rm Re} V(\xi)}}{\sqrt{1-t^2}}
\leq
C N^{5/6}\ee^{-N\Omega(\xi)}.
\end{align}
Since $\Omega$ is strictly positive over a compact subset of ${\rm Int}\,K$ by Lemma \ref{prop_Omega_minimum}, the right-hand side is uniformly bounded by ${\cal O}(\ee^{-N\epsilon})$ where $\epsilon>0$ can be chosen to be smaller than the minimal value of $\Omega$ over the compact subset.  To prove \eqref{eq:rhoout} we only need to integrate $\partial_\xi\rho_n$ from the origin to the point $z$ via straight path and use the behavior of $\rho_n(0)$ in Lemma \ref{lem:zero}. 

Using the same idea, the estimate \eqref{eq:rho-crude-deriv} also holds over {\em the complement of a neighborhood $U_K$ of $K$}.  Again, an integration starting from a fixed point, say $z_1$, in the complement of $U_K$ gives that 
\begin{equation*}
  	\rho_n(z) = c(N) + {\cal O}(\ee^{-N\epsilon})
  \end{equation*}  
where the error term is uniform over the complement of $U_K$ and $c(N)$ is given by $\rho_n(z_1)$.   We note that, even though the complement of $U_K$ is unbounded, the integration converges because $\Omega(\xi)$ grows quadratically as $|\xi|\to\infty$.    Then we must have $c(N)={\cal O}(\ee^{-N\epsilon})$ because, otherwise, $\int_{\CC\setminus U_K}\rho_n\dd A$ diverges.  This proves \eqref{eq:rhoin}.  
\end{proof}
 
Now we prove Theorem \ref{thm2}.

For any $z$ and $w$ the pre-kernel is given by
\begin{align}\label{eq:pre-Kernel-int}
\frac{1}{N}{\cal K}_n(z,w)=\frac{1}{N}{\cal K}_n(w,w)+\frac{1}{N}\int_w^z \frac{\partial {\cal K}_n}{\partial \xi}(\xi,w) \dd\xi.
\end{align}
The upper bound of the integrand comes from Proposition \ref{prop_CD} and Proposition \ref{prop:crude}:
\begin{align}\nn
\frac{1}{N} \left|\frac{\partial {\cal K}_n}{\partial \xi} (\xi,w)\right|
&\leq
\frac{\sqrt{T}}{\sqrt{1-t^2}}\Big(t\,\big| p_{n}(\xi)\overline{p_{n-1}(w)}\big|+ \big|p_{n-1}(\xi)\overline{p_n(w)}\big|\Big)\exp\bigg[\!\!-N{\rm Re}\Big( \xi\overline w -\frac{t}{2}\xi^2-\frac{t}{2}\overline w^2\Big)\bigg]
\\\label{eq:exp-small}&\leq  
C\frac{\sqrt{T}}{\sqrt{1-t^2}} N^{5/6}
\exp\bigg[\!\!-N{\rm Re}\Big( \xi\overline w -\frac{t}{2}\xi^2-\frac{t}{2}\overline w^2-T g(\xi)-Tg(w)+\ell\Big)\bigg]
\end{align}
for some $C>0$ and for all $N$ sufficiently large.
The exponent can be further simplified by 
\begin{equation}\label{eq:32}
	{\rm Re}\Big( \xi\overline w -\frac{t}{2}\xi^2-\frac{t}{2}\overline w^2-T g(\xi)-Tg(w)+\ell\Big)
	=-\frac{1}{2}|\xi-w|^2 +\frac{1}{2}\Omega(\xi) +\frac{1}{2}\Omega(w).
\end{equation}
By Proposition \ref{prop_Omega_minimum} we can define $\epsilon$ such that
\begin{equation*}
0<	2\epsilon<\min_{w\in V_K}\Omega(w).
\end{equation*}
When $\xi=w\in V_K$, the right-hand side of \eqref{eq:32} is greater than $2\epsilon$.  
Then there exists sufficiently small $\delta$ such that when $|\xi-w|<\delta$ the expression \eqref{eq:32} remains bigger than $\epsilon$ for any $w\in V_K$.  This implies that the upper bound \eqref{eq:exp-small} is given by ${\cal O}(\ee^{-N\epsilon})$ uniformly over $w\in V_K$.  The first case of Theorem \ref{thm2} follows from \eqref{eq:pre-Kernel-int} and Lemma \ref{lem:density}.

For the second case of $z\notin B_\delta(w)$, we will use
\begin{equation}\label{eq:error333}
	\frac{1}{N} \left|\frac{\partial {\cal K}_n}{\partial \xi} (\xi,w)\right| \exp\left(-\frac{N}{2}|\xi-w|^2\right)={\cal O}(\ee^{-N\epsilon}\ee^{-N\Omega(\xi)/2}), \qquad\xi\in\CC,
\end{equation}
which can be noticed from \eqref{eq:32}.  
Using this estimate, we get, by integration of $\partial_\xi {\cal K}_n(\xi,w)$ along the straight path from $w$ to $z$,
\begin{align*}
\frac{1}{N}{\cal K}_n(z,w) \ee^{-\frac{N}{2}|z-w|^2}&=\frac{1}{N}{\cal K}_n(w,w) \ee^{-\frac{N}{2}|z-w|^2}+\mathcal{O}(\ee^{-N\epsilon}), &&\text{$z\notin B_\delta(w)$}.
\end{align*}
We can use Lemma \ref{lem:density} to evaluate $\frac{1}{N}{\cal K}_n(w,w)$ for $w\in V_K$. Going back to the original kernel $K_n(z,w)$ by \eqref{eq:pre} proves the second case of Theorem \ref{thm2}. Again, as in the  proof of Lemma \ref{lem:density}, even though $z$ can be in the unbounded $\CC$, the error bound above is uniform since the error in \eqref{eq:error333} has $\Omega(\xi)$ in the exponent which grows quadratically for large $|\xi|$.

The last case when $w\notin U_K$ follows from choosing $\epsilon$ such that
\begin{equation*}
	0< 2\epsilon<\min_{w\notin U_K} \Omega(w).
\end{equation*}
The rest of the proof remains the same as in the second case.

\section{Density near the boundary}\label{sec_density}

We will prove Theorem \ref{thm1} using the Christoffel-Darboux formula in Proposition \ref{prop_CD} and the asymptotics of the orthonormal polynomials $p_n$ obtained in Section \ref{sec:WKB}.

For any complex number ${\bf v}=v_x+i v_y$ let us denote 
$$\partial_{\bf v} f(z)=({\bf v} \partial+ \cc{{\bf v}} \cc{\partial}) f(z)$$ where $\partial=\frac{\partial}{\partial z}=\frac{1}{2}\left(\frac{\partial}{\partial x}-i\frac{\partial}{\partial y}\right)$ and $\cc{\partial}=\frac{\partial}{\partial \cc{z}}=\frac{1}{2}\left(\frac{\partial}{\partial x}+i\frac{\partial}{\partial y}\right)$.

Let $\gamma:\RR\to \partial K$ be the counter-clockwise parametrization given by
\begin{equation*}
  \gamma(\theta)=\phi(e^{i\theta}).
\end{equation*}
The (outward) normal vector, ${\bf n}$, and the tangential vector, ${\bf t}$, at $\phi(\ee^{i\theta})$ are given by
\begin{equation*}
	{\bf t}=i {\bf n}=\frac{\gamma'(\theta)}{|\gamma'(\theta)|}.
\end{equation*}
We will denote the derivatives in normal direction and tangential direction by $\partial_{\bf n}$ and $\partial_{\bf t}$ respectively.

We denote the {\em derivative with respect to the arclength} by $\partial_s$, i.e.
$\partial_s = |\phi'|^{-1}\partial_\theta$. Obviously $\partial_s=\partial_{\bf t}$.  Generally, $\partial^k_s\neq \partial^k_{\bf t}$ for $k>1$.

\vspace{0.2cm}
\begin{prop}\label{prop_dv_density}
Let $z\in \CC\setminus [-F_0,F_0]$ and ${\bf v}\in \CC$. As $n=NT$ goes to $\infty$, we have the asymptotic expansion
\begin{align}\label{eq:23} 
\partial_{\bf v} \rho_n(z)&=\left(\frac{2N}{\pi^3}\right)^{1/2}  \ee^{-N \Omega(z)}   {\rm Re}\left[ \frac{\sqrt{T}(\cc{{\bf v}} t-{\bf v})}{\sqrt{1-t^2}} \frac{|\psi'(z) |}{ \psi(z) } \left(1+\frac{\cc{h_0(z)}+h_{-1}(z)}{N}+ \mathcal{O}\bigg(\frac{1}{N^2}\bigg)\right) \right].   
\end{align}
The error bound holds uniformly on any compact subset of $\CC\setminus [-F_0,F_0]$.
\end{prop}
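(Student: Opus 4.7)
Since $\rho_n$ is real-valued, $\partial_{\bar z}\rho_n=\overline{\partial_z\rho_n}$, and hence $\partial_{\bf v}\rho_n(z)=2\,{\rm Re}\bigl[{\bf v}\,\partial_z\rho_n(z)\bigr]$. I would start by substituting Corollary \ref{cor_CD} into this identity and expanding the products $p_n(z)\overline{p_{n-1}(z)}$ and $p_{n-1}(z)\overline{p_n(z)}$ via Lemma \ref{prop-WKB} with $r=0$ and $r=-1$.

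Two clean simplifications then occur. First, for any branch of the square root, $\sqrt{\psi'(z)}\,\overline{\sqrt{\psi'(z)}}=|\psi'(z)|$. Second, the exponent $n\,g(z)-N\ell/2$ appearing in $p_j$, paired with its conjugate from $\overline{p_k}$ and with the external $e^{-NV(z)}$ in Corollary \ref{cor_CD}, collapses to $e^{-N\Omega(z)}$ by the very definition $\Omega(z)=V(z)-2T\,{\rm Re}\,g(z)+\ell$. Expanding the remaining subleading exponentials as $1+(h_r+\overline{h_s})/N+\mathcal{O}(N^{-2})$ yields
\begin{align*}
p_n(z)\overline{p_{n-1}(z)}\,e^{-NV(z)}&=\sqrt{\frac{N}{2\pi^3}}\,\frac{|\psi'(z)|}{\overline{\psi(z)}}\,e^{-N\Omega(z)}\left(1+\frac{h_0(z)+\overline{h_{-1}(z)}}{N}+\mathcal{O}(N^{-2})\right),\\
p_{n-1}(z)\overline{p_n(z)}\,e^{-NV(z)}&=\sqrt{\frac{N}{2\pi^3}}\,\frac{|\psi'(z)|}{\psi(z)}\,e^{-N\Omega(z)}\left(1+\frac{h_{-1}(z)+\overline{h_0(z)}}{N}+\mathcal{O}(N^{-2})\right),
\end{align*}
whose correction factors are complex conjugates of each other.

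The final step is to multiply by ${\bf v}$ and apply $2\,{\rm Re}$. Using $2\,{\rm Re}[X]=X+\overline{X}$ converts the $1/\overline{\psi(z)}$ contribution into $2\,{\rm Re}\bigl[\overline{{\bf v}}\,t/\psi(z)\cdot(1+(\overline{h_0(z)}+h_{-1}(z))/N)\bigr]$, which now shares the same correction factor as the $-{\bf v}/\psi(z)$ contribution. The two prefactors then merge into $(\overline{{\bf v}}\,t-{\bf v})/\psi(z)$, multiplied by the common factor $1+(\overline{h_0(z)}+h_{-1}(z))/N+\mathcal{O}(N^{-2})$; the constant $2\sqrt{N/(2\pi^3)}=\sqrt{2N/\pi^3}$ then reproduces exactly \eqref{eq:23}.

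The derivation is essentially algebraic, so the main obstacle is bookkeeping: one has to track the complex conjugates carefully in order to recognize that the two \emph{a priori} different correction factors arising from $p_n\overline{p_{n-1}}$ and $p_{n-1}\overline{p_n}$ really collapse into the single factor asserted in the proposition. Uniformity of the $\mathcal{O}(N^{-2})$ error on compact subsets of $\CC\setminus[-F_0,F_0]$ is inherited directly from the corresponding uniformity in Lemma \ref{prop-WKB}, together with the fact that $\psi,\psi',h_0,h_{-1}$ are bounded and $\psi,\psi'$ bounded away from $0$ on any such compact set.
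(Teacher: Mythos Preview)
Your proposal is correct and follows essentially the same approach as the paper. The only cosmetic difference is the order of operations: the paper first uses the conjugate trick $2\,{\rm Re}[{\bf v}\,t\,p_n\overline{p_{n-1}}]=2\,{\rm Re}[(\overline{\bf v}\,t)\,\overline{p_n}\,p_{n-1}]$ at the level of the polynomials to collapse Corollary~\ref{cor_CD} into the single expression $2\,{\rm Re}\bigl[(\overline{\bf v}\,t-{\bf v})\,\overline{p_n}\,p_{n-1}\bigr]\frac{\sqrt{T}\,e^{-NV}}{\sqrt{1-t^2}}$ and then substitutes Lemma~\ref{prop-WKB} once, whereas you expand both products $p_n\overline{p_{n-1}}$ and $p_{n-1}\overline{p_n}$ separately and merge afterward; both routes are equivalent and lead to the same computation.
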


\begin{proof}
From the Christoffel-Darboux formula \eqref{eq_CD2} we get
\begin{align}\label{eq:35}
\partial_{\bf v} \rho_n(z)&=2\,{\rm Re}\Big( (\cc{{\bf v}} t-{\bf v}) \overline{p_n(z)}\, p_{n-1}(z) \Big) \frac{\sqrt{T} \ee^{ -N V(z)}}{\sqrt{1-t^2} }.   
\end{align}
From Lemma \ref{prop-WKB} we have
\begin{align*}
p_n(z)&=\left(\frac{N}{2\pi^3}\right)^{1/4} \sqrt{\psi'(z)}\, \ee^{N(T g(z)-\ell/2)} \left(1+\frac{h_{0}(z)}{N} +\mathcal{O}\bigg(\frac{1}{N^2}\bigg)\right),
\intertext{and}
p_{n-1}(z)&=\left(\frac{N}{2\pi^3}\right)^{1/4} \frac{\sqrt{\psi'(z)}}{ \psi(z) } \,\ee^{N(T g(z)-\ell/2)} \left(1+\frac{h_{-1}(z)}{N} +\mathcal{O}\bigg(\frac{1}{N^2}\bigg)\right),
\end{align*}
where the error terms hold uniformly on any compact subset of $\CC\setminus [-F_0,F_0]$. Putting these into \eqref{eq:35} proves the proposition.
\end{proof}

\subsection{Expansion of $\Omega$ near the boundary}

Recall that $\partial_s$ is the derivative with respect to the arclength and that ${\bf n}$ is the (outward) normal vector at the boundary $\partial K$. The (signed) curvature of the curve $\partial K$ is given by
\begin{align}\label{eq_kappa}
\kappa=\frac{\dd}{\dd s}\arg {\bf n}= -i \partial_s \log {\bf n}=-\frac{i}{{\bf n}}\partial_s {\bf n}.
\end{align}

On the boundary we find the following relation between ${\bf n}$, $\kappa$ and $S$.

\vspace{0.2cm}
\begin{lemma}\label{lem_ntkappa}
The following identity holds for $z_0\in\partial K$.
\begin{equation}\label{eq:3eq}
  {\bf n}^2 S'(z_0)=-1,\quad {\bf n}^3 S''(z_0)=2\kappa,\quad {\bf n^4}S'''(z_0)=-6\kappa^2-2i\partial_{s}\kappa.
\end{equation}
\end{lemma}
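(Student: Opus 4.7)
The approach is to differentiate the defining Schwarz identity $S(z_0(s))=\overline{z_0(s)}$ repeatedly along the boundary, parametrized by the counter-clockwise arclength $s$, and solve for $S^{(k)}(z_0)$ one order at a time. Recall $z_0'(s)={\bf t}=i{\bf n}$, and $|{\bf n}|=1$ gives $\bar {\bf n}=1/{\bf n}$. From \eqref{eq_kappa}, $\partial_s\log{\bf n}=i\kappa$, hence $\partial_s {\bf n}=i\kappa {\bf n}$ and $\partial_s {\bf t}=-\kappa {\bf n}$.

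\textbf{Step 1 (first identity).} Differentiating $S(z_0(s))=\overline{z_0(s)}$ once yields $S'(z_0){\bf t}=\bar{\bf t}$, i.e.\ $S'(z_0)=\bar{\bf t}/{\bf t}=-i\bar{\bf n}/(i{\bf n})=-1/{\bf n}^2$, so ${\bf n}^2 S'(z_0)=-1$.

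\textbf{Step 2 (second identity).} Differentiating again gives
\[
S''(z_0){\bf t}^2+S'(z_0)\partial_s {\bf t}=\partial_s\bar{\bf t}.
\]
Substituting ${\bf t}^2=-{\bf n}^2$, $\partial_s {\bf t}=-\kappa{\bf n}$, $\partial_s\bar{\bf t}=-\kappa\bar{\bf n}$, and the value of $S'(z_0)$ from Step 1, one gets $-S''(z_0){\bf n}^2+\kappa\bar{\bf n}=-\kappa\bar{\bf n}$, hence $S''(z_0)=2\kappa/{\bf n}^3$, i.e.\ ${\bf n}^3 S''(z_0)=2\kappa$.

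\textbf{Step 3 (third identity).} Applying $\partial_s$ once more to $S''(z_0){\bf t}^2+S'(z_0)\partial_s{\bf t}=\partial_s\bar{\bf t}$ gives
\[
S'''(z_0){\bf t}^3+3S''(z_0){\bf t}\,\partial_s{\bf t}+S'(z_0)\partial_s^2{\bf t}=\partial_s^2\bar{\bf t}.
\]
Using $\partial_s^2{\bf t}=-(\partial_s\kappa){\bf n}-\kappa(i\kappa{\bf n})=-[(\partial_s\kappa)+i\kappa^2]{\bf n}$, together with ${\bf t}^3=-i{\bf n}^3$ and ${\bf t}\,\partial_s{\bf t}=-i\kappa{\bf n}^2$, and inserting the expressions for $S'(z_0),S''(z_0)$ obtained above, one simplifies both sides (everything collapses to multiples of $\bar{\bf n}$) and solves algebraically for ${\bf n}^4S'''(z_0)$. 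The result is ${\bf n}^4S'''(z_0)=-6\kappa^2-2i\partial_s\kappa$, as claimed.

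This is a purely kinematic computation: there is no real obstacle beyond careful algebraic bookkeeping (tracking factors of $i$, powers of ${\bf n}$ versus $\bar{\bf n}$, and sign conventions for $\kappa$). A quick sanity check in the circular case $\partial K = \{|z|=R\}$ gives $S(z)=R^2/z$, $\kappa=1/R$ (constant), ${\bf n}=z_0/R$, and all three identities reduce to easily verifiable equalities, which can be used to fix the signs if any arise during the differentiation.
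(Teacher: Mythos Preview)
Your proof is correct and follows essentially the same strategy as the paper: differentiate the Schwarz identity along the boundary using the arclength parameter and the Frenet relations $\partial_s{\bf n}=i\kappa{\bf n}$, $\partial_s{\bf t}=-\kappa{\bf n}$. The only minor difference is organizational: the paper obtains the first identity via the conformal-map relation $S(\phi(u))=\overline{\phi(1/\bar u)}$ (differentiated in $u$ and evaluated on $|u|=1$), and then derives the second and third identities by applying $\partial_s$ directly to the already-established equations ${\bf n}^2 S'(z_0)=-1$ and ${\bf n}^3 S''(z_0)=2\kappa$; you instead differentiate the single relation $S'(z_0){\bf t}=\bar{\bf t}$ repeatedly. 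Both routes amount to the same bookkeeping, and your version has the slight advantage of making no reference to the conformal map, so it is manifestly valid for any real-analytic Jordan curve (a point the paper notes in a remark after its proof).
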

Analogous relations can be found in \cite{davis1974schwarz}.
\begin{proof}
Let $\gamma:\theta\mapsto\phi(\ee^{i\theta})$ be a parametrization of $\partial K$.  Then we get the tangential vector ${\bf t}$  by 
$${\bf t}=\frac{\partial_\theta\gamma(\theta)}{|\gamma'(\theta)|}=\frac{i u \,\phi'(u)}{|\phi'(u)|}\bigg|_{u=e^{i\theta}}=i |\psi'(z_0)| \frac{\psi(z_0)}{\psi'(z_0)}, \quad z_0=\phi(\ee^{i\theta}). $$
The Schwarz function satisfies $S(\phi(u))=\overline{\phi(1/\overline u)}$ in a neighborhood of $\partial K$.   Taking the derivative in $u$ we get
 $$S'(\phi(u))\phi'(u)=-\frac{\cc{\phi'(1/\overline u)}}{u^{2}} $$ 
and, when $z_0\in \partial K$, we have
\begin{equation}\label{eq:Sn}
S'(z_0)=-\frac{\psi'(z_0)}{\cc{\psi'(z_0)} (\psi(z_0))^2} = -\frac{1}{{\bf n}(z_0)^2} 
\end{equation} 
using ${\bf n}=-i{\bf t}$.  This proves the first equation of \eqref{eq:3eq}.

Taking the $\partial_s$-derivative on both sides of \eqref{eq:Sn} we get
\begin{equation}
  i{\bf n}(z_0) S''(z_0)=2\frac{1}{{\bf n}(z_0)^3} i{\bf n}(z_0)\kappa
\end{equation}
where we used $\partial_s=i{\bf n}\partial - i\overline{\bf n}\overline\partial$  on the left-hand side, and \eqref{eq_kappa} on the right-hand side.  This proves the second equation of \eqref{eq:3eq}.

The last equation is obtained by taking $\partial_s$-derivative of the second equation in the lemma.  (When repeated, such process can yield the relations with higher order derivatives than are shown in the lemma.)
\begin{equation*}
  3{\bf n}^2 (i {\bf n}) \kappa S'' + i {\bf n}^4 S''' = 2\partial_s\kappa.
\end{equation*}
Using the second equation in the lemma to replace $S''$, one gets the last equation.
\end{proof}

{\rm Remark.} Lemma \ref{lem_ntkappa} and the following lemma are both true for more general droplets.  Note that in the proof we never use the fact that $K$ is an ellipse. 
\vspace{0.2cm}

\begin{lemma}\label{prop_omega}
Let $z_0\in \partial K$.   Let $X, Y\in \RR$ be defined by
$$ X {\bf n}(z_0)+i  Y {\bf n}(z_0)= z -z_0.$$
For any $\nu$ such that $0\leq \nu<1/6$, as $N$ grows to infinity the following asymptotic expansions holds
\begin{align}
\ee^{-N\Omega(z)}=\ee^{-2N  X^2}\bigg(1&+\frac{2\kappa N}{3} (X^3-3 XY^2 )-\frac{\kappa^2 N}{2} (X^4-6 X^2 Y^2+ Y^4) \nn \\ &+ \frac{2\partial_s \kappa N}{3}(X^3 Y-X Y^3 ) +\frac{2\kappa^2 N^2}{9} (X^3-3 XY^2 )^2 +\mathcal{O}(N^{3+9(\nu-1/2)})\bigg). \label{eq_exp_omega}
\end{align}
uniformly over $|z-z_0|<N^{\nu-1/2}$ and over $z_0\in\partial K$.
\end{lemma}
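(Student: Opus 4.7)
My plan is to Taylor expand $\Omega$ around the boundary point $z_0$ up to fourth order using the representation \eqref{Omega1}, then exponentiate. From $\Omega(z) = |z|^2 - |z_0|^2 - 2\,{\rm Re}\int_{z_0}^z S(\zeta)\dd\zeta$ and Wirtinger calculus, I get $\partial_z\Omega = \bar z - S(z)$ and $\cc\partial_z\Omega = z - \cc{S(z)}$, hence $\partial\cc\partial\Omega = 1$ and \emph{every} mixed derivative $\partial^j\cc\partial^k\Omega$ with $j,k\geq1$ and $j+k\geq3$ vanishes identically. Since $S(z_0)=\cc{z_0}$, the first derivatives vanish at $z_0$, so only the pure holomorphic and anti-holomorphic derivatives $\partial^k\Omega(z_0) = -S^{(k-1)}(z_0)$ and their conjugates show up beyond the pure quadratic $|z-z_0|^2$ cross-term. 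This reduces the problem to computing $S'(z_0), S''(z_0), S'''(z_0)$, which are supplied by Lemma \ref{lem_ntkappa}.

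Substituting $z - z_0 = {\bf n}(X+iY)$ and using ${\bf n}^k S^{(k-1)}(z_0)$ values $-1$, $2\kappa$, $-6\kappa^2 - 2i\partial_s\kappa$, the factors of ${\bf n}$ cancel cleanly. Collecting real/imaginary parts of $(X+iY)^k$, in particular ${\rm Re}(X+iY)^3 = X^3 - 3XY^2$, ${\rm Re}(X+iY)^4 = X^4-6X^2Y^2+Y^4$, ${\rm Im}(X+iY)^4 = 4(X^3Y - XY^3)$, the Taylor expansion collapses to
\begin{equation*}
\Omega(z) = 2X^2 - \tfrac{2\kappa}{3}(X^3 - 3XY^2) + \tfrac{\kappa^2}{2}(X^4 - 6X^2Y^2 + Y^4) - \tfrac{2\partial_s\kappa}{3}(X^3Y - XY^3) + \mathcal{O}(|z-z_0|^5),
\end{equation*}
uniformly in $z_0\in\partial K$ (since the curvature data and the Schwarz function depend smoothly on $z_0$).

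Multiplying by $-N$ I write $-N\Omega(z) + 2NX^2 = A + B + R$, where $A = \tfrac{2\kappa N}{3}(X^3 - 3XY^2) = \mathcal{O}(N^{3\nu - 1/2})$ is the cubic contribution, $B = -\tfrac{\kappa^2 N}{2}(X^4 - 6X^2Y^2 + Y^4) + \tfrac{2\partial_s\kappa N}{3}(X^3Y - XY^3) = \mathcal{O}(N^{4\nu - 1})$ is the quartic contribution, and $R = \mathcal{O}(N^{5\nu - 3/2})$ bounds the fifth-order and higher Taylor remainder. Since $\nu < 1/6$ gives $|A|\to 0$, the series $e^{A+B+R} = 1 + (A+B+R) + \tfrac{1}{2}(A+B+R)^2 + \cdots$ is legitimately truncated after the terms $1 + A + B + \tfrac{A^2}{2}$, with $\tfrac{A^2}{2} = \tfrac{2\kappa^2 N^2}{9}(X^3 - 3XY^2)^2$ reproducing the quadratic-in-$N$ term in the statement.

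The technical step is the error bookkeeping: one lists every discarded product and checks its order. $R$ gives $\mathcal{O}(N^{5\nu-3/2})$, $AB$ gives $\mathcal{O}(N^{7\nu-3/2})$, $\tfrac{B^2}{2}$ and $AR$ give $\mathcal{O}(N^{8\nu-2})$, $\tfrac{A^3}{6}$ gives exactly $\mathcal{O}(N^{9\nu-3/2})$, and for $k\geq 4$ the term $A^k/k!$ is $\mathcal{O}(N^{3k\nu-k/2})$ with $3k\nu - k/2 \leq 9\nu - 3/2$ precisely when $\nu \leq 1/6$. The main obstacle — and the reason the bound $\nu<1/6$ appears — is this cubic-squared term $\tfrac{A^3}{6}$, which saturates the error at $N^{3+9(\nu-1/2)}$ and forces the stated constraint on $\nu$; all other contributions are strictly smaller in that regime.
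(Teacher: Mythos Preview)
Your proof is correct and follows essentially the same approach as the paper: Taylor expand $\Omega$ to fourth order via the Schwarz-function derivatives supplied by Lemma~\ref{lem_ntkappa}, then exponentiate and track the dominant error $A^3/6=\mathcal{O}(N^{9\nu-3/2})$. The only cosmetic difference is that you organize the Taylor expansion with Wirtinger derivatives $\partial,\bar\partial$ (exploiting that mixed derivatives beyond $\partial\bar\partial\Omega$ vanish identically), whereas the paper computes the directional derivatives $\partial_{\bf n}^j\partial_{\bf t}^k\Omega(z_0)$ one by one; both routes land on the same expansion~\eqref{eq_omega_approx} and the same error bookkeeping.
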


\begin{proof}
Using $\partial_{\bf n}={\bf n}\partial + \overline{\bf n}\overline\partial$ and $\partial_{\bf t}=i{\bf n}\partial -i \overline{\bf n}\overline\partial$ and the first equation in \eqref{eq:3eq}, one finds that
\begin{equation}\label{eq:k2}
  \partial_{\bf n} \Omega(z_0) =   \partial_{\bf t} \Omega(z_0)= \partial_{\bf n}\partial_{\bf t} \Omega(z_0)= 
  \partial_{\bf t}^2 \Omega(z_0) =0. 
\end{equation}
As an example, let us prove the last equality.
\begin{equation*}\begin{split}
 \partial_{\bf t}^2 \Omega(z_0)= (i{\bf n}\partial-i\overline{\bf n}\overline\partial)^2\Omega(z)\big|_{z=z_0}
 &= (i{\bf n}\partial-i\overline{\bf n}\overline\partial) \left(i{\bf n}(\overline z- S(z)-i\overline {\bf n}(z- \overline{S(z)}\right)\Big|_{z=z_0}
 \\
 &= \left({\bf n}^2 S'(z)+\overline{\bf n}^2\overline{S'(z)}+2\right)\Big|_{z=z_0} = 0,
 \end{split}
\end{equation*}
where, in the last equality, we used the first equation in Lemma \ref{lem_ntkappa}.  Similarly, we get
\begin{equation}\label{eq:k22}
  \partial^2_{\bf n} \Omega(z_0)
  = \left(-{\bf n}^2 S'(z)-\overline{\bf n}^2\overline{S'(z)}+2\right)\Big|_{z=z_0}=4,
\end{equation}
using the first equation in Lemma \ref{lem_ntkappa}.

In the derivatives of order three or more, the gaussian term $|z|^2$ in $\Omega(z)$ does not contribute.  So we have
\begin{equation*}
  \partial_{\bf v_1} \partial_{{\bf v}_2} \cdots \partial_{{\bf v}_k} \Omega(z)=-2 \partial_{\bf v_1} \partial_{{\bf v}_2} \cdots \partial_{{\bf v}_k} {\rm Re}\left(\int^z S(\zeta)\dd \zeta\right)= -2{\rm Re}\left({\bf v}_1{\bf v}_2 \cdots {\bf v}_k S^{(k-1)}(z)\right),
\end{equation*}
for $k\geq 3$.  Using the second and the third equations in Lemma \ref{lem_ntkappa}, one can obtain the following identities for the third order derivatives:
\begin{equation}\label{eq:k3}
  \partial_{\bf n}^3\Omega(z_0)=-\partial_{\bf n}\partial_{\bf t}^2\Omega(z_0)=-4\kappa, \quad \partial_{\bf t}^3\Omega(z_0)=-\partial_{\bf n}^2\partial_{\bf t}\Omega(z_0)=0,
\end{equation}
and for the fourth order derivatives:
\begin{equation}\label{eq:k4}
  \partial_{\bf n}^4\Omega(z_0)=\partial_{\bf t}^4\Omega(z_0)=-\partial_{\bf n}^2\partial_{\bf t}^2\Omega(z_0)=12\kappa^2,\quad \partial_{\bf n}^3\partial_{\bf t}\Omega(z_0)=-\partial_{\bf n}\partial_{\bf t}^3\Omega(z_0)=-4\partial_s\kappa. 
\end{equation}

Since both the real and the imaginary parts of $\Omega$ are real analytic, we have  
\begin{align}\label{eq_taylor}
\Omega(z)&=\sum_{k=0}^l \frac{(X\partial_{{\bf n}}+Y\partial_{{\bf t}} )^{k} \Omega(z_0)}{k!}  + {\cal O}(|z-z_0|^{l+1})
\end{align}
uniformly over $|z-z_0|<\delta$ and over $z_0\in\partial K$, for a sufficiently small $\delta$.
Putting \eqref{eq:k2}\eqref{eq:k22}\eqref{eq:k3}\eqref{eq:k4} into the Taylor expansion, we get
\begin{align}\label{eq_omega_approx}
\Omega(z)&=2 X^2-\frac{2\kappa}{3} (X^3-3 XY^2 )+\frac{\kappa^2}{2} (X^4-6 X^2 Y^2+ Y^4)- \frac{2\partial_s \kappa}{3}(X^3 Y-X Y^3 )+{\cal O}(|z-z_0|^5).\end{align}
Multiplying the above by $N$ and exponentiating, we get
\begin{align*}
\exp\big(-N\Omega(z)\big)=\ee^{-2 N X^2}\bigg(1&+ N\frac{2\kappa}{3} (X^3-3 XY^2 )+N^2\frac{2\kappa^2}{9} (X^3-3 XY^2 )^2
\\&-N\frac{\kappa^2}{2} (X^4-6 X^2 Y^2+ Y^4)+ N\frac{2\partial_s \kappa}{3}(X^3 Y-X Y^3 )
\\&\qquad\qquad\ \,  +{\cal O}\left(N|z-z_0|^5,N^3|z-z_0|^9,N^2|z-z_0|^7\right)\bigg),
\end{align*}
uniformly over $|z-z_0|<N^{-1/3}$ and over $z_0\in\partial K$, for a sufficiently large $N$ because the exponential function converges uniformly in a bounded region.  The lemma follows because, among the error terms, $N^3|z-z_0|^9$ contributes the most.
\end{proof}

\subsection{Proof of Theorem \ref{thm:1b}}

Theorem \ref{thm:1b} that we prove in this section is a stronger version of Lemma \ref{lem:density}. The theorem gives an estimate in a neighborhood of $\partial K$ that scales arbitrary close to the boundary as $N$ goes to infinity.

\begin{proof}
From \eqref{eq_omega_approx}
we have 
\begin{align}\label{eq_expansion2}
\Omega(z)&=2X^2+ {\cal O}(|z-z_0|^{3}),
\end{align}
uniformly over $|z-z_0|<\delta$ and over $z_0\in\partial K$, for a sufficiently small $\delta>0$.  

We denote the neighborhood of $\partial K$ of width $2\delta$ by
\begin{align*}
U_{\delta}&=\{z_0+ X {\bf n}(z_0)\ |\ z_0\in\partial K, |X| <\delta \}. 
\end{align*}
For any 
$z\in U_\delta\setminus U_{\tau,N}$ there exists $z_0\in\partial K$ such that $z=z_0+X {\bf n}$ with $N^{-1/2+\tau} \le |X|<\delta$.
From \eqref{eq_expansion2}, we have 
$ \Omega(z) \ge 2 N^{-1+2\tau} + {\cal O}(N^{-3/2+3\tau}) > N^{-1+2\tau}$ on $z\in U_\delta\setminus U_{\tau,N}$ for any sufficiently large $N$. 
This leads to
\begin{equation}\label{eq:UU}
 \ee^{-N\Omega(z)}<\ee^{-N^{2\tau}}~~ \text{ on $z\in U_\delta\setminus U_{\tau,N}$.}
\end{equation}
As in Lemma \ref{lem:density} we now use \eqref{eq:rho-crude-deriv} to get a uniform bound for the derivative of the density on $U_\delta\setminus U_{\tau,N}$. 
We get the density at the point $z=z_0+X{\bf n} \in U_\delta\setminus U_{\tau,N}$ with $z_0\in \partial K$ and $N^{-1/2+\tau} \le |X|<\delta $ by integrations from starting points $z_0-\delta {\bf n}\in {\rm Int}\, K$ or from $z_0+\delta {\bf n}\in {\rm Ext}\, K$.
When $N$ is large enough, from \eqref{eq:rho-crude-deriv} and \eqref{eq:UU}, the contribution from these integrations are bounded by
$$\left|\rho_n(z)-\rho_n(z_0\pm \delta{\bf n})\right| \leq  \delta \cdot C N^{5/6} \ee^{-N^{2\tau}} \quad \text{ when }z\in U_\delta\setminus U_{\tau,N}.$$ From Lemma \ref{lem:density} we know that at the starting points $z_0\mp\delta {\bf n} \in \partial U_\delta$ the density equals $1/\pi + {\cal O}(\ee^{-N\epsilon})$ on ${\rm Int}(K)\setminus U_\delta$ and ${\cal O}(\ee^{-N\epsilon})$ on ${\rm Ext}(K)\setminus U_\delta$ when $\epsilon>0$ is choosen small enough.
\end{proof}

\subsection{Proof of Theorem \ref{thm1}}

The proof of the following three lemmas are in Appendix \ref{app-proof3}.

\vspace{0.2cm}
\begin{lemma}\label{prop_psi} Let  $z_0\in \partial K$. 
  Let $X, Y\in \RR$ be defined by
$$ X {\bf n}(z_0)+i  Y {\bf n}(z_0)= z -z_0.$$
The following expansion holds as $z$ goes to $z_0$
\begin{align*}
\frac{|\psi'(z)|}{\psi(z)}&= \frac{|\psi'(z_0)|}{\psi(z_0)}\left(1+\Theta_{1,0}(z_0) X +\Theta_{0,1}(z_0) Y+ \frac{\Theta_{2,0}(z_0)}{2} X^2 + \Theta_{1,1}(z_0) X Y+ \frac{\Theta_{0,2}(z_0)}{2} Y^2 \right)
\\ &\qquad + \mathcal{O}\left(|z-z_0|^3\right),
\end{align*}
uniformly over $|z-z_0|<\delta$ and over $z_0\in\partial K$, for a sufficiently small $\delta$.  We used the definitions:
\begin{align*}
\Theta_{1,0}(z_0)&=-\kappa, \qquad\quad \Theta_{0,1}(z_0)= \frac{1}{3}\frac{\partial_s\kappa}{\kappa}-i|\psi'(z_0)|,\qquad\quad \Theta_{1,1}(z_0)= -\frac{5}{3} \partial_s \kappa + 2 i \kappa|\psi'(z_0)|, \\
\Theta_{2,0}(z_0)&=2\kappa^2+\frac{(\partial_s \kappa)^2}{3 \kappa^2}-\frac{\partial_s^2 \kappa}{3 \kappa}  + i \frac{\partial_s\kappa}{3 \kappa} |\psi'(z_0)|,    \\
\Theta_{0,2}(z_0)&=-\frac{2}{9}\frac{(\partial_s\kappa)^2}{\kappa^2}-\kappa^2+\frac{\partial_s^2 \kappa}{3\kappa}-|\psi'(z_0)|^2  -i |\psi'(z_0)| \frac{\partial_s\kappa}{\kappa}. 
\end{align*}
\end{lemma}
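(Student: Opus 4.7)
My plan is to Taylor expand the analytic functions $\psi$ and $\psi'$ around $z_0$, assemble the non-analytic ratio $|\psi'(z)|/\psi(z)$ algebraically, and then translate the Taylor coefficients at $z_0$ into the claimed geometric data via the constraint $|\psi|\equiv 1$ on $\partial K$ together with the explicit form of the Joukowsky map.

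First I would write $z-z_0=w\,{\bf n}$ with $w=X+iY$. Since $z_0\notin[-F_0,F_0]$, both $\psi$ and $\psi'$ are analytic at $z_0$, so
\begin{equation*}
\psi(z) = \psi(z_0)\bigl(1 + b_1 {\bf n}w + \tfrac{b_2}{2}{\bf n}^2 w^2 + O(w^3)\bigr),\qquad
\psi'(z) = \psi'(z_0)\bigl(1 + a_1 {\bf n}w + \tfrac{a_2}{2}{\bf n}^2 w^2 + O(w^3)\bigr),
\end{equation*}
where $b_k=\psi^{(k)}(z_0)/\psi(z_0)$ and $a_k=\psi^{(k+1)}(z_0)/\psi'(z_0)$. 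Forming $|\psi'(z)|^2=\psi'(z)\overline{\psi'(z)}$ with $\overline{\psi'(z)}$ the corresponding Taylor expansion in $\bar w$, then applying $\sqrt{1+u}=1+u/2-u^2/8+\cdots$ and $(1+v)^{-1}=1-v+v^2-\cdots$ and retaining terms of total degree at most two in $(w,\bar w)$ gives
\begin{equation*}
\frac{|\psi'(z)|}{\psi(z)}=\frac{|\psi'(z_0)|}{\psi(z_0)}\bigl(1+c_{10}w+c_{01}\bar w+c_{20}w^2+c_{11}w\bar w+c_{02}\bar w^2\bigr)+O(|w|^3).
\end{equation*}
Substituting $X=(w+\bar w)/2$, $Y=(w-\bar w)/(2i)$ produces the stated form; each $\Theta_{i,j}$ is a universal polynomial in $a_1,a_2,b_1,b_2$, their conjugates, and powers of ${\bf n},\bar{\bf n}$.

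The key geometric input is to express $a_1,a_2,b_1,b_2$ in terms of $\alpha:=|\psi'(z_0)|$, $\kappa$, and their arclength derivatives. From $|\psi|\equiv 1$ on $\partial K$ one deduces that $\psi'{\bf t}/\psi$ is purely imaginary; comparing magnitudes and signs yields $\psi'(z_0)/\psi(z_0)=\alpha\bar{\bf n}$, i.e.\ $b_1=\alpha\bar{\bf n}$. Iteratively applying the arclength derivative $\partial_s=i{\bf n}\partial-i\bar{\bf n}\bar\partial$, using the holomorphy of $\psi$ so that $\partial_s$ on holomorphic expressions reduces to $i{\bf n}\partial_z$, and using $\partial_s{\bf n}=i\kappa{\bf n}$ from \eqref{eq_kappa}, I compute $b_2$ and $b_3$ (hence $a_1=b_2/b_1$ and $a_2=b_3/b_1$) as explicit polynomials in $\alpha,\kappa,\partial_s\alpha,\partial_s\kappa,\partial_s^2\alpha,\partial_s^2\kappa$. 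As a first-order check, ${\bf n}a_1=(\alpha-\kappa)-i\partial_s\alpha/\alpha$, which follows from $\partial_s\log\psi'=i{\bf n}a_1$ combined with $\partial_s\arg\psi'=\partial_s\arg\psi+\partial_s\arg\bar{\bf n}=\alpha-\kappa$ and $\partial_s\log|\psi'|=\partial_s\alpha/\alpha$.

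Finally, the claimed $\Theta_{i,j}$ contain no $\partial_s^k\alpha$, so these must be eliminated via a relation specific to the ellipse. Direct computation from \eqref{eq_phi} gives $\alpha=2\sqrt t/(F_0|\phi'(e^{i\theta})|)$ and $\kappa=F_0^2(1-t^2)/(4t|\phi'(e^{i\theta})|^3)$, so $\alpha\propto\kappa^{1/3}$, and consequently
\begin{equation*}
\frac{\partial_s\alpha}{\alpha}=\frac{1}{3}\frac{\partial_s\kappa}{\kappa},\qquad\frac{\partial_s^2\alpha}{\alpha}=\frac{1}{3}\frac{\partial_s^2\kappa}{\kappa}-\frac{2}{9}\Bigl(\frac{\partial_s\kappa}{\kappa}\Bigr)^2.
\end{equation*}
These are consistent with the claim: the factor $\tfrac{1}{3}\partial_s\kappa/\kappa$ in $\Theta_{0,1}$ is $\partial_s\alpha/\alpha$, and the combination $-\tfrac{2}{9}(\partial_s\kappa/\kappa)^2+\tfrac{1}{3}\partial_s^2\kappa/\kappa$ appearing in $\Theta_{0,2}$ is exactly $\partial_s^2\alpha/\alpha$. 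The hard part will be the second-order bookkeeping: there are many cross-contributions from the square root, the reciprocal, and the conjugation of the Taylor data, and the clean expressions in the lemma emerge only after simultaneously applying the $|\psi|=1$ identities and the ellipse-specific elimination of $\partial_s^k\alpha$, so a symbolic algebra verification of the final coefficients is prudent.
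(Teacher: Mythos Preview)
Your approach is correct and follows the same architecture as the paper's own proof in Appendix~\ref{app-proof3}: Taylor-expand $|\psi'|/\psi$ through the analytic data of $\psi$, convert the Taylor coefficients into curve-geometric quantities via $\psi'/\psi=|\psi'|\,\overline{\bf n}$ on $\partial K$, and then invoke an ellipse-specific identity. The paper packages that last step as two separate relations (its Lemma~\ref{lemma_Theta}, one for ${\rm Im}(\psi\psi''/(\psi')^2)$ and one for ${\rm Re}\big({\bf n}^2(\psi'''/\psi'-(\psi'')^2/(\psi')^2)\big)$), each verified by direct computation in the $\theta$-parametrization \eqref{eq:gamma}. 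Your single observation $|\psi'|\propto\kappa^{1/3}$, read off from \eqref{eq:kappa0} and \eqref{eq:psiprime}, is a cleaner and more conceptual encoding of exactly the same ellipse input: differentiating it once and twice yields $\partial_s\alpha/\alpha=\tfrac{1}{3}\partial_s\kappa/\kappa$ and $\partial_s^2\alpha/\alpha=\tfrac{1}{3}\partial_s^2\kappa/\kappa-\tfrac{2}{9}(\partial_s\kappa/\kappa)^2$, from which both identities of Lemma~\ref{lemma_Theta} follow (for instance $\psi\psi''/(\psi')^2=a_1/b_1=1-\kappa/\alpha-i\,\partial_s\alpha/\alpha^2$ gives the first immediately). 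The second-order bookkeeping you flag is indeed where the labor sits in either route; the paper handles it by writing $\Theta_{k,l}=(\psi/|\psi'|)\,\partial_{\bf n}^k\partial_{\bf t}^l(|\psi'|/\psi)$ and simplifying, which is algebraically equivalent to your $(w,\bar w)$ expansion.
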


\begin{lemma}\label{lemma_hn_relations}
Let $z_0\in{\partial K}$. Then
\begin{align*}
{\rm Re}\left( h_{0}(z_0) +h_{-1}(z_0)\right)&=-\frac{1}{12} \kappa^2+\frac{1}{24} \frac{\partial_s^2 \kappa}{\kappa},\\
\frac{ {\rm Im}(h_{0}(z_0)-h_{-1}(z_0))}{|\psi'(z_0)|}&=\frac{\partial_s \kappa}{6 \kappa}.
\end{align*}
\end{lemma}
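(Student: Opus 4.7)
The plan is to compute both sides of each identity explicitly as functions of the boundary parameter $\theta$ (where $z_0=\phi(e^{i\theta})$) and verify them by direct algebraic matching. The starting observation is a useful simplification: for $r\in\{0,-1\}$ the coefficient $1+6r+6r^2$ equals $1$ while $1+2r$ flips sign, so the $r$-independent and the $(1+6r+6r^2)$-term in \eqref{eq:hr} appear with coefficient $2$ in the sum $h_0+h_{-1}$ while the $(1+2r)$-term survives only in the difference. Explicitly,
\[
h_0(z)-h_{-1}(z)=\frac{F_0^2}{4T(z^2-F_0^2)},\qquad
h_0(z)+h_{-1}(z)=-\frac{1}{12T\sqrt{1-F_0^2/z^2}}\Bigl(1+\frac{5F_0^2}{2(z^2-F_0^2)}\Bigr).
\]
Conveniently, the imaginary-part identity in the lemma involves $h_0-h_{-1}$ (which on $\partial K$ becomes complex through $1/(z^2-F_0^2)$) and the real-part identity involves $h_0+h_{-1}$ (which is complex through the square root).

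Next, I would substitute $u=e^{i\theta}$, $v=u^2$ and use the Joukowsky formulas to obtain the boundary values $\sqrt{1-F_0^2/z_0^2}=(v-t)/(v+t)$ and $z_0^2-F_0^2=F_0^2(v-t)^2/(4tv)$, which give
\[
h_0(z_0)-h_{-1}(z_0)=\frac{tv}{T(v-t)^2},\qquad
h_0(z_0)+h_{-1}(z_0)=-\frac{(v+t)(v^2+8tv+t^2)}{12T(v-t)^3},
\]
together with $|\psi'(z_0)|=2\sqrt t/(F_0|v-t|)$. On the geometric side, the standard ellipse parametrization $z_0=a\cos\theta+ib\sin\theta$ with $a=F_0(1+t)/(2\sqrt t)$, $b=F_0(1-t)/(2\sqrt t)$ yields $\kappa=2\sqrt t\,(1-t^2)/(F_0|v-t|^3)$ with $|v-t|^2=1+t^2-2t\cos 2\theta$, and successive applications of $\partial_s=|\phi'(u)|^{-1}\partial_\theta$ produce $\partial_s\kappa$ and $\partial_s^2\kappa$ as explicit functions of $\theta$ and $t$. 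The imaginary-part identity is then clean: using $\bar v=1/v$ on $|v|=1$ one finds $\mathrm{Im}(v/(v-t)^2)=(t^2-1)\sin 2\theta/|v-t|^4$, so $\mathrm{Im}(h_0-h_{-1})/|\psi'(z_0)|$ becomes $-\sqrt t\,(1-t^2)F_0\sin 2\theta/(2T|v-t|^3)$, and on the other side $\partial_s\kappa/(6\kappa)$ computes to $-2t^{3/2}\sin 2\theta/(F_0|v-t|^3)$. The two sides agree after cancelling the common $\sin 2\theta/|v-t|^3$ factor and invoking the defining relation $F_0^2=4tT/(1-t^2)$.

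The main obstacle is the real-part identity. Taking the real part of $(v+t)(v^2+8tv+t^2)/(v-t)^3$ on $|v|=1$ (again via $\bar v=1/v$) produces a rational function of $\cos 2\theta$ and $|v-t|^2$ only; on the right-hand side one must independently compute $\kappa^2$ and $\partial_s^2\kappa/\kappa$, the latter involving a second $\partial_\theta$-derivative acting on $|v-t|^{-3}\sin 2\theta$ divided by $|v-t|$, which produces competing terms in $\sin^2 2\theta/|v-t|^p$ and $\cos 2\theta/|v-t|^p$. Reducing everything to a common denominator and using $2t\cos 2\theta=1+t^2-|v-t|^2$ and $4t^2\sin^2 2\theta=4t^2-(1+t^2-|v-t|^2)^2$ yields polynomials in $|v-t|^2$ and $t$ on both sides whose equality is a routine but lengthy algebraic check. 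No step is conceptually subtle; the combinatorial bookkeeping in the second arclength derivative is the only real work.
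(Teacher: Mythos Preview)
Your proposal is correct and follows essentially the same route as the paper's proof in Appendix~C.2: both compute $h_0\pm h_{-1}$ explicitly on the boundary via the Joukowsky parametrization $z_0=\phi(e^{i\theta})$, then evaluate $\kappa$, $\partial_s\kappa$, $\partial_s^2\kappa$, $|\psi'|$ explicitly and check the identities by algebraic matching. Your use of $v=e^{2i\theta}$ is a minor notational streamlining, and the paper shortcuts the geometric side by quoting the formulas \eqref{eq:kappa0}--\eqref{eq:psiprime} already derived in the proof of Lemma~\ref{lemma_Theta}, but the substance is identical.
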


\begin{lemma}\label{lemma_Upsilon}
Let $z_0\in{\partial K}$. Then
\begin{align*}
& \frac{\sqrt{T}(\cc{{\bf n}} t-{\bf n})}{\sqrt{1-t^2}} \frac{|\psi'(z_0)|}{\psi(z_0)}=-1 + \frac{i \partial_s \kappa}{3|\psi'(z_0)| \kappa},  \\
&\frac{\sqrt{T}(\cc{{\bf n}} t+{\bf n})}{\sqrt{1-t^2}} \frac{|\psi'(z_0)|}{\psi(z_0)}= \frac{ \kappa  }{ |\psi'(z_0) |}.
\end{align*}
\end{lemma}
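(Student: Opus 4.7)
The plan is to carry out a direct parametric computation on the ellipse. Since $z_0\in\partial K$, I would write $u:=\psi(z_0)=e^{i\theta}$ so that $|u|=1$ and $\overline u=1/u$, and use the Joukowsky form $\phi(u)=\frac{F_0}{2\sqrt t}(u+t/u)$ to make every quantity in the statement a rational function of $u$. In particular, from $\phi'(u)=\frac{F_0}{2\sqrt t}(1-t/u^2)$ and $1/\psi'(z_0)=\phi'(u)$ I get $|\psi'(z_0)|=\frac{2\sqrt t}{F_0|u-t/u|}$ (using $|1-t/u^2|=|u-t/u|$ on $|u|=1$), and from the formula ${\bf n}=|\psi'(z_0)|\,\psi(z_0)/\psi'(z_0)$ derived just before Lemma \ref{lem_ntkappa} I obtain the clean expression ${\bf n}=\frac{u(1-t/u^2)}{|u-t/u|}$, hence $\overline{\bf n}=\frac{(1-tu^2)/u}{|u-t/u|}$.

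The next step is to package the recurring prefactor. A short computation using $F_0=2\sqrt{tT/(1-t^2)}$ shows
$$\frac{\sqrt T}{\sqrt{1-t^2}}\cdot\frac{|\psi'(z_0)|}{\psi(z_0)}=\frac{1}{u\,|u-t/u|},$$
so each identity in the lemma reduces to computing $t\overline{\bf n}\pm{\bf n}$ and dividing by $u|u-t/u|$. Algebraic simplification gives $t\overline{\bf n}+{\bf n}=\frac{u(1-t^2)}{|u-t/u|}$ and $t\overline{\bf n}-{\bf n}=\frac{2t/u-(1+t^2)u}{|u-t/u|}$. Substituting, the second identity in Lemma \ref{lemma_Upsilon} reduces to the claim $\frac{\kappa}{|\psi'(z_0)|}=\frac{1-t^2}{|u-t/u|^2}$, and the first identity reduces to
$$\frac{2t/u^2-(1+t^2)}{|u-t/u|^2}=-1+\frac{i\partial_s\kappa}{3|\psi'(z_0)|\kappa}.$$

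For these final two reductions I would invoke a single key identity: on $\partial K$,
$$\kappa(z_0)=T\,|\psi'(z_0)|^3.$$
I would establish this by directly computing $\kappa=ab/(a^2\sin^2\theta+b^2\cos^2\theta)^{3/2}$ for the ellipse with semi-axes $a=\frac{F_0(1+t)}{2\sqrt t}$, $b=\frac{F_0(1-t)}{2\sqrt t}$, noting $ab=T$ and $(a^2\sin^2\theta+b^2\cos^2\theta)^{1/2}=\frac{F_0}{2\sqrt t}|u-t/u|=1/|\psi'(z_0)|$. Using this, $\frac{\kappa}{|\psi'(z_0)|}=T|\psi'(z_0)|^2=\frac{1-t^2}{|u-t/u|^2}$, settling the second identity. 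For the first identity, the real part of the left-hand side equals $-1$ because $\mathrm{Re}\bigl(2t/u^2-(1+t^2)\bigr)=2t\cos(2\theta)-(1+t^2)=-|u-t/u|^2$; and the imaginary part equals $-2t\sin(2\theta)/|u-t/u|^2$. Then $\partial_s=|\psi'(z_0)|\,\partial_\theta$ together with $\kappa=T|\psi'|^3$ gives $\frac{\partial_s\kappa}{\kappa}=3\partial_s\log|\psi'|=3|\psi'|\,\partial_\theta\log|\psi'|$, and a one-line differentiation of $|\psi'|=\frac{2\sqrt t}{F_0\sqrt{1+t^2-2t\cos(2\theta)}}$ yields $\partial_\theta\log|\psi'|=\frac{-2t\sin(2\theta)}{|u-t/u|^2}$, matching the required imaginary part.

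There is no real obstacle here; the proof is a bookkeeping exercise in ellipse geometry via the Joukowsky uniformization. The only subtle point worth being careful about is the sign conventions: verifying that the orientations of ${\bf n}$ and ${\bf t}$ introduced before Lemma \ref{lem_ntkappa} are consistent with the ``positive curvature'' convention used here, and tracking complex conjugates through $\overline{\bf n}$, so that no stray sign flips the first identity from $+i\partial_s\kappa$ to $-i\partial_s\kappa$.
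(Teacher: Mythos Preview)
Your proposal is correct and follows essentially the same route as the paper: a direct parametric computation on the ellipse via the Joukowsky uniformization $z_0=\phi(e^{i\theta})$, reducing both identities to the explicit formulas for ${\bf n}$, $|\psi'|$, $\kappa$, and $\partial_s\kappa$ along $\partial K$ (the paper simply cites \eqref{eqnnn}, \eqref{eq:kappa0}, \eqref{eq:kappa1}, \eqref{eq:psiprime} and plugs in). Your packaging via the single identity $\kappa=T\,|\psi'|^3$ and the logarithmic-derivative trick $\partial_s\kappa/\kappa=3\,|\psi'|\,\partial_\theta\log|\psi'|$ is a tidy consolidation of those same formulas, but not a genuinely different argument.
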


\vspace{0.2cm}

\begin{prop}\label{prop_derivative_density} 
Let $z_0\in \partial K$.  
Let us define $X, Y\in \RR$ by
$$ X {\bf n}(z_0)+i  Y {\bf n}(z_0)= z -z_0.$$ 
For any $0\leq \nu<1/6$ the following asymptotic expansions hold.
\begin{align}
\partial_{\bf n} \rho_n(z) &=\sqrt{\frac{N}{2\pi^3}}\ee^{-2 N X^2}  \bigg(-2-\kappa\left(\tfrac{4}{3}N X^3-2X\right)- \kappa^2 \left(\tfrac{4}{9}N^2 X^6-\tfrac{7}{3}N X^4+2X^2\right) \label{eq_derivative_n_density} 
\\\nn  &  \quad\qquad\qquad\qquad\qquad +\left(\frac{\partial_s^2 \kappa}{3\kappa} -  \frac{4(\partial_s\kappa)^2}{9\kappa^2}\right) X^2 
+\frac{1}{N}\left(\frac{\kappa^2}{6} +\frac{(\partial_s\kappa)^2}{9 \kappa^2}-\frac{\partial_s^2\kappa}{12\kappa}\right)+\mathcal{O}\left(N^{-3/2+9\nu} \right)  \bigg), 
\\
\partial_{\bf t} \rho_n(z)&=\sqrt{\frac{N}{2\pi^3}}\ee^{-2 N X^2}  \bigg( -2 \kappa Y  + \kappa^2 \left( 4N XY^3-\tfrac{4}{3} NX^3Y+4XY  \right) 
  +\partial_s\kappa \Big(\frac{X^2}{3}-Y^2 \Big) \nn \\
&\quad\qquad\qquad\qquad\qquad  -\frac{1}{N}\frac{\partial_s \kappa}{3|\psi'(z_0)|}
 +\mathcal{O}\left(N^{-3/2+7\nu}\right)  \bigg),  \label{eq_derivative_t_density}
\end{align}
where the former is evaluated for $Y=0$.
The error bounds are uniform over $\{z ~|~|z-z_0|<N^{-1/2+\nu}\}$ and over $z_0\in\partial K$.
\end{prop}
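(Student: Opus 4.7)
The starting point is Proposition \ref{prop_dv_density}, rewritten with $\sqrt{2N/\pi^3}=2\sqrt{N/2\pi^3}$ as
\[
\partial_{\bf v}\rho_n(z) \;=\; 2\sqrt{\tfrac{N}{2\pi^3}}\,e^{-N\Omega(z)}\,\mathrm{Re}\bigl[F_{\bf v}(z)\bigr],
\]
where
\[
F_{\bf v}(z) \;:=\; \tfrac{\sqrt{T}\,(\bar{\bf v}\,t-{\bf v})}{\sqrt{1-t^2}}\,\tfrac{|\psi'(z)|}{\psi(z)}\,\Bigl(1+\tfrac{\overline{h_0(z)}+h_{-1}(z)}{N}+\mathcal{O}(N^{-2})\Bigr).
\]
The normal derivative corresponds to ${\bf v}={\bf n}$; the tangential derivative to ${\bf v}={\bf t}=i{\bf n}$, in which case $\bar{\bf v}\,t-{\bf v}=-i(\bar{\bf n}\,t+{\bf n})$. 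The two identities in Lemma \ref{lemma_Upsilon} then supply the boundary values $F_{\bf n}(z_0)=-1+i\partial_s\kappa/(3|\psi'(z_0)|\kappa)$ and $F_{\bf t}(z_0)=-i\kappa/|\psi'(z_0)|$, the latter being purely imaginary; this vanishing of $\mathrm{Re}\,F_{\bf t}(z_0)$ is why the tangential expansion starts at order $|z-z_0|$ rather than as a constant.

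The three factors of $F_{\bf v}(z)$ are next expanded around $z_0$ with the tools already developed. Lemma \ref{prop_omega} provides $e^{-N\Omega(z)}=e^{-2NX^2}(1+P(X,Y;N)+\mathcal{O}(N^{-3/2+9\nu}))$ with $P$ the polynomial displayed in \eqref{eq_exp_omega}; Lemma \ref{prop_psi} gives the Taylor expansion of $|\psi'(z)|/\psi(z)$ around $z_0$ to second order in $(X,Y)$ with explicit coefficients $\Theta_{j,k}$; and the factor $1+(\overline{h_0(z)}+h_{-1}(z))/N$ is replaced by its boundary value, the Taylor remainder being $\mathcal{O}(N^{-2+\nu})$, with $h_0(z_0)\pm h_{-1}(z_0)$ evaluated through Lemma \ref{lemma_hn_relations}. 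Multiplying these three expansions together with the constant from Lemma \ref{lemma_Upsilon} and taking the real part produces a polynomial in $(X,Y,1/N)$ whose coefficients are built from $\kappa,\partial_s\kappa,\partial_s^2\kappa$ (and, a priori, $|\psi'(z_0)|$). Imposing $Y=0$ in the normal case yields \eqref{eq_derivative_n_density}; the general case yields \eqref{eq_derivative_t_density}.

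The main obstacle is the bookkeeping. One must retain every cross term whose magnitude in the regime $|X|,|Y|<N^{\nu-1/2}$ still exceeds the claimed error, and check that the $|\psi'(z_0)|$-pieces cancel so as to leave only curvature-intrinsic quantities (apart from one residual $|\psi'(z_0)|$ surviving at order $1/N$ in the tangential case). Lemmas \ref{lemma_hn_relations} and \ref{lemma_Upsilon} are tailored precisely to ensure these cancellations: for example, the $1/N$ correction in \eqref{eq_derivative_n_density} emerges as $\mathrm{Re}[F_{\bf n}(z_0)(\overline{h_0(z_0)}+h_{-1}(z_0))]/N$, in which the $(\partial_s\kappa)^2/(9\kappa^2)$ contribution comes exactly from pairing the imaginary part of $F_{\bf n}(z_0)$ with $-\mathrm{Im}(h_0-h_{-1})(z_0)$ from Lemma \ref{lemma_hn_relations}, while the $\kappa^2/6$ and $-\partial_s^2\kappa/(12\kappa)$ pieces come from pairing the leading $-1$ with $\mathrm{Re}(h_0+h_{-1})(z_0)$. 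The analogous computation gives the tangential $1/N$ term. The two different error exponents $9\nu$ and $7\nu$ result from a refined propagation analysis: the normal case is controlled by the $N^3|z-z_0|^9$-type remainder of Lemma \ref{prop_omega} multiplied against the nonvanishing constant $\mathrm{Re}\,F_{\bf n}(z_0)=-1$, whereas in the tangential case that remainder only contributes through higher-order factors, so the $N^2|z-z_0|^7$-type remainder becomes the dominant source of error. Beyond this careful accounting, the argument is purely algebraic and requires no new ideas.
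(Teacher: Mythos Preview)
Your plan is essentially the paper's own proof: start from Proposition~\ref{prop_dv_density}, feed in the expansion of $e^{-N\Omega}$ from Lemma~\ref{prop_omega}, the Taylor expansion of $|\psi'|/\psi$ from Lemma~\ref{prop_psi}, the boundary constants from Lemma~\ref{lemma_Upsilon}, and finally evaluate the $h_r$--contributions via Lemma~\ref{lemma_hn_relations}. The paper in fact records the intermediate expression still containing $h_0(z_0),h_{-1}(z_0)$ before invoking Lemma~\ref{lemma_hn_relations}, exactly as you describe.

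One small inaccuracy in your accounting of the $7\nu$ versus $9\nu$ errors: the tangential error $\mathcal{O}(N^{-3/2+7\nu})$ does \emph{not} come from the $N^2|z-z_0|^7$--type remainder of Lemma~\ref{prop_omega}. All three remainder types there, once multiplied by $\mathrm{Re}\,F_{\bf t}=\mathcal{O}(|z-z_0|)$, land at $\mathcal{O}(N^{-2+\cdots})$, which is strictly smaller. The $N^{-3/2+7\nu}$ actually arises from \emph{discarding} the product of the linear part of $\mathrm{Re}\,F_{\bf t}$ (of size $N^{\nu-1/2}$) with the kept $\tfrac{2\kappa^2 N^2}{9}(X^3-3XY^2)^2$ term of the $\Omega$--expansion (of size $N^{-1+6\nu}$). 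This is what the paper means by ``one less order in the expansion of $e^{-N\Omega}$'': because the leading $-2\kappa Y$ is already $\mathcal{O}(N^{\nu-1/2})$, the sextic contribution need not be tracked. Apart from this minor point your bookkeeping is sound.
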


\begin{proof}
We use Lemma \ref{prop_psi} and Lemma \ref{lemma_Upsilon} to get
\begin{align}\nn
&	\quad	{\rm Re}\!\left[ \frac{\sqrt{T}(\cc{{\bf n}} t-{\bf n})}{\sqrt{1-t^2}} \frac{|\psi'(z) |}{ \psi(z) } \left(1+\frac{\cc{h_0(z)}+h_{-1}(z)}{N}+ \mathcal{O}\left(\frac{1}{N^2}\right)\right) \right]
\\ &	\qquad\qquad	=	 \bigg[\kappa X-1 
		-\left(\kappa^2 +\frac{2(\partial_s \kappa)^2}{9\kappa^2}
		-\frac{\partial_s^2\kappa}{6\kappa}\right) X^2
\\\nn & \qquad\qquad\qquad		  
- \frac{{\rm Re}\left(h_0(z_0)+h_{-1}(z_0)\right) }{N}  +\frac{\partial_s\kappa\,{\rm Im}\left(h_0(z_0)-h_{-1}(z_0)\right) }{3 \kappa |\psi'(z_0)| N  }  \bigg] + \mathcal{O}(N^{3\nu-3/2}).
\end{align}	
We use the above with Lemma \ref{prop_omega} in equation \eqref{eq:23} with ${\bf v}={\bf n}$ to get
\begin{align}
\partial_{\bf n}\rho_n(z)=&-\sqrt{\frac{2N}{\pi^3}}\ee^{-2 N X^2}  \Bigg(1+\kappa\left(\tfrac{2}{3}N X^3-X\right)+ \kappa^2 \left(X^2-\tfrac{7}{6}N X^4+\tfrac{2}{9}N^2 X^6\right) +\left(\frac{2(\partial_s\kappa)^2}{9\kappa^2} - \frac{\partial_s^2 \kappa}{6\kappa}\right) X^2 \nn \\
&\qquad\qquad +\frac{{\rm Re}(h_0(z_0)+h_{-1}(z_0)) }{N} +\frac{\partial_s\kappa}{3 |\psi'(z_0)| \kappa}\frac{{\rm Im}(h_{-1}(z_0)-h_{0}(z_0))}{N}+\mathcal{O}(N^{9\nu-3/2})  \Bigg).\label{eq:with-hn} 	
\end{align}
One can use Lemma \ref{lemma_hn_relations} to finally get \eqref{eq_derivative_n_density}.

Similar steps using Lemma \ref{prop_psi}, Lemma \ref{lemma_Upsilon} and Lemma \ref{prop_omega} in equation \eqref{eq:23} with ${\bf v}={\bf t}$ give 
\begin{align}
	\partial_{\bf t} \rho_n(z)&=\sqrt{\frac{N}{2\pi^3}}\ee^{-2 N X^2}  \Bigg( -2 \kappa Y  + \kappa^2 \left( 4N XY^3-\tfrac{4}{3} NX^3Y+4XY  \right) 
  +\partial_s\kappa \left(\frac{X^2}{3}-Y^2 \right) \nn \\
&\quad\qquad\qquad\qquad\qquad  +\frac{2\kappa}{|\psi'(z_0)|}\frac{{\rm Im}\left(h_{-1}(z_0)-h_{0}(z_0)\right)}{N} +\mathcal{O}\big(N^{7\nu-3/2}\big)  \Bigg).  \label{eq_derivative_t_density1}
\end{align}
Again using Lemma \ref{lemma_hn_relations} we get \eqref{eq_derivative_t_density}. 

Let us remark about the different powers in the error terms in the two cases. For the latter case, $\partial_{\bf t}\rho$, we need one less order in the expansion of $\ee^{-N\Omega}$ because the leading term, $-2\kappa Y={\cal O}(N^{\nu-1/2})$, is already of less order than the leading term of $\partial_{\bf n}\rho_n$.
\end{proof}

\begin{lemma}\label{lem:zzero} Let $z_0\in\partial K$. We have, for an arbitrary $\varepsilon>0$,
\begin{equation*} 
\rho_n(z_0)=\frac{1}{2\pi} - \frac{\kappa }{ 3 \sqrt{2\pi^3 N} }  + \mathcal{O}\big(N^{-3/2+\varepsilon}\big),
\end{equation*}
uniformly over $z_0\in\partial K$.
\end{lemma}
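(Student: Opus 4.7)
The approach is to obtain $\rho_n(z_0)$ by integrating the normal derivative $\partial_{\bf n}\rho_n$ along the inward normal direction, starting from a point deep inside the droplet where Theorem~\ref{thm:1b} forces the density to be exponentially close to $1/\pi$. I fix $0<\tau<\nu'<\nu<1/6$ with $\nu$ as in Proposition~\ref{prop_derivative_density}, set $X_0 := N^{-1/2+\nu'}$, and define $f(X):=\rho_n(z_0+X{\bf n})$ so that $f'(X)=\partial_{\bf n}\rho_n(z_0+X{\bf n})$. Since $z_0-X_0{\bf n}\in\mathrm{Int}(K)\setminus U_{\tau,N}$, Theorem~\ref{thm:1b} gives $f(-X_0)=1/\pi+\mathcal{O}(\ee^{-N\epsilon})$ for some $\epsilon>0$, so the fundamental theorem of calculus yields
\begin{equation*}
\rho_n(z_0)\;=\;\frac{1}{\pi}\;+\;\int_{-X_0}^{0}\partial_{\bf n}\rho_n(z_0+X{\bf n})\,\dd X\;+\;\mathcal{O}(\ee^{-N\epsilon}).
\end{equation*}

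For the integral I would insert the expansion of Proposition~\ref{prop_derivative_density} (with $Y=0$), which is valid throughout the range since $X_0\ll N^{-1/2+\nu}$, and substitute $u=\sqrt{2N}\,X$. This turns the Gaussian prefactor into the $N$-independent measure $(2\sqrt{\pi^3})^{-1}\ee^{-u^2}\dd u$ and rescales each monomial $N^{j}X^{k}$ into $2^{-k/2}N^{j-k/2}u^{k}$; the lower limit $-\sqrt{2}\,N^{\nu'}$ can be replaced by $-\infty$ at the cost of an exponentially small error. The constant term $-2$ then integrates to $-1/(2\pi)$, while the two $\kappa$-terms $-\tfrac{4}{3}\kappa NX^3$ and $2\kappa X$ combine into $-\sqrt{2}\,\kappa(u^3-3u)/(3\sqrt{N})$ and, via the elementary identity $\int_{-\infty}^{0}(u^3-3u)\ee^{-u^2}\dd u = 1$, contribute $-\kappa/(3\sqrt{2\pi^3 N})$. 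Adding these to $1/\pi$ already produces the two advertised leading terms.

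The main obstacle is the bookkeeping showing that every remaining $\mathcal{O}(1/N)$ contribution cancels. Three sources must be balanced against one another: the $\kappa^2$ block $-\kappa^2(\tfrac{4}{9}N^2X^6-\tfrac{7}{3}NX^4+2X^2)$, the $X^2$ coefficient $\tfrac{\partial_s^2\kappa}{3\kappa}-\tfrac{4(\partial_s\kappa)^2}{9\kappa^2}$, and the explicit $1/N$ constant $\tfrac{\kappa^2}{6}+\tfrac{(\partial_s\kappa)^2}{9\kappa^2}-\tfrac{\partial_s^2\kappa}{12\kappa}$ inside the bracket of \eqref{eq_derivative_n_density}. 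Using the half-line Gaussian moments $\int_{-\infty}^{0}u^{2j}\ee^{-u^2}\dd u$ (equal to $\sqrt{\pi}/4,\ 3\sqrt{\pi}/8,\ 15\sqrt{\pi}/16$ for $j=1,2,3$), each piece yields a specific linear combination of $\kappa^2$, $(\partial_s\kappa)^2/\kappa^2$ and $\partial_s^2\kappa/\kappa$ over $N\pi$, and a short algebraic check shows that all three coefficients sum to zero independently — a useful internal consistency check. Finally, the remainder $\mathcal{O}(N^{-3/2+9\nu})$ in Proposition~\ref{prop_derivative_density} integrates (the Gaussian measure supplies an $N^{-1/2}$ factor cancelling the $\sqrt{N}$ in the prefactor) to the same order, so choosing $9\nu<\varepsilon$ gives the stated bound $\mathcal{O}(N^{-3/2+\varepsilon})$ uniformly in $z_0\in\partial K$.
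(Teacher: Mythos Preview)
Your argument is correct and essentially identical to the paper's proof: both integrate $\partial_{\bf n}\rho_n$ along the inward normal from a nearby interior point where Theorem~\ref{thm:1b} applies, insert the expansion of Proposition~\ref{prop_derivative_density}, compute the Gaussian integrals explicitly, and observe that the $1/N$ contributions cancel. One small slip: Theorem~\ref{thm:1b} gives $f(-X_0)=1/\pi+\mathcal{O}(N^{5/6}\ee^{-N^{2\tau}})$ rather than $\mathcal{O}(\ee^{-N\epsilon})$, but since $N^{5/6}\ee^{-N^{2\tau}}$ is still $o(N^{-M})$ for every $M$, this is harmless and is absorbed into the final $\mathcal{O}(N^{-3/2+\varepsilon})$.
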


\begin{proof}
Let $0<\nu<1/6$. Choose $\tau$ such that $0<\tau<\nu$ and let
$$z_1 = z_0 - N^{-1/2+\tau} {\bf n}, $$
for sufficiently large $N$ such that $z_1\in {\rm Int}\,{K}$.
\begin{align} \nn 
\rho_n(z_0)-\rho_n(z_1)=\int_{-N^{-1/2+\tau}}^0 \partial_{\bf n} \rho_n(z_0 + X {\bf n}) \,\dd X.
\end{align}
Proposition \ref{prop_derivative_density} gives a uniform estimate of $\partial_{\bf n}\rho_n$ on the domain of integration.  We can replace $\partial_{\bf n}\rho_n$ by \eqref{eq_derivative_n_density}.
Then we only need to perform the integrals of the type 
\begin{equation*}
   \int_{-N^{-1/2+\tau}}^0 X^j \ee^{-2N X^2} \dd X =\frac{1}{2} \frac{(-1)^j}{(2N)^{(j+1)/2}} \Gamma \left(\frac{j+1}{2}\right) + {\cal O}\big(N^{-(j+1)/2+\tau(j-1)}\ee^{-2N^{2\tau}}\big),\quad j\in\NN\cup\{0\}.
 \end{equation*}
The error term is obtained by substituting $\widetilde X = \sqrt N X $ in the integral and using l'H\^opital's rule,
\begin{align*}
-\lim_{N\rightarrow \infty} \frac{\int_{-\infty}^{-N^{-1/2+\tau}}  X^j \ee^{-2N X^2} \dd X}{ N^{k-(j+1)/2} \ee^{-2N^{2\tau}}}&= -\lim_{N\rightarrow \infty} \frac{\int_{-\infty}^{-N^{\tau}}  \widetilde X^j \ee^{-2 \widetilde{X}^2} \dd \widetilde X}{ N^{k} \ee^{-2N^{2\tau}}}   = \lim_{N\rightarrow \infty} \frac{(-1)^j \tau N^{\tau(j+1)-1}  \ee^{-2 N^{2\tau}} }{ (k N^{k-1} -4 \tau N^{2\tau+k-1}  ) \ee^{-2N^{2\tau}}}\\
&= \frac{(-1)^{j+1}}{4} \lim_{N\rightarrow \infty} N^{\tau(j-1)-k} ,
\end{align*}
which converges to a non-zero constant if $k=\tau(j-1)$.

The straightforward calculation of the integration of \eqref{eq_derivative_n_density} gives 
\begin{align}
\rho_n(z_0)-\rho_n(z_1)&=- \frac{1}{2\pi} - \frac{\kappa }{ 3 \sqrt{2\pi^3 N} } 
+\mathcal{O}(N^{-3/2+9\nu}).\nn 
\end{align}
Note that the term of order $1/N$ vanishes. We will discuss a deeper reason for this in Section \ref{sec_consistency}.

The theorem is proved because, according to Theorem \ref{thm:1b}, $\rho_n(z_1)$ is $1/\pi$ up to an error given by ${\cal O}\big(N^{5/6} \ee^{-N^{2\tau} }\big) $.   The uniformity of the bound over $z_0\in\partial K$ follows from the uniformity of Proposition \ref{prop_derivative_density}. 
\end{proof}
We will prove the following slightly stronger statement than Theorem \ref{thm1}.

\vspace{0.2cm}
\begin{thm}\label{thm_density}
Let $z_0\in\partial K$ and 
\begin{equation*}
z-z_0 = {\bf n}\frac{\xi+i\eta }{\sqrt{N}},\qquad \xi,\eta\in\RR.	
\end{equation*}
For any $0 < \nu<1/6$ the asymptotic expansion
\begin{align}
\rho_n\left(z_0+\frac{(\xi+i\eta) {\bf n}}{\sqrt{N}}\right)-\frac{\erfc(\sqrt{2} \xi) }{2\pi} &=\frac{\ee^{-2 \xi^2} }{\sqrt{2\pi^3} }\Bigg\lgroup    \frac{\kappa}{ \sqrt{ N} } \left(\frac{\xi^2-1}{3}-\eta^2 \right)  + \frac{1}{N} \Psi(\xi,\eta) \Bigg\rgroup+ {\cal O}(N^{9\nu-3/2}) \nn 
\end{align}
holds uniformly over $\{z~|~|z-z_0|<N^{\nu-1/2}\}$ and over $z_0\in\partial K$, where we define
\begin{equation*}
\Psi(\xi,\eta)= \kappa^2  \frac{ 2\xi^5 - 8 \xi^3 + 3\xi +36 \xi\eta^2  -12 \xi^3 \eta^2  + 18 \xi\eta^4}{18}+ \left(\frac{ (\partial_s \kappa)^2 }{9 \kappa^2} - \frac{ \partial_s^2 \kappa }{12 \kappa} \right) \xi  + \frac{\partial_s \kappa}{3}  \Big( \xi^2 \eta  -\eta^3  - \eta \Big).
\end{equation*}
\end{thm}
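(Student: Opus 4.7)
The plan is to compute $\rho_n\!\left(z_0+(\xi+i\eta){\bf n}/\sqrt N\right)$ by integrating the directional derivatives from Proposition \ref{prop_derivative_density} along a two-leg straight-line path starting at $z_0$, using $\rho_n(z_0)$ from Lemma \ref{lem:zzero} as the base value. Concretely, write
\begin{equation*}
\rho_n(z) = \rho_n(z_0) + \int_0^{\xi/\sqrt N}\!\partial_{\bf n}\rho_n(z_0 + X{\bf n})\,\dd X + \int_0^{\eta/\sqrt N}\!\partial_{\bf t}\rho_n\!\left(z_0 + \tfrac{\xi}{\sqrt N}{\bf n} + iY{\bf n}\right)\dd Y,
\end{equation*}
where the first leg stays on $Y=0$ so that the $Y=0$ formula \eqref{eq_derivative_n_density} applies, and the second keeps $X = \xi/\sqrt N$ fixed so that \eqref{eq_derivative_t_density} applies at general $Y$.

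For the normal leg, I would substitute $X = X'/\sqrt N$; then $\ee^{-2NX^2} = \ee^{-2(X')^2}$ and each monomial $N^aX^b$ turns into $N^{a-b/2}(X')^b$, while the Jacobian combines with the prefactor $\sqrt{N/(2\pi^3)}$ to produce the clean $1/\sqrt{2\pi^3}$. The integration then reduces to the elementary Gaussian moments $\int_0^\xi (X')^j \ee^{-2(X')^2}\dd X'$. The leading constant $-2$ integrates to $-\erf(\sqrt 2\,\xi)/(2\pi)$, which, combined with the $1/(2\pi)$ in $\rho_n(z_0)$, yields the leading $\erfc(\sqrt 2\,\xi)/(2\pi)$. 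At order $N^{-1/2}$, the $\kappa$-terms of \eqref{eq_derivative_n_density} combine with the $-\kappa/(3\sqrt{2\pi^3 N})$ piece of $\rho_n(z_0)$ to give the $\xi$-part $\kappa(\xi^2-1)/3$ of the first correction. At order $N^{-1}$, the $\kappa^2$, $(\partial_s\kappa)^2/\kappa^2$, $\partial_s^2\kappa/\kappa$ terms together with the constant $1/N$-contribution in \eqref{eq_derivative_n_density} assemble into the $\eta$-independent part of $\Psi(\xi,\eta)$.

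For the tangent leg, $\ee^{-2NX^2} = \ee^{-2\xi^2}$ is constant in $Y$, so after substituting $Y = Y'/\sqrt N$ the integration becomes pure polynomial integration over $[0,\eta]$. The $-2\kappa Y$ term contributes $-\kappa\eta^2\ee^{-2\xi^2}/\sqrt{2\pi^3 N}$, providing the $\eta$-piece of the order-$N^{-1/2}$ correction. The $\kappa^2$-bracket in \eqref{eq_derivative_t_density}, the polynomial $\partial_s\kappa(X^2/3 - Y^2)$, and the residual $1/N$-constant (obtained from \eqref{eq_derivative_t_density} after invoking Lemma \ref{lemma_hn_relations} to eliminate the $|\psi'(z_0)|$-factor in favor of $\partial_s\kappa/\kappa$) integrate to produce all remaining $\eta$-dependent pieces of $\Psi(\xi,\eta)$ at order $N^{-1}$, including the mixed $\xi^j\eta^k$ monomials and the $\partial_s\kappa(\xi^2\eta - \eta^3 - \eta)/3$ term.

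The main obstacle is the bookkeeping at order $N^{-1}$, where contributions arrive simultaneously from the $\kappa^2$ terms of both \eqref{eq_derivative_n_density} and \eqref{eq_derivative_t_density}, the $\partial_s\kappa$ polynomial in \eqref{eq_derivative_t_density}, and the constant $1/N$-corrections in both; a careful term-by-term matching must show they recombine into the precise expression for $\Psi(\xi,\eta)$. The uniformity in $z_0\in\partial K$ is inherited directly from Proposition \ref{prop_derivative_density} and Lemma \ref{lem:zzero}, and the error terms compose cleanly: integrating $\sqrt{N/(2\pi^3)}\,\ee^{-2NX^2}\cdot O(N^{-3/2+9\nu})$ over an interval of length $\xi/\sqrt N$ yields $O(N^{-3/2+9\nu})$, with the weaker $O(N^{-3/2+7\nu})$ error of \eqref{eq_derivative_t_density} similarly subsumed, matching the theorem's bound.
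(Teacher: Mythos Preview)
Your proposal is correct and follows essentially the same approach as the paper: start from $\rho_n(z_0)$ via Lemma \ref{lem:zzero}, integrate $\partial_{\bf n}\rho_n$ (at $Y=0$) using \eqref{eq_derivative_n_density} from $z_0$ to $z_0+\xi{\bf n}/\sqrt N$, then integrate $\partial_{\bf t}\rho_n$ using \eqref{eq_derivative_t_density} from there to the final point, with all error terms and uniformity inherited from Proposition \ref{prop_derivative_density}. One small remark: the $1/N$-constant in \eqref{eq_derivative_t_density} already has Lemma \ref{lemma_hn_relations} built in (the residual $|\psi'(z_0)|$ displayed there is a typo in the paper---the correct expression is $-\partial_s\kappa/(3N)$, which you can verify from \eqref{eq_derivative_t_density1}), so no further invocation is needed and the $-\eta$ in $\Psi$ falls out directly.
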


\begin{proof}
The proof is by direct calculation where we integrate $\partial_{\bf n}\rho_n$ from $z_0$ to $z_0+{\bf n}\,\xi/\sqrt N$ and use Lemma \ref{lem:zzero} for the value of $\rho_n(z_0)$.  More explicitly we integrate $\partial_{\bf n}\rho_n$ using the expression in \eqref{eq_derivative_n_density}, over the integration variable $X$ from $0$ to $\xi/\sqrt N$, i.e.  
\begin{equation*}
	\rho_n \left(z_0+\frac{\xi {\bf n}}{\sqrt{N}}\right) - \rho_n(z_0)=\int_{0}^{\xi/\sqrt{N}} \partial_{\bf n}\rho_n(z_0+ X {\bf n}) \,\dd X.
\end{equation*}
Once we obtain the value of $\rho_n$ at $z_0 + \xi\,{\bf n}/\sqrt N$, we integrate $\partial_{\bf t}\rho_n$ from $z_0 + \xi\,{\bf n}/\sqrt N$ to the final destination $z_0 + \xi\,{\bf n}/\sqrt N + \eta\,{\bf t}/\sqrt N$.   Again, this is done by integrating the expression in \eqref{eq_derivative_t_density} over the integration variable $Y$ from $0$ to $\eta/\sqrt N$, i.e.
\begin{equation*}
	\rho_n \left(z_0+\frac{\xi {\bf n}}{\sqrt{N}}+\frac{\eta {\bf t}}{\sqrt{N}}\right) - \rho_n\left(z_0+\frac{\xi {\bf n}}{\sqrt{N}}\right)=\int_{0}^{\eta/\sqrt{N}} \partial_{\bf n}\rho_n\left(z_0+\frac{\xi {\bf n}}{\sqrt{N}}+Y{\bf t}\right) \,\dd Y.
\end{equation*}
After the straightforward calculation, the error bound comes from that of \eqref{eq_derivative_n_density}. 
\end{proof}

\section{Vanishing Cauchy Transform}\label{sec:vanishing-cauchy}

In this section, we consider a general potential $V$ that is defined in the sentence containing \eqref{eq:generalV}.

\vspace{0.3cm}
\begin{lemma}\label{lemma_density_kernel1}
For a fixed $z\in\CC$, let the kernel $K_n$ \eqref{eq:Kernel} satisfy the asymptotic behavior,
\begin{equation}\label{eq90}
|K_n(z,w)|=\frac{N}{\pi}e^{-\frac{N}{2}|z-w|^2}\Big(1+{\mathcal O}\big(e^{-\epsilon N}\big)\Big), 
\end{equation}
uniformly over $w\in B_\delta(z)$ for some $\epsilon>0$ and $\delta>0$. 
Then there exists $\epsilon'>0$ such that
\begin{equation*}
\int_{{\mathbb C}\setminus B_\delta(z)} |K_n(z,w)|^2 \dd A(w)={\mathcal O}\big(\ee^{-N \epsilon'}\big).
\end{equation*}
\end{lemma}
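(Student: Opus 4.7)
The plan is to exploit the reproducing property of $K_n$: since $\{p_j\}_{j=0}^{n-1}$ are orthonormal with respect to $e^{-NV}\dd A$, the definition \eqref{eq:Kernel} yields
\begin{equation*}
\int_{\CC} K_n(z,u)K_n(u,w)\,\dd A(u) = K_n(z,w),
\end{equation*}
and in particular, using $K_n(u,z)=\overline{K_n(z,u)}$,
\begin{equation*}
\int_{\CC}|K_n(z,w)|^2\,\dd A(w) = K_n(z,z).
\end{equation*}
This identity is the key: the hypothesis \eqref{eq90} already controls $K_n$ on $B_\delta(z)$, so the complement integral can be extracted by a subtraction.

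Next, setting $w=z$ in \eqref{eq90} gives
\begin{equation*}
K_n(z,z)=\frac{N}{\pi}\bigl(1+\mathcal{O}(\ee^{-\epsilon N})\bigr).
\end{equation*}
Then, using \eqref{eq90} on $B_\delta(z)$ and computing the elementary Gaussian integral in polar coordinates centered at $z$,
\begin{equation*}
\int_{B_\delta(z)}|K_n(z,w)|^2\,\dd A(w)
=\frac{N^2}{\pi^2}\int_{B_\delta(z)}\ee^{-N|z-w|^2}\bigl(1+\mathcal{O}(\ee^{-\epsilon N})\bigr)\dd A(w)
=\frac{N}{\pi}\bigl(1-\ee^{-N\delta^2}\bigr)\bigl(1+\mathcal{O}(\ee^{-\epsilon N})\bigr).
\end{equation*}
Choosing $\epsilon''=\min(\epsilon,\delta^2)$, this equals $\frac{N}{\pi}\bigl(1+\mathcal{O}(\ee^{-\epsilon'' N})\bigr)$.

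Subtracting from the total,
\begin{equation*}
\int_{\CC\setminus B_\delta(z)}|K_n(z,w)|^2\,\dd A(w)
= K_n(z,z)-\int_{B_\delta(z)}|K_n(z,w)|^2\,\dd A(w)
=\mathcal{O}\bigl(N\ee^{-\epsilon'' N}\bigr),
\end{equation*}
and for any $0<\epsilon'<\epsilon''$ the polynomial prefactor is absorbed into the exponential, yielding the desired bound $\mathcal{O}(\ee^{-N\epsilon'})$.

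There is no serious obstacle here: the whole argument is a one-line subtraction once the reproducing identity is in hand. The only care needed is to note that \eqref{eq90} is assumed only on $B_\delta(z)$, so nothing outside the ball is used directly, and to keep track of how the various exponentially small errors ($\ee^{-\epsilon N}$ from the hypothesis, $\ee^{-N\delta^2}$ from truncating the Gaussian tail) combine into a single rate $\epsilon'$.
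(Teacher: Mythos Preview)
Your proof is correct and follows essentially the same approach as the paper's own proof: both use the reproducing identity $\int_{\CC}|K_n(z,w)|^2\,\dd A(w)=K_n(z,z)$, evaluate $K_n(z,z)$ from the hypothesis, compute the Gaussian integral over $B_\delta(z)$, and subtract. Your treatment is slightly more explicit about combining the two exponential rates and absorbing the polynomial prefactor $N$, but the argument is the same.
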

\begin{proof}
From the reproducing property of the kernel we have 
\begin{align}\label{eq:KnintK}\begin{split}
\int_{{\mathbb C}\setminus B_\delta(z)} |K_n(z,w)|^2 \dd A(w)&= \int_{\CC} |K_n(z,w)|^2\,\dd A(w) -\int_{ B_\delta(z)} |K_n(z,w)|^2\,\dd A(w)
\\ & =  K_n(z,z) -\int_{ B_\delta(z)} |K_n(z,w)|^2\,\dd A(w).
\end{split}
\end{align}
Taking $z=w$ in \eqref{eq90} we have
\begin{equation*}
|K_n(z,z)|=K_n(z,z)=\frac{N}{\pi}\Big(1+\mathcal{O}\big(\ee^{-N \epsilon}\big)\Big).
\end{equation*}
Integrating $|K_n(z,w)|$ over the disk $B_\delta(z)$ we have, for some $\epsilon_1>0$,
\begin{align*}
\int_{B_\delta(z)} |K_n(z,w)|^2 \dd A(w)&= \frac{N^2}{\pi^ 2}\int_{B_\delta(z)} \ee^{-N |z-w|^2} \left(1+\mathcal{O}\big(\ee^{-N\epsilon}\big)\right) \dd A(w)
\\&=\frac{N}{\pi}\Big(1+\mathcal{O}\big(\ee^{-N\delta^2}\big)\Big)\Big(1+\mathcal{O}\big(\ee^{-N\epsilon}\big)\Big).
\end{align*}
Proof is done by substituting the above two estimates into \eqref{eq:KnintK}.
\end{proof}

\begin{lemma}\label{lemma_density_kernel}
For a fixed $z\in\CC$, let Q be analytic in a neighborhood of $z$ and let the kernel $K_n$ \eqref{eq:Kernel} satisfy the asymptotic behavior,
\begin{equation}\label{eq9}
 \frac{1}{N} K_{n}(z,w)= \displaystyle\frac{1}{\pi}e^{-\frac{N}{2}|z-w|^2}e^{iN\, {\rm Im}\big(z\cc{w}-Q(z)- \overline{Q(w)} \big) }
\big(1+\mathcal{O}(\ee^{-N\epsilon})\big),
\end{equation}
uniformly for $w\in B_\delta(z)$ for some $\epsilon>0$ and $\delta>0$. 
Then there exists $\epsilon'>0$ such that
\begin{equation*}
\partial K_n(z,z)={\mathcal O}\big(\ee^{-\epsilon' N}\big). 
\end{equation*}

\end{lemma}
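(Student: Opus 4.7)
The plan is to strip off the Gaussian/oscillatory envelope from $K_n$ and exploit the holomorphy of what remains. Define the pre-kernel, generalizing \eqref{eq:pre},
\[
\tilde{\mathcal{K}}_n(z,w):=K_n(z,w)\,\exp\Bigl(\tfrac{N}{2}|z-w|^2-iN\,{\rm Im}\bigl(z\cc{w}-Q(z)-\cc{Q(w)}\bigr)\Bigr).
\]
Plugging in $V(z)=|z|^2-Q(z)-\cc{Q(z)}$ and simplifying collapses the total exponent to $-Nz\cc{w}+NQ(z)+N\cc{Q(w)}$, so
\[
\tilde{\mathcal{K}}_n(z,w)=\sum_{j=0}^{n-1}p_j(z)\cc{p_j(w)}\,\ee^{-Nz\cc{w}+NQ(z)+N\cc{Q(w)}},
\]
which is holomorphic in $z$ and antiholomorphic in $w$. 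The hypothesis \eqref{eq9} then becomes $\tfrac{1}{N}\tilde{\mathcal{K}}_n(z,w)-\tfrac{1}{\pi}=\mathcal{O}(\ee^{-N\epsilon})$ uniformly for $w\in B_\delta(z)$; since a short calculation gives $\tilde{\mathcal{K}}_n(w,z)=\cc{\tilde{\mathcal{K}}_n(z,w)}$, the same bound holds after exchanging the two slots.

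Next I would reduce $\partial K_n(z,z)$ to a single holomorphic derivative of the pre-kernel. Treating the arguments of $K_n$ as independent variables $z_1,z_2$, the chain rule gives $\partial K_n(z,z)=\bigl[(\partial_{z_1}+\partial_{z_2})K_n(z_1,z_2)\bigr]_{z_1=z_2=z}$. Writing $K_n=\tilde{\mathcal{K}}_n\cdot\Psi$ with $\Psi=\exp\bigl(-\tfrac{N}{2}|z_1-z_2|^2+iN\,{\rm Im}(z_1\cc{z_2}-Q(z_1)-\cc{Q(z_2)})\bigr)$, a direct computation yields
\[
\partial_{z_1}\log\Psi\big|_{z_1=z_2=z}=\tfrac{N}{2}(\cc{z}-Q'(z)),\qquad \partial_{z_2}\log\Psi\big|_{z_1=z_2=z}=-\tfrac{N}{2}(\cc{z}-Q'(z)),
\]
which sum to zero on the diagonal. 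Combined with $\partial_{z_2}\tilde{\mathcal{K}}_n\equiv0$ (antiholomorphy in the second slot), this gives the clean identity
\[
\partial K_n(z,z)=\partial_{z_1}\tilde{\mathcal{K}}_n(z_1,z)\big|_{z_1=z}.
\]

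Finally I would extract the required bound by Cauchy's integral formula. By the symmetry noted above, $u\mapsto\tfrac{1}{N}\tilde{\mathcal{K}}_n(u,z)-\tfrac{1}{\pi}$ is holomorphic in $u$ and of size $\mathcal{O}(\ee^{-N\epsilon})$ on the disk $|u-z|\le\delta$. Cauchy's formula on the circle $|u-z|=\delta/2$ bounds its first derivative at $u=z$ by $\mathcal{O}(\delta^{-1}\ee^{-N\epsilon})$; multiplying by $N$ and absorbing the polynomial prefactor into a marginally smaller exponent yields $\partial K_n(z,z)=\mathcal{O}(\ee^{-N\epsilon'})$ for any $0<\epsilon'<\epsilon$. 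The only delicate ingredient is the diagonal cancellation of $\partial_{z_1}\log\Psi+\partial_{z_2}\log\Psi$: without it, differentiating the Gaussian/phase prefactor against the $N/\pi$ leading term of $\tilde{\mathcal{K}}_n$ would contribute a term of size $\mathcal{O}(N)$ that is not exponentially small.
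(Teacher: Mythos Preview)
Your proof is correct and follows essentially the same route as the paper: define the pre-kernel, use the hypothesis and the symmetry $\tilde{\mathcal K}_n(w,z)=\overline{\tilde{\mathcal K}_n(z,w)}$ to get $\tfrac{1}{N}\tilde{\mathcal K}_n(u,z)=\tfrac{1}{\pi}+\mathcal O(\ee^{-N\epsilon})$, and apply Cauchy's formula on a circle of radius $\delta/2$. The only difference is cosmetic: where you verify $\partial K_n(z,z)=\partial_{z_1}\tilde{\mathcal K}_n(z_1,z)\big|_{z_1=z}$ by computing $\partial_{z_1}\log\Psi+\partial_{z_2}\log\Psi=0$ on the diagonal, the paper short-circuits this by observing that $\tilde{\mathcal K}_n(z,z)=K_n(z,z)$ identically (since $\Psi(z,z)=1$), hence their $\partial$-derivatives along the diagonal coincide, and antiholomorphy in the second slot finishes the reduction.
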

\begin{proof}
We define the pre-kernel for a general $Q$ by  
\begin{align*}
{\cal K}_n(z,w)=\sum_{j=0}^{n-1} p_j(z)\cc{p_j(w)} \ee^{-N(z\cc{w}-Q(z)-\cc{Q(w)})}
\end{align*}
which is analytic in $z$ and anti-analytic in $w$. Since ${\cal K}_n(z,z)=K_n(z,z)$ for all $z$, we have
 $$\partial K_n(z,z)=\partial {\cal K}_n(z,z)=\bigg[\frac{\partial}{\partial z} {\cal K}_n(z,w)\bigg]_{w=z}.$$
From the relation between ${\cal K}_n$ and $K_n$ (see \eqref{eq:pre}) and from \eqref{eq9}, we have that
\begin{equation*}
  {\cal K}_n(z,w) = \frac{N}{\pi}\Big(1+{\mathcal O}\big(e^{-\epsilon N}\big)\Big),\quad w\in B_\delta(z).
  \end{equation*}
Using the analyticity of ${\cal K}_n(z,w)$ in $z$, choosing ${\cal C}$ to be a circle around $z$ with the radius $\delta/2$, we have that
\begin{equation*}
\bigg[\frac{\partial}{\partial z} {\cal K}_n(z,w)\bigg]_{w=z}=\frac{1}{2\pi i} \oint_{\cal C}\frac{{\cal K}_n(\zeta,z)}{(\zeta-z)^2}d\zeta = {\cal O}\big( e^{-\epsilon N}\big).
\end{equation*}
Here we use that ${\cal K}_n(\zeta,z)=\overline{{\cal K}_n(z,\zeta)}$.
\end{proof}

\begin{proof}[Proof of Theorem \ref{thm:Cauchy}]
We recall that the probability density function for $n$ particles is given by 
\begin{equation*}
{\rm Prob}_n\big(\{z_1,\cdots,z_n\}\big)   = \frac{1}{{\cal Z}_n} \bigg(\prod_{1\le l<k\le n}|z_l-z_k|^2\bigg) \prod_{j=1}^n \ee^{-N V(z_j)}.
\end{equation*}
For $1\le m\le n$ the $m$-point correlation function is defined as 
$$R_n^{(m)}(z_1,\ldots z_m)= \frac{n!}{(n-m)!} \int _{\CC^{n-m}} {\rm Prob}_n\big(\{z_1,\cdots,z_n\}\big) \prod_{j=m+1}^n \dd A(z_j).$$
We note that $R_n^{(1)}(z) =K_n(z,z)=N\rho_n(z)$ and $R_n^{(2)}(z,w)= K_n(z,z) K_n(w,w)-|K_n(z,w)|^2$, see for example \cite{mehta2004random}.

Taking the holomorphic derivative of $\rho_n(z)$ one gets
\begin{align}\nonumber
\partial \rho_n(z)&= \frac{n}{N}\int_{\CC^{n-1}} \Big(-N\partial V(z) +\sum_{j=2}^n \frac{1}{z-z_j}\Big)\, {\rm Prob}_n\big(\{z,z_2,\cdots,z_n\}\big)\,\prod_{j=2}^n \dd A(z_j)
\\\nonumber &=-N \partial V(z) \rho_n(z) + \frac{n(n-1)}{N}\int_{\CC^{n-1}} \frac{\dd A(w)}{z-w} {\rm Prob}_n\big(\{z,w,z_3,\cdots,z_n\} \,\prod_{j=3}^n
dA(z_j) \\
\nonumber &=-N \partial V(z) \rho_n(z) + \frac{1}{N}\int_\CC \frac{\dd A(w)}{z-w}\Big( K_n(z,z)K_n(w,w)-|K_n(z,w)|^2\Big).
\end{align}
In the second equality we used that our probability density function is symmetric under permutation. Dividing by $\rho_n$ we get
\begin{equation}\label{eq1}
\frac{\partial \rho_n(z)}{\rho_n(z)}=-N \partial V(z) +\int_\CC\frac{K_n(w,w)}{z-w} \,\dd A(w) -\frac{1}{K_n(z,z)}\int_\CC\frac{|K_n(z,w)|^2}{z-w}
\,\dd A(w).\end{equation}
From $\rho_n(z)=\frac{1}{N}K_n(z,z)$, Lemma \ref{lemma_density_kernel} says that the left-hand side of the above equation is exponentially small in $N$. 

Below we show that the last term of the right-hand side of \eqref{eq1} is also exponentially small in $N$. 
\begin{align}\label{eq_int_split}
\int_\CC\frac{|K_n(z,w)|^2}{z-w}\,\dd A(w)=\int_{B_\delta(z)}\frac{|K_n(z,w)|^2}{z-w}\,\dd A(w)+\int_{\CC\setminus B_\delta(z)} \frac{|K_n(z,w)|^2}{z-w}\,\dd A(w).
\end{align}
For the first term we find 
\begin{align*}
\int_{B_\delta(z)}\frac{|K_n(z,w)|^2}{z-w}\,\dd A(w)&=\frac{N^2}{\pi^2}\int_{B_\delta(0)}\frac{1}{\zeta}\,\ee^{-N |\zeta|^2}\Big(1+{\mathcal O}\big(\ee^{-\epsilon N}\big)\Big) \dd A(\zeta) 
\\&=\frac{N^2}{\pi^2}\int_0^\delta\int_0^{2\pi} e^{-i\theta}\,\ee^{-N r^2}\Big(1+{\mathcal O}\big(\ee^{-\epsilon N}\big)\Big) \dd \theta\dd r =\mathcal{O}\big(N^{3/2}\ee^{-\epsilon N} \big).
 \end{align*}
For the second term in \eqref{eq_int_split} Lemma \ref{lemma_density_kernel1} says that, for some $\epsilon'>0$,
\begin{align*}
\left| \int_{\CC\setminus B_\delta(z)} \frac{|K_n(z,w)|^2}{z-w}\,\dd A(w) \right|& \le \frac{1}{\delta} \int_{\CC\setminus B_\delta(z)} |K_n(z,w)|^2\,\dd A(w)= \mathcal{O}\big(\ee^{-N \epsilon'} \big).
\end{align*}
We write the equation \eqref{eq1} replacing $n$ by $n+1$:
\begin{equation}\label{eq2}
\frac{\partial\rho_{n+1}(z)}{\rho_{n+1}(z)}=-N \partial V(z) +\int_\CC\frac{K_{n+1}(w,w)}{z-w} \, \dd A(w) -\frac{1}{K_{n+1}(z,z)}\int_\CC\frac{|K_{n+1}(z,w)|^2}{z-w}\,\dd A(w).\end{equation}
By the same arguments, the left hand side and the last term in the right hand side both vanish exponentially fast in $N$.
Subtracting \eqref{eq_int_split} from \eqref{eq2}, the term $-N \partial V(z)$ cancels, and one obtains that
\begin{equation*}
\int_\CC\frac{K_{n+1}(w,w)}{z-w}\,\dd A(w)-\int_\CC\frac{K_{n}(w,w)}{z-w}\,\dd A(w) =  \int_\CC\frac{|p_n(w)|^2\, \ee^{-NV(w)} }{z-w} \dd A (w)
\end{equation*}
vanishes exponentially fast as $N$ goes to infinity.
\end{proof}

\section{Discussions}\label{sec_discussion}
\subsection{
Implication of Theorem \ref{thm1}}\label{sec_consistency}

It is immediate that Theorem \ref{thm1} follows from Theorem \ref{thm_density}.  On the other hand, a geometric intuition suggests that the converse is also true, i.e., the density in the normal direction from the boundary determines the density in any direction.  In this section, we calculate the density in the tangential direction from the density in the normal direction and from the local shape of the curve $\partial K$.  This means that Theorem \ref{thm_density} and Theorem \ref{thm1} follow from each other.

Let $\Gamma$ be a smooth Jordan curve and $\gamma:\RR\to\Gamma$ ($s\mapsto\gamma(s)$) be the arclength parametrization.  Defining ${\bf n}$ to be the normal vector 
\begin{equation*}
  {\bf n}(\gamma(s)):= -i\frac{\dd\gamma}{\dd s},
\end{equation*}
let us consider the following mapping. 
\begin{equation*}
 \Xi:\RR^2\to \CC ,\quad \Xi(x,s)=  \gamma(s)+x\,{\bf n}(\gamma(s)).
\end{equation*}
The Jacobian of this mapping can be evaluated to be
\begin{equation}\label{eq:Jacobian}
1+\kappa(\gamma(s))\, x
\end{equation}
where $\kappa(\gamma(s))$ is the curvature of $\Gamma$ at $\gamma(s)$.  For a sufficiently small $|x|$ the Jacobian is strictly positive and, therefore, $\Xi$ is locally invertible around $\gamma(s)$ for any $s\in\RR$.  It means that there exists a neighborhood of $\gamma(s)$ such that $\Xi^{-1}$ defines a coordinate system on that neighborhood.

Let $z_0=\gamma(s_0)\in\Gamma$.  We have $\Xi(0,s_0)=z_0$. Since $\Xi$ is locally a diffeomorphism, there exists a neighborhood $U_0$ of $z_0$ and a neighborhood $V_0\subset\RR^2$ of the origin such that
\begin{equation*}
\Xi_0:V_0\to U_0,\quad (x,y)\mapsto \Xi_0(x,y):=\Xi(x,s_0+y)
\end{equation*}
is a diffeomorphism.

We define the mapping ${\cal F}_0:\RR^2\to \CC$ by
\begin{equation*}
(X,Y)\mapsto {\cal F}_0(X,Y):=z_0 + X{\bf n}(z_0) + Y {\bf t}(z_0),\quad {\bf t}(z_0) = i{\bf n}(z_0).
\end{equation*}
It is clear that ${\cal F}_0$ defines a {\em flat} coordinate around $z_0$.

The relation between the two coordinates, $(X,Y)$ and $(x,y)$, is given by the following proposition.

\vspace{0.3cm}
\begin{prop}\label{prop_geom}
For $(x,y)\in V_0$, there is a unique $(X,Y)\in \RR^2$ such that $$\Xi_0(x,y)={\cal F}_0(X,Y).$$
As $z$ goes to $z_0$, the two coordinates $(x,y)$ and $(X,Y)$ are related by
\begin{align*}
x&= X + \frac{\kappa }{2}Y^2 + \frac{\partial_s \kappa}{6} Y^3 - \frac{ \kappa^2 }{2}X Y^2+ \frac{\kappa^3}{2}X^2Y^2 -\frac{\kappa\partial_s \kappa}{2}XY^3 + \left(\frac{\partial_s^2\kappa}{24} -\frac{\kappa^3}{8}\right)Y^4   +\mathcal{O}(|z-z_0|^5),
\\\nn
y&=Y -\kappa XY + \kappa^2 X^2Y -\frac{\partial_s\kappa}{2} XY^2 -\frac{\kappa^2}{3} Y^3 
\\ & \qquad\qquad\qquad-\kappa^3 X^3 Y + \frac{3\kappa\partial_s\kappa}{2}X^2 Y^2-\frac{7\kappa\partial_s\kappa}{24} Y^4 +\left(\kappa^3-\frac{\partial_s^2\kappa}{6}\right) X Y^3+ \mathcal{O}(|z-z_0|^5), 
\end{align*}
where $\kappa$ and its derivatives are all evaluated at $z_0\in\Gamma$.
\end{prop}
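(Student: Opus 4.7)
The plan is to expand the defining identity
\[
\Xi_0(x,y)=\gamma(s_0+y)+x\,{\bf n}(\gamma(s_0+y))={\cal F}_0(X,Y)=z_0+X{\bf n}(z_0)+Y{\bf t}(z_0)
\]
as a bivariate Taylor series in $(x,y)$ up to total degree four, read off $X(x,y)$ and $Y(x,y)$ by projecting onto the orthogonal frame $({\bf n}(z_0),{\bf t}(z_0))$, and then invert the resulting polynomial map.

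The Taylor expansion in the variable $y$ is controlled by the planar Frenet identities, which in the complex notation of the paper (${\bf t}=i{\bf n}$, ${\bf n}=-i\,\dd\gamma/\dd s$) take the form $\partial_s\gamma={\bf t}$, $\partial_s{\bf t}=-\kappa{\bf n}$, $\partial_s{\bf n}=\kappa{\bf t}$. Iterating these yields
\[
\partial_s^2\gamma=-\kappa{\bf n},\quad \partial_s^3\gamma=-(\partial_s\kappa){\bf n}-\kappa^2{\bf t},\quad \partial_s^4\gamma=(\kappa^3-\partial_s^2\kappa){\bf n}-3\kappa(\partial_s\kappa){\bf t},
\]
and the analogous expansions for $\partial_s^k{\bf n}$ up to $k=3$, each decomposed into its ${\bf n}_0$ and ${\bf t}_0$ parts. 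Substituting these into $\Xi_0(x,y)-z_0$ and collecting coefficients through total degree four produces explicit polynomials
\[
X=x-\tfrac{\kappa}{2}y^2-\tfrac{\partial_s\kappa}{6}y^3-\tfrac{\kappa^2}{2}xy^2+\tfrac{\kappa^3-\partial_s^2\kappa}{24}y^4-\tfrac{\kappa\partial_s\kappa}{2}xy^3+{\cal O}_5,
\]
\[
Y=y+\kappa xy+\tfrac{\partial_s\kappa}{2}xy^2-\tfrac{\kappa^2}{6}y^3-\tfrac{\kappa\partial_s\kappa}{8}y^4+\tfrac{\partial_s^2\kappa-\kappa^3}{6}xy^3+{\cal O}_5,
\]
with $\kappa$ and its $s$-derivatives evaluated at $z_0$.

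Since the Jacobian at $(x,y)=(0,0)$ is the identity (which we already knew from \eqref{eq:Jacobian} at $x=0$), the polynomial map is locally invertible; the inversion is carried out order by order by the bootstrap $x\mapsto X+(X-x)$-style substitution, starting from $x=X+{\cal O}_2$, $y=Y+{\cal O}_2$ and iterating three times to capture all degree-four contributions. Matching coefficients yields precisely the formulas in the proposition.

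The only real difficulty is the bookkeeping: the fourth-order inversion mixes $\kappa$, $\partial_s\kappa$, and $\partial_s^2\kappa$ into a fair number of terms, and sign errors are easy to make when converting between $\partial_s{\bf t}=-\kappa{\bf n}$ and the complex forms. Either a careful hand computation organized by total degree in $(X,Y)$, or a symbolic check, is needed; there is no conceptual obstacle, and the entire argument uses only the curve $\Gamma$ being smooth (not that $K$ is an ellipse), which is why the result applies to general droplets as remarked after Lemma \ref{lem_ntkappa}.
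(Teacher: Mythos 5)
Your approach is essentially the same as the paper's: both Taylor-expand $\gamma(s_0+y)$ and ${\bf n}(\gamma(s_0+y))$ using the Frenet-type identities $\partial_s\gamma={\bf t}$, $\partial_s{\bf n}=\kappa{\bf t}$, $\partial_s{\bf t}=-\kappa{\bf n}$, read off $X(x,y)$ and $Y(x,y)$ by projecting onto the frame $\bigl({\bf n}(z_0),{\bf t}(z_0)\bigr)$ (your intermediate formulas for $X$ and $Y$ agree exactly with the paper's equation \eqref{Xi-expand}), and then invert the degree-four polynomial map; the paper does the inversion by positing the general Taylor ansatz $x=\sum c_{jk}X^jY^k$, $y=\sum d_{jk}X^jY^k$ and matching coefficients, which is the same computation as your iterative bootstrap. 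Both arguments use only smoothness of $\Gamma$, consistent with the paper's remark that the lemma holds for general droplets.
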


\begin{proof}
A Taylor expansion gives
\begin{align*}
  \gamma(y+s_0) &= z_0+\gamma'(s_0) y+\frac{1}{2}\gamma''(s_0) y^2+\frac{1}{6}\gamma'''(s_0) y^3+\frac{1}{24}\gamma^{(4)}(s_0) y^4+{\cal O}(|y|^5)
  \\
&  =z_0+{\bf t} y-\frac{\kappa}{2}{\bf n} y^2-\frac{1}{6}(\partial_s\kappa\, {\bf n} +\kappa^2{\bf t}) y^3
  -\frac{1}{24}(\partial_s^2\kappa\, {\bf n}+3\kappa\partial_s\kappa{\bf t}-\kappa^3{\bf n}) y^4+{\cal O}(|y|^5),
\end{align*}
where, in the last expression, the Taylor coefficients are evaluated at $z_0$.
In the last equality we used the following elementary facts.
\begin{equation*}
  \partial_s\gamma={\bf t},\quad \partial_s{\bf n}=\kappa{\bf t},\quad\partial_s{\bf t}=-\kappa{\bf n},
\end{equation*}
where ${\bf n}$ and ${\bf t}$ are understood to depend on $\gamma(s)$.
Similarly, one can get
\begin{align*}
  {\bf n}(\gamma(y+s_0))&={\bf n}(z_0)+\partial_s{\bf n}(z_0) y +\frac{1}{2}\partial_s^2{\bf n}(z_0) y^2+\frac{1}{6}\partial_s^3{\bf n}(z_0) y^3 +{\cal O}(|y^4|)
  \\
  &=
  {\bf n}+\kappa{\bf t} y +\frac{1}{2}(\partial_s\kappa {\bf t} - \kappa^2 {\bf n}) y^2+\frac{1}{6}(\partial_s^2\kappa {\bf t}-3 \kappa\partial_s\kappa  {\bf n} -\kappa^3 {\bf t} ) y^3 +{\cal O}(|y^4|),
\end{align*}
where all the coefficients are evaluated at $z_0$.

Then we can expand $\Xi_0(x,y)$ as follows.
\begin{equation}\label{Xi-expand}
\begin{split}
\Xi_0(x,y)&= z_0 + {\bf n}(z_0)\bigg(x-\frac{\kappa}{2}y^2-\frac{\kappa^2}{2}x y^2-\frac{\partial_s\kappa}{6}y^3 +\frac{1}{24}(\kappa^3-\partial_s^2\kappa)y^4 -\frac{1}{2}(\kappa\partial_s\kappa)xy^3 \bigg) 
\\
&\qquad + {\bf t}(z_0)\bigg( y +\kappa xy+\frac{1}{2}\partial_s\kappa x y^2-\frac{\kappa^2}{6} y^3-\frac{\kappa\partial_s\kappa}{8}y^4+\frac{1}{6}(\partial_s^2\kappa -\kappa^3)xy^3 
\bigg)+ {\cal O}(|z-z_0|^5)
\\& = z_0 + {\bf n}(z_0) X + {\bf t}(z_0) Y. 
\end{split}
\end{equation}
So we have obtained the series expansions of $X$ and $Y$ in terms of $x$ and $y$.   To find the series expansion of $x$ and $y$ in terms of $X$ and $Y$, we simply substitute the most general Taylor expansions, 
\begin{equation*}
  x= \sum_{j+k\leq 4} c_{jk} X^j Y^k  + {\cal O}(|z-z_0|^5),\qquad 
    y= \sum_{j+k\leq 4} d_{jk} X^j Y^k + {\cal O}(|z-z_0|^5),
\end{equation*}
into \eqref{Xi-expand} and determine the coefficients $c_{jk}$'s and $d_{jk}$'s by matching each order.
\end{proof}

Below we illustrate how Theorem \ref{thm_density} is obtained from Theorem \ref{thm1}.  We only consider $\xi=0$ (i.e. purely tangential direction) to make the presentation simple.

For $z_0=\gamma(s_0)$, we define $x$ and $y$ to satisfy the following relation. 
\begin{equation*}
  z_0+\frac{\eta\,{\bf t}}{\sqrt N} = \Xi_0(x,y).
\end{equation*}
Using the above $x$ and $y$, we define $\widehat z_0:=\gamma(s_0+y)$.  Then the identity,
\begin{equation}\label{eq:tangent-in-normal}
  \rho_n \left(z_0+\frac{\eta {\bf t}}{\sqrt{N}}\right) = \rho_n \left(\widehat z_0+ x\,{\bf n}(\widehat z_0)\right),
\end{equation}
is simply the density at the same point expressed in two different coordinate systems. 
This equation gives the density in the tangential direction expressed in terms of the density in the normal direction.  This already implies that Theorem \ref{thm1} gives Theorem \ref{thm_density} as we mentioned in the beginning of this section.

Proposition \ref{prop_geom} gives the asymptotic expressions of $x$ and $y$ in terms of $\eta$.
\begin{equation*}
  x= \frac{\kappa}{2}\frac{\eta^2}{N}+\frac{\partial_s\kappa}{6}\frac{\eta^3}{N^{3/2}}+{\cal O}\left(\frac{|\eta|^4}{N^2}\right), \qquad y= \frac{\eta}{\sqrt N} -\frac{\kappa^2}{3} \frac{\eta^3}{N^{3/2}}+{\cal O}\left(\frac{|\eta|^4}{N^2}\right).
\end{equation*}
Using these series expansions with
\begin{equation*}
  \kappa(\widehat{z_0})=\kappa(z_0)+\partial_s\kappa(z_0)\frac{\eta}{\sqrt N} + {\cal O}\left(\frac{1}{N}\right)
\end{equation*} 
and the scaled normal coordinate:
$$\widehat\xi:=\sqrt N x=\frac{\kappa\,\eta^2}{2\sqrt N} + \frac{\partial_s\kappa}{6}\frac{\eta^3}{N} +{\cal O}\left(\frac{1}{N^{3/2}}\right),$$
Theorem \ref{thm1}
gives 
 \begin{align}\label{eq:tan-density}
 \begin{split}
\rho_n \left(z_0+\frac{\eta {\bf t}}{\sqrt{N}}\right) &=  \rho_n\left(\widehat{z_0}+\frac{\widehat\xi}{\sqrt N} {\bf n}(\widehat z_0)\right)
  =\frac{1 }{2\pi}\erfc\left(\sqrt 2\widehat\xi \right) +\frac{\ee^{-2 \hat{\xi}^2} }{  \sqrt{2\pi^3} }  \frac{\kappa(\widehat{z_0})}{ \sqrt{ N} } \left(\frac{\widehat\xi^2-1}{3} \right)  + \mathcal{O}\left(\frac{\widehat \xi}{N}\right) 
  \\
  &=\frac{1}{2\pi } -\frac{1}{\sqrt{2\pi^3}} \bigg(\frac{ \kappa(z_0)(1+ 3\eta^2)}{3\sqrt{N}} 
+\frac{ \partial_s \kappa(z_0) ( \eta^3  + \eta) }{3 N }\bigg) + o\left(\frac{1}{N}\right).
\end{split}
 \end{align}
This is exactly what Theorem \ref{thm_density} says for $\xi=0$.  

\subsection{Orthogonal polynomials from density}

We have obtained the asymptotic expansion of the density using the asymptotics of the orthogonal polynomials.  We can also do the opposite, i.e., we can obtain the subleading correction of the orthogonal polynomials assuming certain behavior of the density.  
In fact, we obtained the expression in Lemma \ref{lemma_hn_relations} by this idea. The proof in the appendix is not very motivating as it does not indicate how we find the identities in the first place.  
Below we show another proof of Lemma \ref{lemma_hn_relations} that does not require any prior knowledge about $h_{-1}$ and $h_0$.  Instead, we will use the behavior of the density given in Theorem \ref{thm1} and Theorem \ref{thm:1b} (or its weaker version Lemma \ref{lem:density}).

\begin{proof}[Alternative proof of Lemma \ref{lemma_hn_relations}]
One can integrate \eqref{eq_derivative_t_density1} in the tangential direction to get
\begin{align*}
\rho_n \left(z_0+\frac{\eta {\bf t}}{\sqrt{N}}\right)&= \rho_n\left(z_0\right)- \frac{1}{\sqrt{2\pi^3}}\bigg(\frac{\kappa \eta^2}{\sqrt{N}} -\frac{1}{ N}\bigg(2\kappa \eta\frac{ {\rm Im}\left(h_{-1}(z_0)-h_0(z_0)\right) }{|\psi'(z_0)|}- \frac{\eta^3 \partial_s\kappa}{3}    \bigg)\bigg) + o\left(\frac{1}{N}\right)
\\
&=\frac{1}{2\pi}- \frac{1}{\sqrt{2\pi^3}}\bigg(\frac{\kappa (3\eta^2+1)}{3\sqrt{N}} -\frac{1}{ N}\bigg(2\kappa \eta\frac{ {\rm Im}\left(h_{-1}(z_0)-h_0(z_0)\right) }{|\psi'(z_0)|}- \frac{\eta^3 \partial_s\kappa}{3}    \bigg)\bigg) + o\left(\frac{1}{N}\right).
\end{align*}
Comparing the term of order $1/N$ with that of \eqref{eq:tan-density} we obtain the second equation in Lemma \ref{lemma_hn_relations}.

Let $z_0\in\partial K$ and $z_1\in{\rm Int}\,K$ be given as in the proof of Lemma \ref{lem:zzero}.  Let $0<\tau<\nu<1/6$.  Let $z_2= z_0 + N^{-1/2+\tau} {\bf n}$ such that it is located symmetrically to $z_1$ with respect to $\partial K$.   Using the same method as in the proof of Lemma \ref{lem:zzero}, one can integrate \eqref{eq:with-hn} to get  
\begin{equation*}
\begin{split}
\rho_n(z_1)-\rho_n(z_2)&=\frac{1}{\pi} 
+\frac{1}{\pi N}\bigg[{\rm Re}\Big(h_0(z_0)+h_{-1}(z_0)\Big)+\frac{\partial_s\kappa\, {\rm Im}\left(h_{-1}(z_0)-h_{0}(z_0)\right)}{3|\psi'(z_0)|\kappa}
\\& \qquad\qquad\qquad\qquad +\frac{\kappa^2}{12}  +\frac{(\partial_s\kappa)^2}{18 \kappa^2}-\frac{\partial_s^2\kappa}{24\kappa} \bigg] +\mathcal{O}(N^{-3/2+9\nu}).
\end{split}
\end{equation*}
From Theorem \ref{thm:1b} the right-hand side must be equal to $1/\pi$ up to an error given by ${\cal O}\big(N^{5/6}\ee^{-N^{2\tau}}\big)$.  Taking $\nu<1/18$, this implies that the $1/N$-term above vanishes.  This provides the first equation of Lemma \ref{lemma_hn_relations} when combined with the second equation that we already obtained.
\end{proof}

We remark that the two relations in Lemma \ref{lemma_hn_relations}, combined with the asymptotic behavior of $h_0(z)$ and $h_{-1}(z)$ as $z$ goes to $\infty$, are enough to fully determine $h_0$ and $h_{-1}$.
Lemma \ref{lemma_hn_relations} gives ${\rm Re}(h_0(z)+h_{-1}(z))$ and ${\rm Im}(h_0(z)-h_{-1}(z))$ on the ellipse. 
Since both functions are harmonic on the exterior of the ellipse, and since both are bounded at $\infty$, Lemma \ref{lemma_hn_relations} determines both functions on the exterior of the ellipse. This uniquely determines $h_0(z)+h_{-1}(z)$ up to an imaginary constant and $h_0(z)-h_{-1}(z)$ up to a real constant, on the exterior of the ellipse.  The undetermined constants can be fixed by the behavior of the orthonormal polynomials' leading coefficients $\gamma_{n}$ and $\gamma_{n-1}$ (see Lemma \ref{lem:rho}) when $n,N\rightarrow \infty$ while $n=NT$.

\subsection{The number of particles outside $K$}

In the scaling limit, most of the Coulomb particles are supported on $K$. A typical sample looks like in Figure \ref{fig_sample}. In this section we will evaluate the number of particles outside $K$.  This problem was suggested to us by Arno Kuijlaars.

\begin{figure}[t!]
\begin{center}
\includegraphics[width=0.9\textwidth]{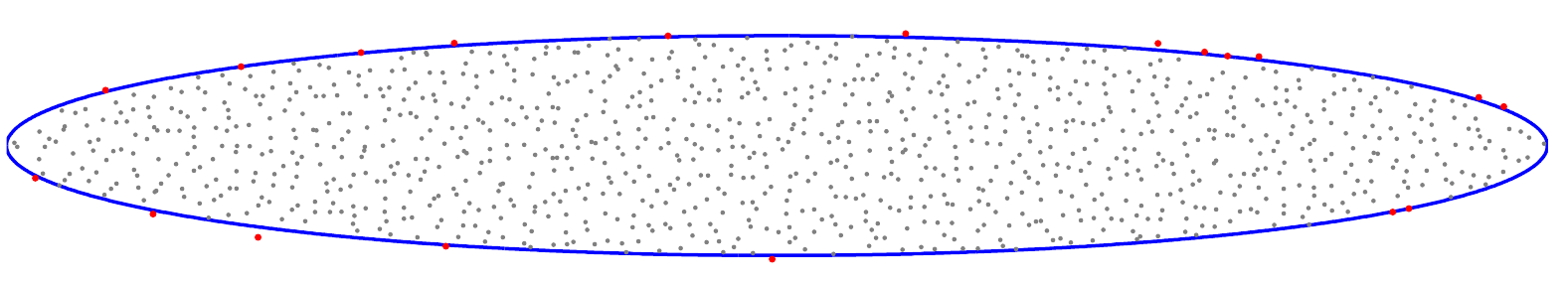}
\caption{Eigenvalue distribution of a sample of a random matrix $M_{n,t,T}$ \eqref{eq:MntT} with $n=1024$, $t=0.75$ and $T=1$}\label{fig_sample}
\end{center}
\end{figure}

Note that the total number of particles is given by $n=N \times {\rm Area}(K)/\pi$.
Therefore, the expected number of particles outside $K$ is given by 
\begin{equation*}
  n_\text{out} :=n - \#(\text{particles inside $K$}) = N \int_K \bigg(\frac{1}{\pi}-\rho_n(z)\bigg)\dd A(z).
\end{equation*}
Inside $K$ and away from $N^{\tau-1/2}$-neighborhood (that we denote by $U_{N,\tau}$ in Theorem \ref{thm:1b}) of $\partial K$, Theorem \ref{thm:1b} says that  $\pi^{-1}-\rho_n(z)$ disappears faster than any algebraic power of $N$ in the scaling limit.   This leads to
\begin{equation*}\begin{split}
  n_\text{out} &= N \int_{K\cap U_{\tau,N}} \bigg(\frac{1}{\pi}-\rho_n(z)\bigg)\dd A(z) +  \mathcal{O}( \ee^{-N^{\tau}}) 
 \\
& = \sqrt N \int_0^L \int_{-N^{\nu}}^0 \bigg[\frac{1}{\pi}-\rho_n\left( \gamma(s) +\frac{\xi}{\sqrt N}{\bf n}(\gamma(s))\right)\bigg]\left(1+\kappa(\gamma(s))\frac{\xi}{\sqrt N} \right) \dd\xi\,\dd s +  \mathcal{O}(\ee^{-N^{\tau}}) 
  \end{split}
\end{equation*}
where $L$ is the length of $\partial K$, $\gamma:[0,L)\to \partial K$ is the arclength parametrization of $\partial K$, and the factor $(1+\kappa\xi/\sqrt N)/\sqrt{N}$ is the Jacobian coming from the change of the measure from $\dd A$ to $\dd\xi\,\dd s$.

At a boundary point $z_0\in\partial K$, Theorem \ref{thm1} says that
\begin{align*}
\rho_{n}\bigg(z_0+\frac{\xi}{\sqrt N}{\bf n}\bigg)&=\rho^{(0)}_\infty(\xi)+\frac{1}{\sqrt N} \rho^{(1)}_\infty(\xi) + \frac{1}{N} \rho^{(2)}_\infty(\xi)+  {\cal O}\left(N^{-3/2+9\nu} \right),
\end{align*}
where
\begin{equation*}\label{eq:rho01}\begin{split}
\rho^{(0)}_\infty(\xi) &=\frac{1}{2\pi}\erfc(\sqrt{2} \xi), \\
\rho^{(1)}_\infty(\xi) &= \frac{\kappa(z_0)  }{  3\sqrt{2\pi^3} } (\xi^2-1)\,\ee^{-2 \xi^2},\\
\rho^{(2)}_\infty(\xi)&=\frac{\ee^{-2 \xi^2}}{\sqrt{2\pi^3}}\left(\kappa^2 \frac{ 2\xi^5 - 8 \xi^3 + 3\xi}{18}+\left(\frac{ (\partial_s \kappa)^2 }{9 \kappa^2} - \frac{ \partial_s^2 \kappa }{12 \kappa} \right) \xi\right).
\end{split}\end{equation*}
We get 
\begin{equation}\begin{split}\label{eq_evoutside}
  n_\text{out}&= \sqrt N\int_0^L \int_{-N^{\nu}}^0 \bigg[\frac{1}{\pi}-\rho^{(0)}_\infty(\xi)- \frac{1}{\sqrt N}\rho^{(1)}_\infty(\xi)-\frac{1}{N}\rho^{(2)}_\infty(\xi)\bigg]\left(1+\kappa(\gamma(s))\frac{\xi}{\sqrt N} \right) \dd\xi\,\dd s + o\left(\frac{1}{\sqrt N}\right)
  \\&=
  \sqrt N\int_0^L \int_{0}^{N^{\nu}} \bigg[\rho^{(0)}_\infty(\widetilde{\xi}) 
  -\frac{\kappa(\gamma(s))}{\sqrt N}\left( \widetilde{\xi} \rho^{(0)}_\infty(\widetilde{\xi}) + \frac{\widetilde{\xi}^2-1}{3\sqrt{2\pi^3}} \ee^{-2 \widetilde{\xi}^2}\right)\\
&\qquad\qquad\qquad\qquad\qquad\qquad\qquad\qquad\qquad  +\frac{1}{N}\left(\kappa(\gamma(s))\widetilde{\xi} \rho^{(1)}_\infty(\widetilde{\xi}) +  \rho^{(2)}_\infty(\widetilde{\xi}) \right)
  \bigg] \dd\widetilde{\xi}\,\dd s + o\left(\frac{1}{\sqrt N}\right)
\\&=
   \frac{\sqrt{N}}{(2\pi)^{3/2}} \int_0^L  \bigg[1 
  +0
  -\frac{1}{N}\left(\frac{\kappa^2(\gamma(s))}{12} - \frac{(\partial_s \kappa(\gamma(s)))^2}{18 \kappa^2(\gamma(s))}+\frac{\partial_s^2\kappa(\gamma(s))}{24 \kappa(\gamma(s))} \right) \bigg] \dd s + o\left(\frac{1}{\sqrt N}\right),\\
\end{split}
\end{equation}
where we have chosen $\nu$ such that $0<10\nu<1/2$. In the second equality, we made the substitution $\widetilde{\xi}=-\xi$ and used the identity and symmetries:
\begin{equation*}
  \rho_\infty^{(0)}(\xi) + \rho_\infty^{(0)}(-\xi)= \frac{1}{\pi}, \qquad \rho_\infty^{(1)}(-\xi)=\rho_\infty^{(1)}(\xi), \qquad \rho_\infty^{(2)}(-\xi)=-\rho_\infty^{(2)}(\xi).
\end{equation*}
One notices that the subleading term of order $N^{-1/2}$ in the penultimate formula for $n_\text{out}$ cancels out.  And the ``arclength density of escaping particles'' over $\partial K$ is uniform in the leading order. 

Parameterizing the ellipse by $\phi(\ee^{i\theta})$ with $\theta\in[0,2\pi)$ where the arclength element is given by $\dd s=|\phi'(\ee^{i\theta})|\dd \theta$ and using the explicit expressions for $\kappa$, $\partial_s \kappa$ and $\partial^2_s \kappa$ (see \eqref{eq:kappa0}, \eqref{eq:kappa1} and \eqref{eq:kappa2} respectively), the integral over the arclength in \eqref{eq_evoutside} can be evaluated to give
\begin{align}\nn
n_\text{out}&=\frac{\sqrt{n}}{(2\pi)^{3/2}} \int_0^{2\pi}\left(1+\frac{t^6+ 9 t^4- 9 t^2-1-6t(t^4-1) \cos(2 \theta)}{12n (1 + t^2 - 2 t \cos(2 \theta))^3} \right)\frac{\sqrt{1+ t^2 - 2 t \cos(2 \theta)}}{\sqrt{1-t^2}} \dd\theta +o\left(\frac{1}{\sqrt n}\right) \\
 &=  \frac{\sqrt n }{(2\pi)^{3/2}}\left( \frac{4(1+t)^{1/2} {\rm E}(k)}{(1-t)^{1/2} } -\frac{1}{n}\frac{ (1+t^2) {\rm E}(k) +  2(1-t)^2 {\rm K}(k) }{9 (1-t)^{3/2} (1+t)^{1/2} }  \right) + o\left(\frac{1}{\sqrt n}\right),  
  \label{eq_evoutside2}
\end{align}
with $k=\frac{2\sqrt{t}}{1+t}$. $\rm K(k)$ and $\rm E(k)$ are the complete elliptic integrals of the first and second kind,
\begin{equation*}
{\rm K}(k)= \int_{0}^{\pi/2}\frac{\dd \theta}{\sqrt{1-k^2 \sin^2(\theta)}}, \qquad {\rm E}(k)=\int_{0}^{\pi/2}\sqrt{1-k^2 \sin^2(\theta)}\dd \theta.
\end{equation*}
Note that we can write the leading term also using $L$, the circumference of the ellipse, 
\begin{equation*}
n_\text{out}=  \frac{\sqrt{n} L}{(2\pi)^{3/2} \sqrt{T}}+\mathcal{O}(n^{-1/2}).
\end{equation*}  
This expression is universally true, for any potential that shares the density profile $\rho_\infty^{(0)}$ at the boundary, when we define $L$ to be the length of $\partial K$.

The similar calculation can show that those ``escaping particles'' did not go far.
Below we calculate the expected number of particles outside $K$ in the $N^{\tau-1/2}$-neighborhood of $\partial K$:
\begin{align*}
N &\int_{U_{N,\tau}\setminus K} \rho_n(z)\dd A(z) = \sqrt N\int_0^L \int_{0}^{N^{\nu}} \bigg[\rho^{(0)}_\infty({\xi}) 
  +\frac{\kappa(\gamma(s))}{\sqrt N}\left( {\xi} \rho^{(0)}_\infty({\xi}) + \frac{{\xi}^2-1}{3\sqrt{2\pi^3}} \ee^{-2 {\xi}^2}\right)\\
&\qquad\qquad\qquad\qquad\qquad\qquad\qquad\qquad\qquad\qquad  +\frac{1}{N}\left(\kappa(\gamma(s)){\xi} \rho^{(1)}_\infty({\xi}) +  \rho^{(2)}_\infty({\xi}) \right)
  \bigg] \dd{\xi}\,\dd s + o\left(\frac{1}{\sqrt N}\right).
\end{align*}
Comparing with the second line of \eqref{eq_evoutside}, we notice that the ``arclength density of the leaked particles'' over $\partial K$ is exactly same as that of the escaping particles since the two integrands only differs by a minus sign in front of the term of order $1/\sqrt N$ that turned out to be zero.

Based on the observations in this section one may say that, when $N$ is finite, (the normalized measure of) the particles are shifted from the limiting distribution $\frac{1}{\pi}{\bf 1}_K$ at $N=\infty$, only around the boundary $\partial K$ and only in the {\em normal} directions of $\partial K$.

We can evaluate the number of particles outside the ellipse numerically using Matlab. We use the fact \cite{rider2003limit} that the particles distributed according to \eqref{Eq:gibbs} share the same distribution as the eigenvalues of the random matrix ensemble, given by 
\begin{equation}\label{eq:MntT}
  M_{n,t,T}=\frac{\sqrt{T}}{\sqrt{1-t^2}}\left(\sqrt{1+t} A_n+ i \sqrt{1-t} B_n\right)
\end{equation}
where $A_n$ and $B_n$ are independent copies of an $n\times n$ GUE, i.e. Hermitian matrices where the diagonal entries are i.i.d. ${\cal N}(0,1/(2n))$ and the upper triangular entries are i.i.d. ${\cal N}(0,1/(4n))+i{\cal N}(0,1/(4n))$. Our numerical calculation seems to support the asymptotic formula \eqref{eq_evoutside2}. We can see from Table \ref{table1} that the numbers in the last column tend to converge to zero as $n$ grows (except for a few higher $n$'s where the error bounds are too large to tell).

\begin{table}[bt]
  \centering
\begin{tabular}{c| r r| r r r} 
 \multicolumn{1}{c}{ } & \multicolumn{1}{c}{$n$} & \multicolumn{1}{c}{samples} &\multicolumn{1}{c}{$\overline{n}_\text{out}$}  & \multicolumn{1}{l}{${\mathbb E}_1$} & \multicolumn{1}{c}{$\sqrt{n}\left( {\mathbb E}_1 -\overline{n}_\text{out} \right)$ } \\   \hline
\multirow{6}{1.9cm}{$t=0.5$ \\ \vspace{\fill} $T=1$} & 2 & $2.41\cdot10^{8}$ & $0.66073\pm8\cdot10^{-5}$ & 0.66226 & $(21.7\pm1.2)\cdot10^{-4}$ \\ 
& 4 & $1.13\cdot10^{9}$ & $0.95741\pm5\cdot10^{-5}$  & 0.95822 & $(16.3\pm0.9)\cdot10^{-4}$ \\ 
& 8 & $7.80\cdot10^{8}$ & $1.36998\pm7\cdot10^{-5}$  & 1.37043 & $(12.8\pm1.9)\cdot10^{-4}$ \\ 
& 16 & $2.7\phantom{0}\cdot10^{9}$ & $1.94876\pm4\cdot10^{-5}$ & 1.94890 & $(6.0\pm1.8)\cdot10^{-4}$ \\ 
& 32 & $9.54\cdot10^{8}$ & $2.76381\pm9\cdot10^{-5}$  & 2.76382 & $(0.3\pm5.0)\cdot10^{-4}$ \\
& 64 & $1.65\cdot10^{8}$ & $3.9141\phantom{0}\pm3\cdot10^{-4}$  & 3.91404 & $(-4.1\pm20.2)\cdot10^{-4}$  \\ \hline
\multirow{4}{1.9cm}{$t=0.75$ \\ \vspace{\fill} $T=0.4375$}  & 4 & $2.16\cdot10^{8}$ & $1.3261\phantom{5}\pm1\cdot10^{-4}$ & 1.33977 & $(27.2\pm0.2)\cdot10^{-3}$ \\ 
& 10 & $1.33\cdot10^{8}$ & $2.1535\phantom{5}\pm2\cdot10^{-4}$ & 2.15936 & $(18.4\pm0.6)\cdot10^{-3}$ \\ 
& 32 & $3.69\cdot10^{8}$ & $3.8944\phantom{2}\pm2\cdot10^{-4}$  & 3.89639 & $(11.1\pm0.9)\cdot10^{-3}$ \\ 
& 64 & $2.43\cdot10^{8}$ & $5.5201\phantom{2}\pm3\cdot10^{-4}$  & 5.52114 & $(8.1\pm2.0)\cdot10^{-3}$ \\  
\end{tabular}
\caption{Numerical simulation of random matrices $M_{n,t,T}$. $\overline{n}_\text{out}$ is the number of eigenvalues outside the ellipse averaged over all samples, ${\mathbb E}_1$ the expectation value according to \eqref{eq_evoutside2} including the first subleading correction. In the $\overline{n}_\text{out}$ column and in the last column we show 95\% confidence intervals. }
  \label{table1}
\end{table}

\begin{figure}
\includegraphics[width=\textwidth]{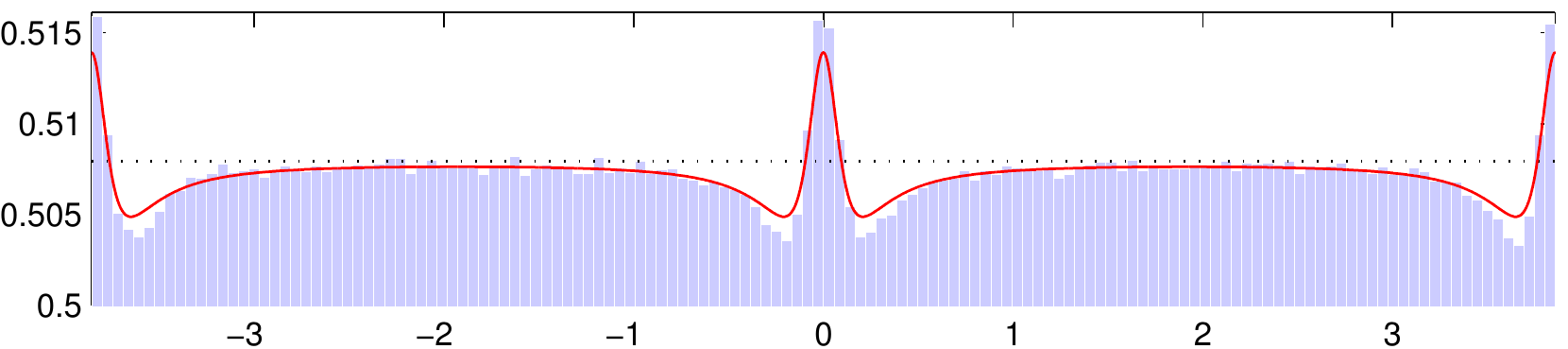}
\caption{Histogram of the distribution of outside eigenvalues regarding arclength where $t=0.5$, $T=1$, $n=64$, $\#$ of samples $=1.25 \cdot 10^8$. The histogram shows the arclength density of the outside eigenvalues. We divide the ellipse by 140 arcs of the same length, and counted the average numbers of ``escaping'' eigenvalues that are projected to each arc.  
The dotted line gives the leading theoretical behavior, while the solid line is the theoretical curve that includes the first subleading correction.  The $0$ corresponds to the point of the maximal curvature on the ellipse.
 \label{fig:evoverarclength}}
\end{figure}

We also looked at the distribution of the eigenvalues outside over the arclength of the ellipse. For this we projected the eigenvalues outside of the ellipse onto the ellipse, in a direction normal to the boundary, i.e. we associated an eigenvalue outside of the ellipse with its nearest point on the ellipse. See Figure \ref{fig:evoverarclength} for the plot of the numerical simulation. 

\appendix

\section{Plots of density: higher corrections}\label{app:plots}
Here we collect a few plots of $\rho_n$ across the boundary that supports Theorem \ref{thm1}. 

The leading behavior is shown in Figure \ref{fig-rho2}.  
In Figure \ref{fig:twoplots} we plot the subleading corrections of $\rho_n$, i.e.
\begin{align*}
\begin{split}
\rho^{(1)}_n(\xi)&:= \sqrt{N}\bigg[\rho_{n}\bigg(z_0+\frac{\xi}{N^{1/2}}{\bf n}\bigg) - \frac{1}{2\pi}\erfc(\sqrt{2} \xi) \bigg],
\\
\rho_n^{(2)}(\xi)&:=N\bigg[\rho_{n}\bigg(z_0+\frac{\xi}{N^{1/2}}{\bf n}\bigg) - \frac{1}{2\pi}\erfc(\sqrt{2} \xi) - \frac{\kappa}{\sqrt N}\frac{1  }{  3\sqrt{2\pi^3} } (\xi^2-1)\,\ee^{-2 \xi^2}\bigg].
\end{split}
\end{align*}
According to Theorem \ref{thm1} we know that
\begin{equation}\label{eq:dashed}
\begin{split}
\lim_{n\to\infty}	\rho^{(1)}_n(\xi) &= \frac{\kappa  }{  3\sqrt{2\pi^3} } (\xi^2-1)\,\ee^{-2 \xi^2},
\\
\lim_{n\to\infty} \rho^{(2)}_n(\xi)&=\frac{\ee^{-2 \xi^2}}{\sqrt{2\pi^3}}\left(\kappa^2 \frac{ 2\xi^5 - 8 \xi^3 + 3\xi}{18}+\left(\frac{ (\partial_s \kappa)^2 }{9 \kappa^2} - \frac{ \partial_s^2 \kappa }{12 \kappa} \right) \xi\right).
\end{split}
\end{equation}
\begin{figure}
\includegraphics[width=8cm]{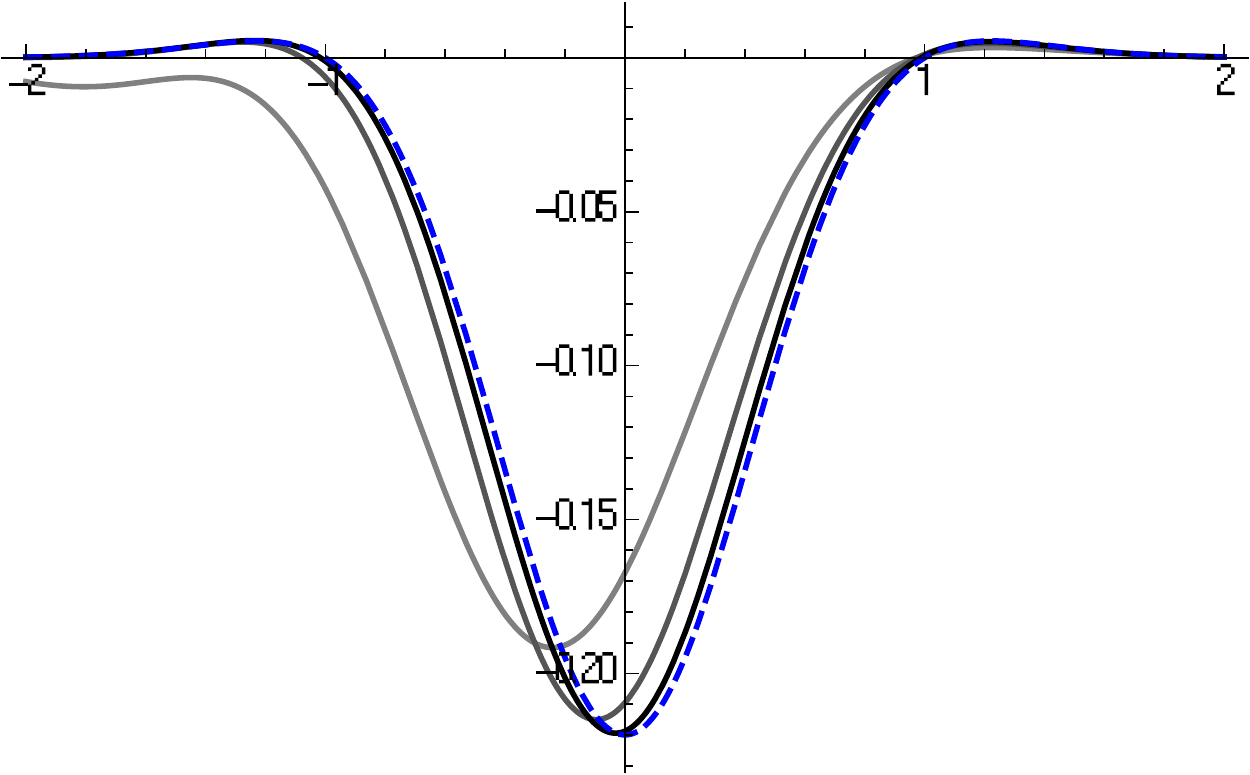}
\qquad
\includegraphics[width=8cm]{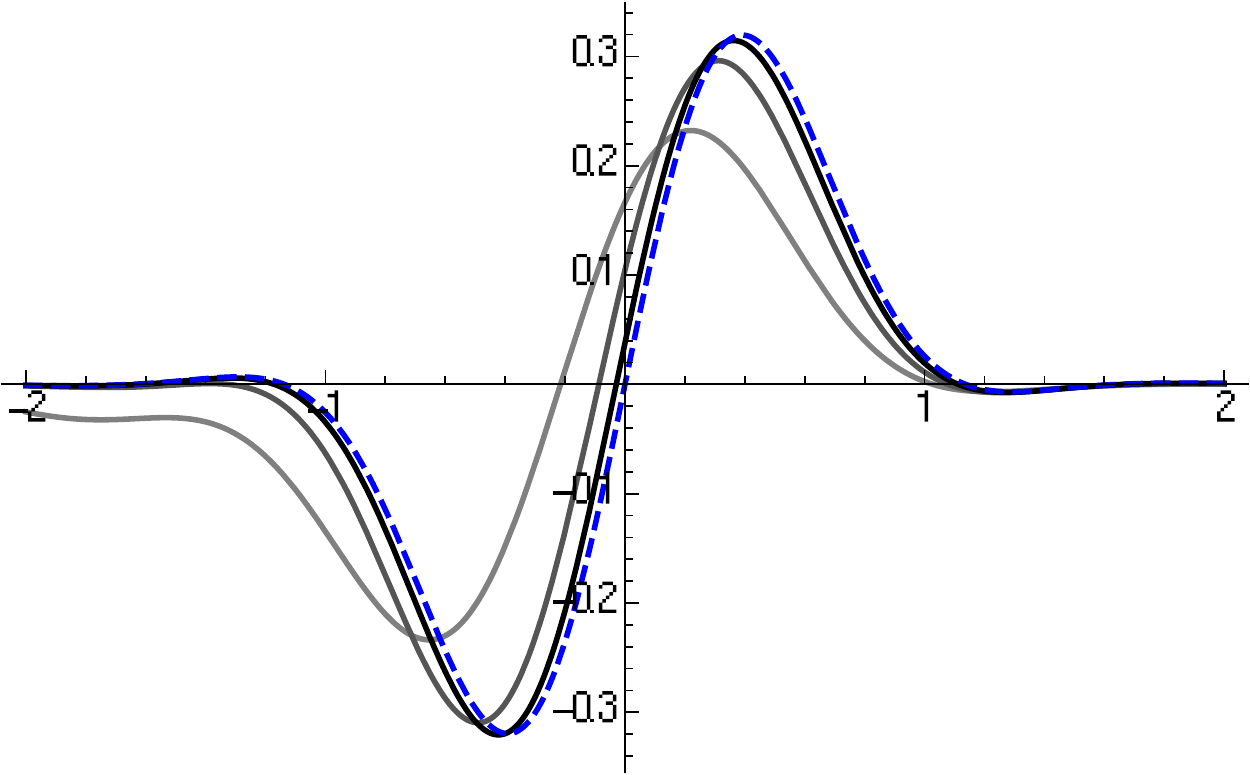}
\caption{$\rho^{(1)}_n$ (left) and $\rho^{(2)}_n$ (right) for $n=10, 100, 1000$ and $\infty$ (dashed lines) \label{fig:twoplots}}
\end{figure}
Both pictures are obtained for $t=0.5$, $T=1$ (i.e. $N=n$) and $z_0=\phi(1)\approx 1.73$.   We also mention that $\kappa(z_0)\approx 5.19$, $\partial_s\kappa(z_0)=0$ and $\partial_s^2\kappa(z_0)\approx -374$. 
The gray, darker gray and black lines correspond to $n=10$, $n=100$ and $n=1000$ respectively.  The dashed lines are the theoretical limits shown in \eqref{eq:dashed}.

\section{WKB approximation}\label{app:WKB}

The Hermite polynomials fulfill the differential equation
\begin{equation*}
H_k''(z)-2z H_k'(z)+2k H_k(z)=0,\quad k=0,1,2,\cdots.
\end{equation*}
By Theorem \ref{prop:Hermite}, the orthonormal polynomial, $p_k$, satisfies 
\begin{equation}\label{eq_ode}
\frac{F_0^2}{2NT} p_k''(z)-2z\, p_k'(z) +2k\, p_k(z)=0,\quad k=0,1,2,\cdots.
\end{equation}
We note that $n=NT$.

In \cite{dkmvz1999} the existence of $1/N$ expansion of the orthogonal polynomials, for a more general case, is shown using Riemann-Hilbert method (see, for instance, Theorem 7.10 in \cite{dkmvz1999}; also see Section 3 in \cite{ercolani2003asymptotics}).  
Using this fact, the $1/N$ expansion is obtained by simply positing the $1/N$ expansion of the orthogonal polynomial and plugging into the second order differential equation \eqref{eq_ode}.

According to \cite{dkmvz1999,ercolani2003asymptotics} we know that for $l\in\NN$ there exists the asymptotic expansion
\begin{equation*}
  p_{n+r}(z)=\left(\frac{N}{2\pi^3}\right)^{1/4}\exp\left(N T  Y_{-1}(z) + Y_0(z) + \frac{1}{N} Y_1(z) + \frac{1}{N^2}Y_2(z) +\cdots +\frac{1}{N^l} Y_l(z)+{\cal O}(N^{-l-1})\right)
\end{equation*}
uniformly in a compact subset of $\CC\setminus[-F_0, F_0]$.  (We put the prefactor $\left( \frac{N}{2\pi^3}\right)^{1/4}$ so that it matches the behavior of $\gamma_{n+r}$ in Lemma \ref{lem:rho}.) Note that the functions $Y_k$'s depend on $r$ and $T$ ($=n/N$).  
Putting the asymptotic expansion of $p_{n+r}$ into the equation \eqref{eq_ode}, we get
\begin{align*}
0&=\frac{F_0^2}{2NT}\bigg( \bigg( NTY_{-1}'(z)+ \sum_{k=0}^{l} \frac{Y_{k}'(z)}{N^k} \bigg)^2 +  NTY_{-1}''(z)+\sum_{k=0}^{l} \frac{Y_{k}''(z)}{N^k}  \bigg)
\\ &\qquad\qquad\qquad -2z \bigg( NTY_{-1}'(z)+\sum_{k=0}^{l} \frac{Y_{k}'(z)}{N^k} \bigg)+2NT+2r+{\cal O}(N^{-l-1}) .
\end{align*}
Expanding in large $N$, one gets the following equations at each order.
\begin{align*}
N^1&:\quad \frac{F_0^2 Y_{-1}'(z)^2}{2} -2z Y_{-1}'(z)+2=0,
\\
N^0&:\quad\frac{F_0^2}{2}\left( 2 Y_{-1}'(z) Y_{0}'(z) +  Y_{-1}''(z) \right)-2z\, Y_{0}'(z)+2 r=0,
\\
N^{-j}&:\quad\frac{F_0^2}{2T}\left(  2TY_{-1}'(z) Y_{j}'(z)+  Y_{j-1}''(z)+ \sum_{k=0}^{j-1} Y_{k}'(z) Y_{j-k-1}'(z) \right)-2z Y_{j}'(z)=0,~~ 1\le j\le l.
\end{align*}
One can notice that the above gives the recursive relations to find $Y_j'$ for all $j\geq -1$. The first equation gives
\begin{align}\label{eq:Y-1} 
Y_{-1}'(z)&=\frac{2z}{F_0^2} \left(1- \sqrt{1- F_0^2/z^2}\right),
\end{align}
where the sign of the square root is chosen by the asymptotic behavior: $Y_{-1}'(z)={\cal O}(1/z)$.  Using the equations at the orders $N^0$ and $N^{-1}$ we get
\begin{equation}
\label{eq:Y01}
\begin{split}
Y'_{0}(z)&=- \frac{z-z\sqrt{1-F_0^2/z^2}\,(1+2 r)}{2(z^2-F_0^2)} ,\\
Y'_{1}(z)&=\frac{4 F_0^2 z^2 \left( \Big(1-\sqrt{1-F_0^2/z^2}\Big) (1+2 r)+r(r-1) \right) +F_0^4 (1-4 r(r+1) ) }{16T z\sqrt{1-F_0^2/z^2} (z^2-F_0^2)^2 } . \end{split}
\end{equation}
Using Lemma \ref{lem:rho},
we get the following boundary behaviors.
\begin{equation*}
\begin{split}
  Y_{-1}(z) &= \log z -\frac{\ell}{2T} + {\cal O}(1/z),
  \\
Y_{0}(z) &=r \log z -\left(r+\tfrac{1}{2}\right)\log \logcap(K) + {\cal O}(1/z),
  \\
  Y_{1}(z) &=-\frac{1+6 r +6r^2}{24 T }+{\cal O}(1/z),
  \end{split}
\end{equation*}
Using these, one can obtain, by integrating \eqref{eq:Y-1} and \eqref{eq:Y01}, the followings.
\begin{align*}
Y_{-1}(z) &=\frac{z^2 (1-\sqrt{1-F_0^2/z^2})^2}{2 F_0^2}-\log\left(\frac{2z(1-\sqrt{1-F_0^2/z^2})}{F_0^2} \right)-\frac{\ell}{2T},\\ 
Y_{0}(z)&=\frac{1}{2}\log\bigg(\frac{1}{2}+\frac{1}{2\sqrt{1-F_0^2/z^2}}\bigg) + r \log \bigg(\frac{z}{2}\Big(1+ \sqrt{1-F_0^2/z^2}\Big)\bigg)-\bigg(r+\frac{1}{2}\bigg)\log \logcap(K),\\
Y_{1}(z)&=\frac{3F_0^2 \left( 2\sqrt{1-F_0^2/z^2}-1+4 r \big(1+\sqrt{1-F_0^2/z^2} + r \big) \right)-2z^2\big(1+6 r (r+1)\big) }{48 T \sqrt{1-F_0^2/z^2}(z^2-F_0^2)}. 
\end{align*}

Comparing with \eqref{eq_psi} this proves Lemma \ref{prop-WKB}.
Note that $Y_j$ depends on $r$ if $j\ge 0$, i.e. $Y_j(z)=Y_j(z;r)$, and we have $g(z)=Y_{-1}(z)+\frac{\ell}{2T}$, $\sqrt{\psi'(z)} \psi(z)^r=\ee^{Y_0(z;r)}$ and $h_r(z)=Y_1(z;r)$.

\section{Proof of three lemmas}\label{app-proof3}

\subsection{Proof of Lemma \ref{prop_psi}}

From the general relations $\kappa=\dd\arg {\bf n}(s)/\dd s$ and 
\begin{equation}\label{eq:bfn}
  {\bf n}(z_0)=\frac{\psi(z_0)}{\psi'(z_0)}|\psi'(z_0)|
\end{equation}
for a general smooth curve $\psi:\partial\DD\to\CC$ (see Section \ref{sec_density} for exaxmple), one finds that 
\begin{equation}\label{kappa-psi}
  \kappa(z_0)=|\psi'(z_0)|{\rm Re}\left(1-\frac{\psi''(z_0)\psi(z_0)}{\psi'(z_0)^2}\right)=|\psi'(z_0)|-{\rm Re}\bigg({\bf n}(z_0)\frac{\psi''(z_0)}{\psi'(z_0)}\bigg),
\end{equation}
for any boundary point $z_0\in\partial K$.

One can also find the following general relation on the boundary (we suppress the arguments):
\begin{equation}\label{kappas}
\partial_s \kappa = \big(i{\bf n}\partial -i\overline{\bf n}\overline\partial\big)  |\psi'|\,{\rm Re}\left(1-\frac{\psi''\psi}{\psi'^2}\right)
=
{\rm Im}\left({\bf n}^2\left(\frac{\psi'''}{\psi'}-\frac{3}{2}\frac{(\psi'')^2}{(\psi')^2}\right)\right).
\end{equation}

For any vectors ${\bf v}_1$ and ${\bf v}_2$ the following identities hold at $z_0\in\partial K$.
\begin{equation}\label{psi-psi} \begin{split}
\partial_{\bf v_1}\frac{|\psi'(z_0)|}{\psi(z_0)}
&= \frac{|\psi'(z_0)|}{\psi(z_0)} \left(2{\rm Re} \left( {\bf v_1} \frac{\psi''(z_0)}{2\psi'(z_0)} \right)- {\bf v_1} \frac{\psi'(z_0)}{\psi(z_0)}  \right), \\
\partial_{\bf v_2}\partial_{\bf v_1}\frac{|\psi'(z_0)|}{\psi(z_0)}
&=\frac{|\psi'(z_0)|}{\psi(z_0)}\Bigg\lgroup \left({\rm Re} \left( {\bf v_1} \frac{\psi''(z_0)}{\psi'(z_0)} \right)- {\bf v_1} \frac{\psi'(z_0)}{\psi(z_0)}  \right) \left({\rm Re} \left( {\bf v_2} \frac{\psi''(z_0)}{\psi'(z_0)} \right)- {\bf v_2} \frac{\psi'(z_0)}{\psi(z_0)}  \right) \\
&\qquad +{\rm Re}\left( {\bf v_1 v_2}\left( \frac{\psi'''(z_0)}{\psi'(z_0)} - \frac{(\psi''(z_0))^2}{(\psi'(z_0))^2}  \right) \right) + {\bf v_1 v_2} \left( \frac{(\psi'(z_0))^2}{(\psi(z_0))^2} - \frac{\psi''(z_0)}{\psi(z_0)}  \right) \Bigg\rgroup.
\end{split}
\end{equation}
In Lemma \ref{prop_psi} one can see that, for any non-negative integers $k$ and $l$,
\begin{align}
\Theta_{k,l}(z_0)&=\frac{\psi(z_0)}{|\psi'(z_0)|}  \partial_{\bf n}^k \partial_{\bf t}^l \frac{|\psi'(z_0)|}{\psi(z_0)}\nn
\end{align}
comes from the Taylor expansion.  And the uniform bound stated in the lemma comes from the real-analyticity of the function $|\psi'|/\psi$ around $\partial K$.
Explicit calculations using \eqref{kappa-psi}, \eqref{kappas} and \eqref{psi-psi} give the following list of identities that hold at the boundary $\partial K$.
\begin{align*}
\Theta_{1,0}&= -\kappa,\\
\Theta_{0,1}&=-|\psi'|\, {\rm Im}\left(  \frac{\psi\psi''}{(\psi')^2} \right)- i |\psi'| ,\\
\Theta_{2,0}&=\kappa^2+{\rm Re}\left( {\bf n}^2\left( \frac{\psi'''}{\psi'} - \frac{(\psi'')^2}{(\psi')^2}  \right) \right) + |\psi'|\kappa - i|\psi'|^2\,{\rm Im}\left(\frac{\psi\psi''}{(\psi')^2}\right),\\
\Theta_{1,1}&= |\psi'|(\kappa + |\psi'|) \, {\rm Im} \left( \frac{\psi\psi''}{(\psi')^2} \right)  -\partial_s\kappa + 2i|\psi'|\kappa,\\
\Theta_{0,2}&=|\psi'|^2  \left( {\rm Im} \left( \frac{\psi\psi''}{(\psi')^2} \right)+ i \right)^2-{\rm Re}\left( {\bf n}^2\left( \frac{\psi'''}{\psi'} - \frac{(\psi'')^2}{(\psi')^2}  \right) \right) - |\psi'|\kappa + i|\psi'|^2\,{\rm Im}\left(\frac{\psi\psi''}{(\psi')^2}\right).
\end{align*}
All the above identities are true at the boundary.  We suppress all the arguments for simplicity.

To prove Lemma \ref{prop_psi} we only need to show the following relations: 

\vspace{0.2cm}
\begin{lemma}\label{lemma_Theta}
Let $z_0\in\partial K$. We have the following identities
\begin{align}\label{eq:IM}
 {\rm Im}\left( \frac{\psi(z_0)\psi''(z_0)}{(\psi'(z_0))^2} \right) &=  -\frac{1}{3}\frac{\partial_s \kappa}{  |\psi'(z_0)| \kappa},\\
 \label{eq:Re}
{\rm Re}\left({\bf n}^2\left(  \frac{\psi'''(z_0)}{\psi'(z_0)}-\frac{(\psi''(z_0))^2}{(\psi'(z_0))^2} \right) \right)&=\kappa^2+\frac{(\partial_s \kappa)^2}{3 \kappa^2}-\frac{\partial_s^2 \kappa}{3 \kappa} -|\psi'(z_0)| \kappa. 
\end{align}
\end{lemma}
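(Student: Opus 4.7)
The plan is to parametrize $\partial K$ by $\gamma(\theta)=\phi(e^{i\theta})$ and to exploit the identity $\psi(\gamma(\theta))=e^{i\theta}$, reducing everything to derivatives of $\gamma$ combined with the standard Frenet relations. The key intermediate object is $F(z):=\psi(z)\psi''(z)/(\psi'(z))^2$. Differentiating $\psi(\gamma(\theta))=e^{i\theta}$ twice and simplifying produces the boundary identity
\begin{equation*}
F(z_0)=1+i\,\gamma''(\theta)/\gamma'(\theta),
\end{equation*}
from which, together with $|\gamma'|=1/|\psi'|$ and $\partial_s\arg{\bf t}=\kappa$ (so $\partial_\theta\arg\gamma'=\kappa/|\psi'|$), one immediately reads off
\begin{equation*}
{\rm Re}\,F=1-\kappa/|\psi'|,\qquad {\rm Im}\,F=\partial_\theta\log|\gamma'|=-\partial_s\log|\psi'|/|\psi'|.
\end{equation*}

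For \eqref{eq:IM}, the remaining input is the ellipse-specific identity $\kappa=T|\psi'|^3$ on $\partial K$. I will verify this by plugging the Joukowsky derivatives $\phi'(u)=\tfrac{F_0}{2\sqrt{t}}(1-t/u^2)$ and $\phi''(u)=F_0\sqrt{t}/u^3$ into the curvature formula $\kappa=\mathrm{Im}(\overline{\gamma'}\gamma'')/|\gamma'|^3$; the Joukowsky cancellations collapse the numerator to the constant $ab=T$. Logarithmic differentiation then gives $\partial_s\log|\psi'|=\tfrac{1}{3}\partial_s\kappa/\kappa$, which substituted into the formula for ${\rm Im}\,F$ is precisely \eqref{eq:IM}.

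For \eqref{eq:Re}, introduce $w:={\bf n}\psi''/\psi'$. The boundary relation $\psi/\psi'={\bf n}/|\psi'|$ gives $F=w/|\psi'|$, so the previous step pins down
\begin{equation*}
w=(|\psi'|-\kappa)-i\,\partial_s\kappa/(3\kappa)\quad\text{on }\partial K.
\end{equation*}
Now differentiate $w$ along $\gamma$: using $\partial_s{\bf n}=i\kappa{\bf n}$ and the holomorphicity of $\psi''/\psi'$ (so that $\partial_s$ acting on it amounts to multiplication by $i{\bf n}$ times the $z$-derivative), a short calculation yields
\begin{equation*}
\partial_s w=i(\kappa w+G),\qquad G:={\bf n}^2\big(\psi'''/\psi'-(\psi'')^2/(\psi')^2\big),
\end{equation*}
hence $G=-i\,\partial_s w-\kappa w$. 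Taking real parts, $-\kappa\,{\rm Re}(w)$ contributes $\kappa^2-|\psi'|\kappa$, while ${\rm Re}(-i\,\partial_s w)=\partial_s{\rm Im}(w)$ contributes $(\partial_s\kappa)^2/(3\kappa^2)-\partial_s^2\kappa/(3\kappa)$; adding the two gives exactly the right-hand side of \eqref{eq:Re}.

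The only place the ellipse geometry enters in a nontrivial way is the identity $\kappa=T|\psi'|^3$ on $\partial K$, without which neither formula would be expressible purely in $\kappa$ and its arclength derivatives. Everything else is organized bookkeeping, and I expect the main pitfall to be sign tracking, particularly in switching between $\theta$- and $s$-derivatives via the Jacobian $|\gamma'|=1/|\psi'|$ and in maintaining consistent Frenet conventions.
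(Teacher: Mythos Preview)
Your argument is correct and takes a genuinely different route from the paper. The paper proves both identities by brute-force computation: it evaluates $\phi'$, $\phi''$, $\phi'''$ at $e^{i\theta}$, inverts via $\psi'(\phi(u))=1/\phi'(u)$ etc., obtains explicit trigonometric expressions for both sides of \eqref{eq:IM} and \eqref{eq:Re}, then separately derives closed formulas for $\kappa$, $\partial_s\kappa$, $\partial_s^2\kappa$, $|\psi'|$ in terms of $\theta$ and $t$, and finally matches the two.

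Your approach is more conceptual. You isolate the ellipse-specific content as the single identity $\kappa=T|\psi'|^3$ on $\partial K$ (equivalently $\kappa=ab/|\gamma'|^3$ for the standard parametrization of an ellipse with semi-axes $a,b$ and $ab=T$). Everything else---the relation $F(z_0)=1+i\gamma''/\gamma'$, the decomposition ${\rm Re}\,F=1-\kappa/|\psi'|$ and ${\rm Im}\,F=-\partial_s\log|\psi'|/|\psi'|$, and the transport equation $\partial_s w=i(\kappa w+G)$---is valid for any real-analytic Jordan curve. This cleanly explains the remark preceding the lemma that \eqref{eq:IM} and \eqref{eq:Re} hold \emph{only} for the ellipse: they fail precisely when $\kappa$ is not a constant multiple of $|\psi'|^3$. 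Your derivation of \eqref{eq:Re} via $G=-i\,\partial_s w-\kappa w$ is especially clean and avoids ever touching $\psi'''$ directly. The paper's approach has the advantage of yielding the explicit formulas \eqref{eq:kappa0}--\eqref{eq:psiprime} as byproducts (used elsewhere in the paper), whereas yours trades those for structural insight.
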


\begin{proof}
Unlike the previous relations in this section, the above two hold {\em only} for elliptical droplets, and the proof can be done using the explicit parametrization of the boundary by
\begin{equation}\label{eq:gamma}
\gamma: [0,2\pi) \to \CC ,\qquad \gamma(\theta)= \phi(\ee^{i\theta})=\sqrt{\frac{T}{1-t^2}}\left( \ee^{i \theta}+t \ee^{-i \theta}\right).
\end{equation} 
It is straightforward to calculate
\begin{align}\label{eq_d2phi}
\phi'(\ee^{i\theta})=\sqrt{\frac{T}{1-t^2}} \left(1-\frac{t}{\ee^{2i\theta}} \right), \qquad \phi''(\ee^{i\theta})=\sqrt{\frac{T}{1-t^2}} \frac{2t}{\ee^{3i\theta}},\qquad \phi'''(\ee^{i \theta})= - \sqrt{\frac{T}{1-t^2}} \frac{6t}{\ee^{4i \theta}}.
\end{align}
The following general relations are derived from $\psi(\phi(u))=u$.
\begin{equation}\label{eq_d3phi}
\psi'(\phi(u))=\frac{1}{\phi'(u)} ,\quad \psi''(\phi(u))=-\frac{\phi''(u)}{\phi'(u)^3},\quad	\psi'''(\phi(u))=-\frac{\phi'''(u)}{\phi'(u)^4}+\frac{3\phi''(u)^2}{\phi'(u)^5}.
\end{equation}
Using \eqref{eq:bfn} and the above two sets \eqref{eq_d2phi}\eqref{eq_d3phi} of equations one gets
\begin{align}\label{eqnnn}
{\bf n}(\gamma(\theta))=\frac{\ee^{i \theta}-t \ee^{-i \theta}}{\sqrt{1+t^2-2t\cos(2\theta)}}.
\end{align}
Using \eqref{eq_d2phi}\eqref{eq_d3phi}\eqref{eqnnn}, one obtains the following expressions, 
\begin{align}\label{eq:curvature-psi}
\frac{\psi(\gamma(\theta))\psi''(\gamma(\theta))}{\psi'(\gamma(\theta))}
&=\frac{2t\left(t-\ee^{-2i \theta} \right)}{|\ee^{i\theta}-t\ee^{-i \theta} |^2},
\\
{\bf n}(\gamma(\theta))^2\left(  \frac{\psi'''(\gamma(\theta))}{\psi'(\gamma(\theta))}-\frac{(\psi''(\gamma(\theta)))^2}{(\psi'(\gamma(\theta)))^2} \right) &=\frac{2t(1-t^2)(3+t\ee^{-2i\theta})(\ee^{-\theta}-t\ee^{i\theta})^2}{T\left| \ee^{\theta}-t\ee^{-i\theta} \right|^6},
\end{align}
that appear in the left-hand sides of \eqref{eq:IM} and of \eqref{eq:Re}.
Taking the imaginary and the real part respectively, one gets the following evaluation at $\gamma(\theta)\in\partial K$.
\begin{align*}
{\rm Im}\left(\frac{\psi\,\psi''}{\psi'}\right)\bigg|_{\gamma(\theta)}&=\frac{2t\sin(2\theta)}{1+t^2-2t \cos(2\theta)},
\\
{\rm Re}\left({\bf n}^2\left(  \frac{\psi'''}{\psi'}-\frac{\psi''^2}{(\psi'^2} \right) \right)\bigg|_{\gamma(\theta)}&= \frac{2t(1-t^2)(t^3 -6t+ t^2 \cos(2\theta)+ 3\cos(2\theta)+t\cos(4\theta) )}{T\left| 1+t^2-2t\cos(2\theta) \right|^3}.
\end{align*}
The proof will complete if we can show the following relations:
\begin{align}\label{eq:kappa0}
\kappa(\gamma(\theta))&= \frac{\left(1-t^2 \right)^{3/2}}{\sqrt{T} \left(1+t^2-2t\cos(2\theta) \right)^{3/2}}, \\
\label{eq:kappa1}
\partial_s \kappa(\gamma(\theta))&= -\frac{6(1-t^2)^2 t \sin(2 \theta)}{T \left(1+t^2-2t\cos(2\theta) \right)^{3} }, \\
\label{eq:kappa2}
\partial_s^2 \kappa(\gamma(\theta))&= -\frac{12(1-t^2)^{5/2} t \left( 2t(\cos(4\theta)-2)+(1+t^2)\cos(2\theta) \right)  }{T^{3/2}  \left(1+t^2-2t\cos(2\theta) \right)^{9/2}}, \\\label{eq:psiprime}
|\psi'(\gamma(\theta))|&= \sqrt{\frac{1-t^2}{T}}\frac{1}{\sqrt{1+t^2-2t\cos(2\theta)}},
\end{align}
that appear in the right-hand sides of \eqref{eq:IM} and \eqref{eq:Re}.  We remind that the function $\gamma$ is defined at \eqref{eq:gamma}.

The first relation \eqref{eq:kappa0} is obtained using \eqref{eq:curvature-psi} and \eqref{kappa-psi}.  The second \eqref{eq:kappa1} and the third \eqref{eq:kappa2} are obtained by the direct calculation using
\begin{equation*}
	\partial_s\kappa(\gamma(\theta))=\frac{1}{|\phi'(\ee^{i\theta})|}\frac{\dd\kappa(\gamma(\theta))}{\dd\theta},
	\qquad
		\partial_s^2\kappa(\gamma(\theta))=\frac{1}{|\phi'(\ee^{i\theta})|}\frac{\dd(\partial_s\kappa(\gamma(\theta)))}{\dd\theta}.
\end{equation*}
The equation \eqref{eq:psiprime} is simple to obtain using \eqref{eq_d2phi}\eqref{eq_d3phi}.
\end{proof}

\subsection{Proof of Lemma \ref{lemma_hn_relations}}

Using \eqref{eq:hr} one obtains that
\begin{equation*}
\begin{split}
h_0(\gamma(\theta))+h_{-1}(\gamma(\theta))&=-\frac{ 3F_0^2+2z^2 }{24T  \sqrt{1-F_0^2/z^2} \, (z^2-F_0^2)}\bigg|_{z=\gamma(\theta)}=-\frac{\left( \ee^{i\theta}+t \ee^{-i\theta} \right)  \left( \ee^{2i\theta}+t^2 \ee^{-2i\theta}+8t \right)}{12T \left( \ee^{i\theta}-t \ee^{-i\theta} \right)^3}.
\end{split}
\end{equation*}
Taking the real part, this gives
\begin{align}\label{eq_id_2part}
{\rm Re}\left( h_0(\gamma(\theta))+h_{-1}(\gamma(\theta)) \right)&=-\frac{\left( 1-t^2 \right)  \left( 1 - 26 t^2 + t^4 + 6 t \cos(2 \theta) + 6 t^3 \cos(2 \theta) + 12 t^2 \cos(4 \theta) \right)}{12T \left( 1+t^2-2t\cos(2\theta) \right)^3},
\end{align}
which is the left-hand side of the first equation to prove.
Using \eqref{eq:kappa0}\eqref{eq:kappa1} one can confirm the first equation of Lemma \ref{lemma_hn_relations}.

Using \eqref{eq:hr} one obtains that
\begin{align*}
h_{-1}(\gamma(\theta))-h_0(\gamma(\theta))=-\frac{F_0^2}{4T(z^2-F_0^2)}\bigg|_{z=\gamma(\theta)}=-\frac{t\left(\ee^{-2i\theta}+t^2\ee^{2i\theta}-2t\right)}{T\left| \ee^{i\theta}-t\ee^{-i\theta} \right|^4}.
\end{align*}
Using \eqref{eq:kappa0}\eqref{eq:kappa1}\eqref{eq:psiprime} one can confirm the second equation of Lemma \ref{lemma_hn_relations}.

\subsection{Proof of Lemma \ref{lemma_Upsilon}}

Using \eqref{eqnnn}\eqref{eq:kappa0}\eqref{eq:kappa1}\eqref{eq:psiprime}
 we get
\begin{align*}
\frac{\sqrt{T}(\cc{{\bf n}} t-{\bf n})}{\sqrt{1-t^2}} \frac{|\psi'(z_0)|}{\psi(z_0)}&=-\frac{1+t^2-2t \ee^{-2i\theta}}{1+t^2-2t\cos(2\theta)} =-1 + \frac{i \partial_s \kappa}{3|\psi'(z_0)| \kappa},  \\
\frac{\sqrt{T}(\cc{{\bf n}} t+{\bf n})}{\sqrt{1-t^2}} \frac{|\psi'(z_0)|}{\psi(z_0)}&= \frac{1-t^2}{1+t^2-2t\cos(2\theta)} =\frac{\kappa}{|\psi'(z_0)|}.
\end{align*}

\section*{Acknowledgements}
R. Riser acknowledges the support by FWO research grant G.0934.013, and Belgian Interuniversity Attraction Pole P7/18.  This work was supported, in part, by the University of South Florida Research \& Innovation Internal Awards Program under Grant No. 0087253.
The project began at ``Conference on Integrable Systems, Random Matrix Theory, and Combinatorics'' that held at the University of Arizona in 2013.  We thank Robert Buckingham, Maurice Duits, Arno Kuijlaars, Kenneth McLaughlin and Thorsten Neuschel for useful discussions.

\bibliography{bib}
\bibliographystyle{plain}

\end{document}